\keywords{spatial query, conjunctive query, range counting, nearest neighbor, join query}
\author{Jane {Open Access}}{Dummy University Computing Laboratory, [optional: Address], Country \and My second affiliation, Country \and \url{http://www.myhomepage.edu} }{johnqpublic@dummyuni.org}{https://orcid.org/0000-0002-1825-0097}{(Optional) author-specific funding acknowledgements}%TODO mandatory, please use full name; only 1 author per \author macro; first two parameters are mandatory, other parameters can be empty. Please provide at least the name of the affiliation and the country. The full address is optional. Use additional curly braces to indicate the correct name splitting when the last name consists of multiple name parts.
\author{Joan R. Public\footnote{Optional footnote, e.g. to mark corresponding author}}{Department of Informatics, Dummy College, [optional: Address], Country}{joanrpublic@dummycollege.org}{[orcid]}{[funding]}
\authorrunning{J. Open Access and J.\,R. Public} %TODO mandatory. First: Use abbreviated first/middle names. Second (only in severe cases): Use first author plus 'et al.'
\keywords{Dummy keyword} %TODO mandatory; please add comma-separated list of keywords
\DeclareMathOperator*{\argmin}{arg\,min}
\newcommand{\Q}{\mathcal{Q}}
\newcommand{\pone}{\textbf{RCQ}}
\newcommand{\E}{\mathcal{E}}
\newcommand{\V}{\mathcal{V}}
\newcommand{\R}{\mathcal{R}}
\newcommand{\T}{\mathcal{T}}
\newcommand{\U}{\mathcal{U}}
\newcommand{\eps}{\varepsilon}
\renewcommand{\epsilon}{\varepsilon}
\renewcommand{\Re}{\mathbb{R}}
\renewcommand{\paragraph}[1]{\smallskip\noindent{\bf {#1. }}}
\newcommand{\allattr}{{\mathbf A}}
\newcommand{\allrel}{{\mathbf R}}
\newcommand{\head}{{\mathbf{y}}}
\renewcommand{\head}{{\mathsf{head}}}
\newcommand{\dom}{{\texttt{\upshape dom}}}
\newcommand{\adom}{{\texttt{\upshape adom}}}
\newcommand{\RCQ}{{\texttt{RCQ}}}
\newcommand{\prob}{\RCQ}
\newcommand{\nnprob}{\texttt{ANNQ}}
\newcommand{\sprob}{{{\texttt{RSQ}}}}
\newcommand{\nn}{\texttt{NN}}
\newcommand{\y}{\mathbf{y}}
\newcommand{\I}{\mathbf{I}}
\newcommand{\dist}{\phi}
\renewcommand{\O}{\tilde{O}}
\newcommand{\rect}{\R}
\newcommand{\algdec}{\textbf{NewD}}
\newcommand{\algheavy}{\textbf{NewH}}
\newcommand{\algnjoin}{\textbf{RangeS}}
\newcommand{\algranget}{\textbf{RangeT}}
\newcommand{\alggist}{\textbf{RangeG}}
\newcommand{\datasyn}{\texttt{synthetic}}
\newcommand{\datataxi}{\texttt{taxi}}
\newcommand{\dataflights}{\texttt{flights}}
\newcommand{\datadblp}{\texttt{dblp}}
\newcommand{\block}{C}
\newcommand{\polylog}{{\textsf{polylog}}}
\renewcommand{\star}{\Q_{\mathsf{star}}^k}
\renewcommand{\path}{\Q_{\mathsf{path}}^k}
\newcommand{\Gpath}{\Q^{\mathsf{path}}}
\newcommand{\primitive}{\mathcal{D}}
\newcommand{\rangetree}{\mathcal{T}}
\newcommand{\hhtw}{\textsf{hhtw}}
\newcommand{\proj}{\bar{\pi}}
\newcommand{\fullversion}[1]{}
\setlist{leftmargin=*}
\begin{document}

\title{Space-Time Tradeoffs for Spatial Conjunctive Queries} 

\author{Aryan Esmailpour}
\affiliation{%
  \institution{Department of Computer Science, University of Illinois Chicago}
  \city{Chicago}
  \country{USA}}
\email{aesmai2@uic.edu}

\author{Xiao Hu}
\affiliation{%
  \institution{Cheriton School of Computer Science, University of Waterloo}
  \city{Waterloo}
  \country{Canada}}
\email{xiaohu@uwaterloo.ca}

\author{Stavros Sintos}
\affiliation{%
  \institution{Department of Computer Science, University of Illinois Chicago}
  \city{Chicago}
  \country{USA}}
\email{stavros@uic.edu}
\begin{abstract}
Given a conjunctive query and a database instance, we aim to develop an index that can efficiently answer spatial queries on the results of a conjunctive query. We are interested in some commonly used spatial queries, such as range emptiness, range count, and nearest neighbor queries. These queries have essential applications in data analytics, such as filtering relational data based on attribute ranges and temporal graph analysis for counting graph structures like stars, paths, and cliques. Furthermore, this line of research can accelerate relational algorithms that incorporate spatial queries in their workflow, such as relational clustering. Known approaches either have to spend $\O(N)$ query time or use space as large as the number of query results, which are inefficient or unrealistic to employ in practice. Hence, we aim to construct an index that answers spatial conjunctive queries in both time- and space-efficient ways.  

In this paper, we establish lower bounds on the tradeoff between answering time and space usage. For $k$-star (resp. $k$-path) queries, we show that any index for range emptiness, range counting or nearest neighbor queries with $T$ answering time requires $\Omega\left(N+\frac{N^k}{T^k}\right)$ (resp. $\Omega\left(N+\frac{N^2}{T^{2/(k-1)}}\right)$) space.
Then, we construct optimal indexes for answering range emptiness and range counting problems over $k$-star and $k$-path queries. Extending this result, we build an index for hierarchical queries. By resorting to the generalized hypertree decomposition, we can extend our index to arbitrary conjunctive queries for supporting spatial conjunctive queries.
Finally, we show how our new indexes can be used to improve the running time of known algorithms in the relational setting.
%Finally, we implement our indexes and show that they outperform known baselines on synthetic and real-world datasets. 
\end{abstract}
\maketitle

%TODO mandatory: add short abstract of the document
%\vspace{-3.5em}
\section{Introduction}
%\vspace{-1em}
In the digital age, tera-bytes of data are generated every second.
A large amount of raw data would be useless if we could not analyze them to extract valuable knowledge. One standard way to extract knowledge is by asking data queries on the input data.
A data analyst might ask several queries to an index built on the entire or a subset of the input data set. %With data sets becoming increasingly large and complex, queries are also becoming complex. Therefore, new challenging problems have emerged in query processing and data management.
In relational databases, data is gathered and stored across various tables, where each row corresponds to a tuple and each column to an attribute. Two tuples stored in different tables can be joined if their shared attributes are equal. This setting is quite common in practical database management systems (DBMS) since more than 70\% of DBMS are relational and more than 65\% of the data sets in learning tasks are relational data~\cite{kaggle, link1}.

{\em Spatial conjunctive queries} (SCQ) are spatial queries defined on the results of a {\em conjunctive query} (CQ) over relational data. The most common spatial queries include range emptiness, range counting, and nearest neighbor queries. Due to the importance and ubiquitousness of relational data and spatial queries, SCQs have significant applications in several domains, such as geographic information systems, social networks, and temporal graph analysis.
%\vspace{-1em}
\begin{example}
\label{ex:1}
    A crypto exchange startup has a table $\mathsf{SellOffer}(\mathsf{sellerID}, \mathsf{CSS}, \mathsf{currency})$ for all selling offers, containing the seller's id, seller's credit score, and the currency the seller wants to sell (bitcoin, Ethereum, etc.).
    It also contains
     a table for all buyers $\mathsf{Buyer}(\mathsf{buyerID}, \mathsf{CSB}, \mathsf{currency})$ that contains the buyer's id, buyer's credit score, and the currency the buyer will use to pay.
    Notice that each seller can make multiple offers, and each buyer might be interested in multiple offers.
    %The startup wants to compute the number of pairs ($\mathsf{CSS}, \mathsf{CSB}$), such that i) the credit score of a buyer is paired with the credit score of a seller if the buyer and the seller are interested in the same currency, ii) buyer's credit score is at least $600$, and iii) seller's credit score is at least $800$. Equivalently, 
    The startup wants to support a Nearest Neighbor classifier on the pairs $(\mathsf{CSS}, \mathsf{CSB})$, where the credit score of a buyer is paired with the credit score of a seller if the buyer and the seller are interested in the same currency.
    %,to identify scammers and malicious activities in their system.
    The startup aims to construct an index for nearest neighbor queries such that, given a query point $(x,y)$,
    % The goal is to count the number of tuples in
    %$$\pi_{\mathsf{CSS}, \mathsf{CSB}}\left(\sigma_{r_l\leq \mathsf{CSS}\leq x_r}(\mathsf{SellOffer})\Join \sigma_{y_l\leq \mathsf{CSB}\leq y_r}(\mathsf{Buyer})\right),$$
    the goal is to find the nearest neighbor in $\pi_{\mathsf{CSS}, \mathsf{CSB}}(\mathsf{SellOffer}\Join \mathsf{Buyer})$),
     where $\pi$ is the projection operation, and $\Join$ is the join operation.  
\end{example}
%\vspace{-1em}
\begin{example}
\label{ex:2}
Let $G(V,E)$ be a graph representing a social network with $|V|$ users and $N=|E|$ edges representing the friendships between the users. Every user $u$ is associated with a social score, indicating their performance on social media~\cite{linkHookle}.
    A team of researchers aims to count the number of $3$-star user-patterns where the center node has high score and the rest of nodes have low score (nodes $u,v,w,z$ form a $3$-star with center $u$, if $v,w,z$ are all connected to $u$) to understand how the social network influences user interactions.
    %and leverage this knowledge to promote more interactions.
More formally, let $R(\mathsf{userA},\mathsf{userB})$ be a relation that has a tuple for each edge $(u,v)\in E$. Furthermore, let $S(\mathsf{userA}, \mathsf{score})$ be a relation that has a tuple for each node of the graph along with their social score.
The researchers aim to construct an index such that given a query rectangle $[x_l,x_r]\times[y_l,y_r]\times[z_l,z_r]\times[w_l,w_r]$ (a rectangle contains one interval for every attribute) the goal is to count the number of tuples in
$R(A_1,B)\Join R(A_2,B)\Join R(A_3,B)\Join \sigma_{x_l\leq C_1\leq x_r}(S(A_1,C_1))\Join \sigma_{y_l\leq C_2\leq y_r}(S(A_2,C_2))\Join \sigma_{z_l\leq C_3\leq z_r}(S(A_3,C_3))\Join \sigma_{w_l\leq C_4\leq w_r}(S_4(B,C_4))$,
where $\sigma$ is the selection operation.
%, $R_1,R_2, R_3$ are copies of $R$, and $S_1,S_2,S_3,S_4$ are copies of $S$.    
\end{example}
\vspace{-0.6em}
Interestingly, in addition to all their real applications, SCQs play a key role in optimization problems over relational data.
For example, Agarwal et al.~\cite{agarwal2025computing}, designed efficient algorithms to find a diverse set on the results of a join (or conjunctive) query in relational data. In their algorithms, they repeatedly use an oracle to count the number of results of a conjunctive query that lie in a query rectangle and an oracle to compute nearest neighbor queries on the results of a conjunctive query.
Furthermore, Esmailpour and Sintos~\cite{esmailpour2025improved} designed efficient algorithms for $k$-means and $k$-median clustering on the results of a join query in relational data, using similar spatial oracles.
In both papers, the authors constructed indexes with $O(N)$ query time to model the required spatial oracles.
%The authors used $O(N)$-time oracles in both papers.
%Having sublinear spatial oracles on relational data could possibly improve their algorithms for some queries.

There are two main ways to handle SCQs. For range emptiness or range counting queries, given a query hyper-rectangle, one approach~\cite{agarwal2025computing, esmailpour2025improved} is to filter out tuples from the input databases that do not satisfy the linear constraints and then run a standard join algorithm on the remaining tuples. While the space used is only $O(N)$, the query time of this approach is $\Omega(N)$ (such as Yannakakis algorithm~\cite{yannakakis1981algorithms}) to count the number of join results of an acyclic query).
Similarly, for nearest neighbor queries, the indexes constructed in~\cite{deep2022ranked, deep2021ranked, tziavelis2020optimal} can be used as described in~\cite{agarwal2025computing} to compute the nearest neighbor of a query point $q$ in $O(N)$ time for join (or conjunctive) queries using $O(N)$ space.
The other approach is first to materialize all join results and then construct a geometric index (for example, a range tree~\cite{de1997computational} for range counting or quadtree~\cite{finkel1974quad} for approximate nearest neighbor queries) on the join results. While the query time on these indexes is only $\O(1)$ (we use $\tilde{O}$ notation to hide $\polylog(n)$ factors) the space usage is high, which heavily depends on the number of join results. For example, the space needed to materialize all results in Example~\ref{ex:1} is $\Omega(N^2)$, and the space needed in Example~\ref{ex:2} is $\Omega(N^3)$. Overall, no known index exists for SCQs that use less space than materializing all CQ results having sublinear query time with respect to the size of the input database.

%In this paper, we propose efficient algorithms for spatial conjunctive queries, achieving a better tradeoff between space usage and query time. We are particularly interested in the {\em rectangle counting query} (\prob), which intuitively counts the number of CQ results falling into a ``rectangle'' defined by range predicates over attributes, and {\em approximate nearest neighbor query} (\nnprob), which finds a CQ result with the approximately smallest distance to a query point. Our main contributions are summarized below (see Table~\ref{tab:summary}).
%\vspace{-1em}
\subsection{Conjunctive Queries}
\label{subsec:conj}
%\vspace{-0.7em}
Let $\allrel$ be a database schema that contains $m$ relations $R_1, R_2, \cdots, R_m$. Let $\allattr$ be the set of all attributes in $\allrel$. Each relation $R_i$ is defined on a subset of attributes $\allattr_i \subseteq \allattr$. We use $A, B, C, A_1, A_2, A_3, \cdots$ etc. to denote the attributes in $\allattr$ and $a, b, c, \cdots$ etc. to denote their values. Let $\dom(A) = \Re$ be the domain of attribute $A \in \allattr$. The domain of a set of attributes $X \subseteq \allattr$ is defined as $\dom(X) = \Re^{|X|}$.
%We are given a database schema $\allrel$ over a set of $d$ attributes $\allattr$. The database schema $\allrel$ contains $m$ relations $R_1, \ldots, R_m$. Let $\allattr_j\subseteq \allattr$ be the set of attributes associated with relation $R_j\in \allrel$. For an attribute $A\in \allattr$, let $\dom(A)$ be the domain of attribute $A$. We assume that $\dom(A)=\Re$ for every $A\in \allattr$. 
Given the database schema $\allrel$, let $\I$ be a given database of $\allrel$, and let the corresponding relations of $R_1, \cdots, R_m$ be $R_1^{\I}, \cdots$, $R_m^{\I}$, where $R_i^\I$ is a collection of tuples defined on $\dom(\allattr_i)$. The {\em input size} of database $\I$ is denoted as $N = \sum_{i \in [m]}|R^\I_i|$.
For a database $\I$, the {\em active domain} $\adom^{\I}(A)$ of an attribute $A \in \allattr$ is defined as the set of values from $\dom(A)$ that appear in at least one tuple of $\I$.
%When the context is clear, we drop the superscript $D$ in $\adom^D(A)$.
We drop the superscripts when $\I$ is clear from the context. 
If an attribute $A\in\allattr$ belongs to more than one relation, it is called \emph{join attribute}. %i.e., there exists $i\neq j\in [m]$ such that $A\in\allattr_{i}\cap \allattr_{j}$.
%
%Let $\I$ be a database instance over the database schema $\allrel$. For simplicity, we assume that each relation $R_j$ contains $N$ tuples in $\I$. We slightly abuse the notation and we use $R_j$ to denote both the relation and the set of tuples from $\I$ stored in the relation.
%For a subset of attributes $\mathbb{B} \subseteq \allattr$ and a set $\I'\subset\Re^{|\mathbf{A}|}$ of tuples, let $\pi_{\mathbb{B}}(\I')=\{\pi_{\mathbb{B}}(t)\mid t\in \I'\}$ be the projection of tuples in $\I'$ onto the attributes $\mathbb{B}$. 
%We consider the class of conjunctive queries without self-joins:
%\begin{align*}
%    \Q(\head) :-
%    R_1(\allattr_1),R_2(\allattr_2),\dots,R_m(\allattr_m),
%\end{align*}
%where $\head \subseteq \allattr$ is a set of output attributes (a.k.a. free attributes) and $\allattr - \head$ is the set of non-output attributes (a.k.a. existential attributes).
We consider the class of \emph{conjunctive queries}:
$\Q(\y): -R_1(\mathbf{A}_1), R_2(\mathbf{A}_2), \cdots, R_m(\mathbf{A}_m),$
where $\y \subseteq \allattr$ is the set of {\em output attributes} (a.k.a. {\em free attributes}) and $\mathbf{A} - \y$ is the set of {\em non-output attributes} (a.k.a. {\em existential attributes}). A CQ is {\em full} if $\mathbf{A} = \y$ (a full CQ is also called a \emph{join query}), indicating the natural join among the given relations; otherwise, it is {\em non-full}. %A CQ is called {\em self-join-free} if each $R_i$ in $Q$ is distinct.
In this work, we focus on \emph{data complexity}: the query size $m$ and $|\allattr|$ are constants, and the complexity of our algorithms is measured by the input size of the database, $N$.
We can always represent a CQ $\Q$ as a hypergraph $G_{\Q}=(\V,\E)$, where every attribute $A\in\allattr$ is a vertex in $\V$, and every relation $R_j\in\allrel$ is a hyper-edge containing all vertices corresponding to attributes $\allattr_j$.
For an attribute $A\in \allattr$, let $E_A=\{R\in\allrel\mid A\in \allattr\}$.

Extending the notations above, we also use $\head(\Q)$ to denote the output attributes of $\Q$. 
For a subset of attributes $X\subseteq \allattr$ and a set of tuples $Y$, let $\pi_{X}(Y)$ be the set containing the projection of the tuples in $Y$ onto the attributes $X$. Notice that two tuples in $Y$ might have the same projection on $X$; however, $\pi_{X}(Y)$ is defined as a set, so the projected tuple is stored once. We also define the multi-set $\proj_{X}(Y)$ so that if for two different tuples $t_1, t_2\in Y$ it holds that $\pi_{X}(t_1)=\pi_{X}(t_2)$, then the tuple $\pi_{X}(t_1)$ exists more than once in $\proj_{X}(Y)$.
When a CQ $\Q$ is evaluated on database $\I$, its query result denoted as $\Q(\I)$ is the projection of the natural join result of $R_1(\mathbf{A}_1)\Join R_2(\mathbf{A}_2) \Join \cdots \Join R_m(\mathbf{A}_m)$ onto $\head(\Q)$. We study our problem under \emph{bag semantics}, so the result of a conjunctive query might contain duplicates.
More formally, the result of $\Q$ over database $\I$ under bag semantics, is defined as $\Q(\I)=\proj_{\head(\Q)}(\Join_{i\in[m]}R_i(\allattr_i))$.

We notice that bag semantics is the default option in SQL. This choice aligns with the needs of practical database management, where queries often involve aggregations (e.g., COUNT, SUM), and duplicate-preserving behavior is crucial for correct computations. Unlike set semantics, which eliminates duplicates by default, bag semantics enables more efficient query evaluation and aligns with real-world data retrieval scenarios, such as financial transactions, sensor data collection, and reporting tasks. Furthermore, preserving duplicates allows for better optimization strategies in query execution plans, as many relational database engines leverage multi-set operations to improve performance.

We introduce several important classes of CQs that will be commonly used throughout the paper. The $k$-star query is defined as $\star(\mathcal{A}_1,\ldots, \mathcal{A}_k):-R_1(\mathcal{A}_1,\mathcal{B}), R_2(\mathcal{A}_2,\mathcal{B}),\ldots, R_k(\mathcal{A}_k,\mathcal{B}),$
where $\mathcal{A}_i, \mathcal{B}\subseteq \allattr$, $\mathcal{A}_i\cap \mathcal{A}_j=\emptyset$, for $i, j\in[k]$ with $i\neq j$, and $\mathcal{A}_i\cap \mathcal{B}=\emptyset$, for $i\in [k]$. The $k$-path query is defined as $\path(\mathcal{A}_1, \mathcal{A}_2):-R_1(\mathcal{A}_1,\mathcal{B}_1), R_2(\mathcal{B}_1,\mathcal{B}_2),\ldots, R_k(\mathcal{B}_{k-1},\mathcal{A}_2),$
where $\mathcal{A}_1, \mathcal{A}_2, \mathcal{B}_j\subseteq \allattr$, for $j\in[k-1]$, $\mathcal{A}_1\cap \mathcal{A}_2=\emptyset$, and $\mathcal{B}_i\cap \mathcal{B}_j=\emptyset$, for $i,j\in[k-1]$ with $i\neq j$.
We note that for these queries, none of their join attributes belong to the head of $\star$ or $\path$ query. When some join attributes are in the head of $\star$ (resp. $\path$) we call them the generalized $k$-star query (resp. generalized $k$-path query).
A further generalization of star queries is the class of hierarchical queries~\cite{dalvi2007efficient}: A CQ $\Q$ over the set of relations $\allrel$ and set of attributes $\allattr$ is \textit{hierarchical}, if for any pair of attributes $A,B \in \allattr$, either $E_A \subseteq E_B$ or $E_B \subseteq E_A$ or $E_A \cap E_B= \emptyset$.

%\vspace{-1em}
\subsection{Problem definition}
%We focus on the following two spatial queries over conjunctive query results. 
For a CQ $\Q$, a hyper-rectangle $\rect$ in $\Re^{|\head(\Q)|}$, is defined as $\rect=\delta_1\times\ldots\times\delta_{|\head(\Q)|}$, where $\delta_j=[\delta_j^-, \delta_j^+]$ is an interval denoting the query range predicate on the $j$-th head attribute. We start by defining the $\prob$ problem.
\begin{definition}[Rectangle Counting Query Problem] For a CQ $\Q$ and a database $\I$,  the goal is to construct an index, such that given any hyper-rectangle $\rect=\delta_1\times\ldots\times \delta_{|\head(\Q)|}$, it returns the number of 
query results from $\Q(\I)$ falling into $\rect$, i.e., $|\Q(\I)\cap \rect|=|\sigma_{\delta_j^-\leq A_j\leq \delta_j^+, A_j\in \head(\Q)}(\Q(\I))|$.
\end{definition}
\begin{wrapfigure}{r}{0.4\textwidth}
%\vspace{-2.6em}
    \centering
    \includegraphics[scale=0.35]{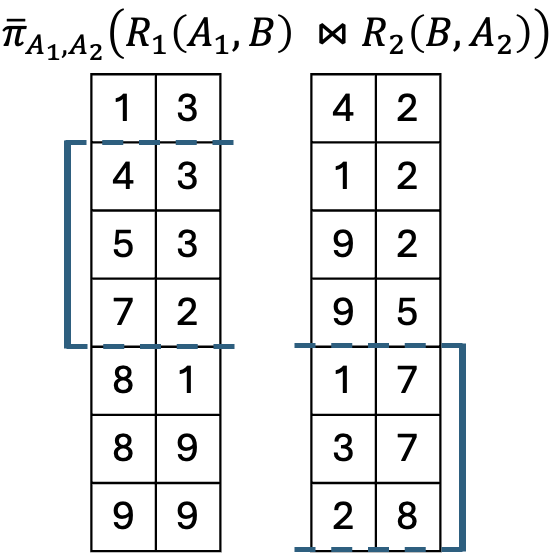}
    \vspace{-0.7em}
    \caption{For query rectangle $\rect=[4,7]\times [7,8]$, $\prob$ returns $3$ because $\bar{\pi}_{A_1,A_2}(\sigma_{4\leq A_1\leq 7}(R_1)\Join \sigma_{7\leq A_2\leq 8}(R_2))=\{(4,7), (5,7), (7,8)\}$. Given the query point $q=(7,2)$ and $\eps=1$, the $\nnprob$ (might) return the query result $p=(9,2)$ because $\dist(q,p)=2\leq (1+1)\cdot 1$, since     the nearest neighbor of $q$ is the tuple $(8,2)$, within distance $1$.}
    \label{fig:probs}
    \vspace{-2.3em}
\end{wrapfigure}
See an example in Figure~\ref{fig:probs}. We use 
%$\prob(\Q,\I)$ to denote a rectangle counting query with input $\Q$ and $\I$, and 
$\prob(\Q,\I,\rect)$ to denote the answer to rectangle query $\rect$ over $\Q(\I)$. The range emptiness query can be modeled as an instance of the $\prob$ problem, by checking whether $\prob(\Q,\I,\rect) > 0$. 
%See an example in Figure~\ref{fig:probs}.

%\paragraph{$\nnprob$ problem}

Let $\dist: \Re^{|\head(\Q)|} \times \Re^{|\head(\Q)|} \to \Re$ be a distance function. In this work, we focus on the Euclidean distance $\dist(p,q)=\sqrt{\sum_{A \in \head(\Q)}\left(\pi_{A}(p)-\pi_{A}(q)\right)^2},$ for two tuples $p,q\in \Re^{|\head(\Q)|}$.
For any tuple $q\in\Re^{|\head(\Q)|}$, let $\displaystyle{\textsf{dist}(q,\Q(\I))=\min_{p\in\Q(\I)}(p,q)}$ denote the distance between $q$ and the closest result in $\Q(\I)$.
%
%\begin{definition}[Nearest Neighbor Query] For a CQ $\Q$ and a database $\I$, the goal is to build an index such that given any point $p \in \Re^{|\head(\Q)|}$, it returns the tuple from $\Q(\I)$ closest to $p$, i.e., $\displaystyle{\arg \min_{q \in \Q(\I)} \dist(p,q)}$.
%\end{definition}
%
%We use $\nnprob(\Q, \I)$ to denote a nearest neighbor query with input $\Q$ and $\I$, and $\nnprob(\Q,\I,p)$ to denote the answer to a point query $p$ over $\Q(\I)$, i.e., $\arg \min_{q \in \Q(\I)} \dist(p,q)$.
Let $\nn(\Q,\I,q)$ denote the closest tuple in $\Q(\I)$ from $q$, i.e., 
$\dist(q,\nn(\Q,\I,q))= \dist(q,\Q(\I))$.
Lower bounds on nearest neighbor queries~\cite{liu2004strong, chazelle1996simplex, afshani2012improved} rule out the possibility of a near-linear size index with $\O(1)$ query time even if all tuples belong in one relation.
Hence, we introduce the approximated version of the nearest neighbor query, called the $\nnprob$ problem.

\begin{definition}[$\epsilon$-Nearest Neighbor Query Problem] For a CQ $\Q$ and a database $\I$, the goal is to build an index, such that given an arbitrary point $q \in \Re^{|\head(\Q)|}$ and a constant parameter $\eps\in(0,1]$, it returns a tuple from $\Q(\I)$ that is at most $(1+\eps) \cdot \dist(q,\Q(\I))$ far from $q$.
\end{definition}
%\vspace{-0.5em}
We use 
%$\nnprob(\Q, \I)$ to denote an approximate nearest neighbor index with input $\Q$ and $\I$, and
$\nnprob(\Q,\I,q)$ to denote an answer to point query $q$ over $\Q(\I)$, i.e., 
$\dist(q,\nnprob(\Q,\I,q)) \leq (1+\eps)\dist(q,\Q(\I))$. %$\nnprob(\Q,\I,q)$ can have different answers.

\begin{table}[t]
\centering
\scalebox{0.8}{
\begin{tabular}{|c|c|c|c|}
\hline
    \textbf{Query}&\textbf{Space} & \textbf{$\prob$}&\textbf{$\nnprob$}\\\hline
    \multirow{2}{*}{$k$-star} & $\displaystyle{\Omega\left(N+\frac{N^k}{T^k}\right)}$ & \checkmark &\checkmark\\\cline{2-4}
    &$\displaystyle{\O\left(N+\frac{N^k}{T^k}\right)}$ & \checkmark &-\\\cline{1-4}
    \multirow{2}{*}{$k$-path} &$\displaystyle{\Omega\left(N+\frac{N^2}{T^{2/(k-1)}}\right)}$ & \checkmark &\checkmark \\ \cline{2-4}
    & $\displaystyle{\O\left(N+\frac{N^2}{T^{2/(k-1)}}\right)}$ & \checkmark &-\\\cline{1-4}
    Hierarchical & $\displaystyle{\O\left(\min_{\ell\in \{0,\ldots,\mathcal{L}-1\}} \frac{N^m}{T^{(m-1)/(\ell+1)}}+N^{m_{\ell+1}}\right)}$ & \checkmark &\checkmark\\\cline{1-4}
    General & $\displaystyle{\O\left(\min_{\ell\in \{0,\ldots,\mathcal{L}-1\}} \frac{N^{\hhtw\cdot m}}{T^{(m-1)/(\ell+1)}}+N^{\hhtw\cdot m_{\ell+1}}\right)}$ & \checkmark & \checkmark\\ \hline
\end{tabular}
}
    \caption{Summary of our main results for the $\prob$ and $\nnprob$ problems. $T$ is the query time. $N$ is the size of the input database. $m$ is the number of relations in the input query. $\mathcal{L}$ is the depth of the attribute tree of the input CQ $\Q$. $m_{\ell+1}$ is the maximum number of leaf nodes in the subtree rooted at a node of level $\ell+1$ in the attribute tree of $\Q$. $\hhtw$ is the hierarchical hypetree width of $\Q$.} 
    %We use $[n]$ to denote $\{0,1,\cdots,n\}$.
    %If \checkmark in the last column, then the same index with the same guarantees solves the approximate $\nnprob$ problem.}
    \label{tab:summary}
    \vspace{-2em}
\end{table}
%\vspace{-1em}
\subsection{Our Results}
%\vspace{-0.8em}
In this paper, we propose efficient indexes for spatial conjunctive queries, achieving a better tradeoff between space usage and query time. %We are particularly interested in the {\em rectangle counting query} (\prob), which intuitively counts the number of CQ results falling into a ``rectangle'' defined by range predicates over attributes, and {\em approximate nearest neighbor query} (\nnprob), which finds a CQ result with the approximately smallest distance to a query point.
Our main contributions are summarized below (see Table~\ref{tab:summary}).
\begin{itemize}
[leftmargin=*]
    \item We show conditional lower bounds for $\prob$ and $\nnprob$ problems on $k$-star and $k$-path queries, using the \emph{$k$-set-disjointness conjecture} and the \emph{$k$-path-reachability conjecture}, respectively. In particular, we show that any index for the $\prob$ problem (or $\nnprob$ problem) with query time $T$ must use at least $\Omega(N+N^k/T^k)$ space for $k$-star queries,
    (in fact, this lower bound can be extended to any \emph{hierarchical CQ}),
    and $\Omega(N+N^2/T^{2/(k-1)})$ space for $k$-path queries. {\bf (Section~\ref{sec:lb})}
    \item We design near-optimal indexes for the $\prob$ problem on $k$-star and $k$-path queries. In particular, we propose an index for the $\prob$ problem over $k$-star queries with $T$ query time using $\O(N+N^k/T^k)$ space, where $T>0$ is any user-defined parameter. Furthermore, we propose an index for the $\prob$ problem over $k$-path queries with $T$ query time using $\O(N+N^2/T^{2/(k-1)})$ space.
    %can answer any $\prob$ in $T$ time using $\O(N+N^k/T^k)$ space for $k$-star queries, and answer any $\prob$ problem in $T$ time using $\O(N+N^2/T^{2/k-1})$ space for $k$-path queries, where $T\in[1:N]$. 
   % In addition, we show near-optimal indexes for both the $\prob$ and $\nnprob$ problems over conjunctive queries where the set of the output attributes is a subset of $\mathbf{A}_j$, for a parameter $j\in[m]$. %For any $k$-path query,  we can answer any $\prob$ query within $T$ time using $\O(N+N^2/T^{\frac{2}{k-1}})$ space. 
    %An index with the same guarantees can be extended to the $\nnprob$ problem.  
    {\bf (Section~\ref{sec:opt})}
    \item We design indexes for $\prob$ and $\nnprob$ problems on general CQs. We first show an index for $\prob$ and $\nnprob$ problems on generalized $k$-star queries with $T$ query time and $\O(N+N^{k}/T^{k-1})$ space. %The index decomposes the values in $\mathcal{B}$ attributes as heavy and light and constructs separate indexes for each case to handle them. 
    %an index with the same guarantees can be extended to the approximated $\nnprob$ problem. 
    We extend it to %the index for general $k$-star queries, we design an index for the $\prob$ (resp. approximate $\nnprob$) on any 
    any hierarchical query $\Q$ with $T$ query time and $\displaystyle{\O\left(\min_{\ell\in \{0,\ldots,\mathcal{L}-1\}} \frac{N^m}{T^{(m-1)/(\ell+1)}}+N^{m_{\ell+1}}\right)}$ space, where $\mathcal{L}$ is the depth of the attribute tree of $\Q$, and $m_{\ell+1}$ is the maximum number of leaf nodes in the subtree rooted at a node of level $\ell+1$ in the attribute tree of $\Q$. We then extend this result to arbitrary CQs using hypertree decomposition techniques. %An index with the same guarantees can be extended to the $\nnprob$ problem. 
    {\bf (Section~\ref{sec:general})} % constructing an index with $T$ query time and $\O\left(\min_{\ell\in 0\cup [\mathcal{L}-1]} \frac{N^{\hhtw(\Q)\cdot m}}{T^{(m-1)/(\ell+1)}}+N^{\hhtw(\Q)\cdot m_{\ell+1}}\right)$ space, where $\hhtw(\Q)$ is the hierarchical hypertree width of query $\Q$. {\bf (Section~\ref{sec:general})}
    \item
    While we mostly focus on the tradeoffs between space usage and query time, for all indexes above we also analyze their preprocessing time. Based on the preprocessing and query time, we show how our new indexes can be used to improve the running time of known relational algorithms for clustering problems over the results of conjunctive queries~\cite{esmailpour2025improved} (relational clustering). 
    In certain cases, our indexes yield a $\sqrt{N}$-factor improvement in the runtime of state-of-the-art relational clustering algorithms. 
    {\bf (Section~\ref{sec:extClust})}. 
    \item 
    Our new indexes are not only of theoretical interest but also practical and easy to implement. We evaluate their performance by implementing them and comparing their query time and space usage against known baseline indexes on real-world and synthetic datasets.
    While known baselines have either super-linear query time or high space consumption, our indexes provide the best space-time tradeoffs. {\bf (Appendix~\ref{sec:experiments})}
\end{itemize}

Even though the space complexity of our index for a hierarchical or general CQ seems high or complicated, we show its strength using a simple example. Consider a full CQ $R_1(A,B,D)\Join R_2(A,B,E)\Join R_3(A,C,F)\Join R_4(A,C,G)$. 
As we discussed above, there are two known indexes for spatial CQs, (for both $\prob$ and $\nnprob$). An index of size $O(N)$ can be used so that each query takes $\O(N)$ time. Alternatively, an index of size $\O(N^{4})$ can be used so that each query takes $\O(1)$ time. Using our result for hierarchical queries, we can construct an index of $\O(N^2)$ space so that each query takes $\O(\sqrt{N})$ time. In some sense, our new indexes balance the space-time tradeoff between the two known extremes.
Overall, we do not claim that our new indexes dominate both the space and the query time of the known indexes. Instead, our new indexes provide a better tradeoff between space and query time. More specifically, our indexes use less space than materializing all join results, and at the same time, they achieve sublinear query time. To the best of our knowledge, no such index was known before for SCQs.

\vspace{-0.5em}
\subsection{Related Work}
Recently, 
space-tradeoff indexes have been studied for reporting, enumeration, or intersection queries on graphs~\cite{agarwal2014space, patrascu2014distance, cohen2010hardness, goldstein2017conditional, xirogiannopoulos2017extracting, agarwal2011approximate} and conjunctive queries in databases
~\cite{deep2023general, zhao2023space, deep2018compressed, kara2020trade, kara2023conjunctive}.
However, none considered spatial queries (such as range counting or approximate nearest neighbor queries) or counting queries so $\prob$ and $\nnprob$ problems cannot be solved using these methods.

Deng et al.~\cite{deng2023space} studied space-query tradeoffs for range subgraph counting and enumeration. Given a graph $G(V,E)$ with $n$ nodes and $m$ edges, each node $v\in V$ is associated with a value $x_v\in \Re$. 
For a general pattern $Q$ of constant size (for example, $k$-star subgraph), they construct an index of $\O(n^2)$ space such that given a query interval $\delta$, it returns the number of $Q$-type subgraphs in $G$ in the range $\delta$. A $Q$-type subgraph $H$ in $G$ is in the range $\delta$ if for every node $v\in H$ it holds $x_v\in\delta$.  They also improve this index for some special types of subgraphs such as $k$-cliques. While their index can handle range counting queries in temporal graphs, and their graph-based indexes can be extended to relational data (similar to our case), their indexes are heavily based on the fact that the query interval $\delta$ is the same across all graph nodes. In our problem, we allow different ranges across different attributes. Their results do not hold in our setting.
%, for instance Example~\ref{ex:2} cannot be solved using~\cite{deng2023space}.
Even worse, in Section~\ref{sec:lb}, we show lower bounds that prevent us from having quadratic space index with $\O(1)$ query time for every CQ.

Wang et al.~\cite{wang2023index} studied the
set intersection problem with post filtering. Given a collection of sets $S_1,\ldots, S_n$, the goal is to construct an index such that given two set ids $a, b$ the index answers $F(S_a\cap S_b)$, where $F(\cdot)$ is a filtering function, such as the skyline, the convex hull, and the $k$-nearest items in $S_a\cap S_b$ to a query item $q$. They derive both lower bounds (using the set disjointness conjecture) and design indexes with low space and low query time for common filtering functions $F$. This work does not use ranges over the sets and does not consider relational data, so their methods cannot be directly used to solve our problems.

There is also a lot of work on \emph{similarity join queries}~\cite{agarwal2021dynamic, mamoulis2001multiway, wangoptimal, abo2022complexity} where geometric objects (such as points or rectangles) are stored as tuples in database tables. The goal is to report or count all the joined objects/tuples between two (or more) tables. 
There are different definitions of the similarity join operator. For example, two objects (points) are joined if their distance exceeds a threshold $\tau$, or two objects (rectangles) are joined if they intersect.
These problems do not consider geometric queries on the attributes of the database, so their techniques cannot be used in our problems.

%A different definition of spatial conjunctive queries is given in~\cite{eiter2013lightweight}, where the authors built a system for conjunctive queries with point-set topological relations such as \emph{next} and \emph{within}. Our problems are fundamentally different.

%Spatial queries such as range counting and (approximate) nearest neighbor queries have been studied thoroughly in computational geometry and databases designing (near-)linear indexes with $\O(1)$ query time, such as range trees~\cite{de1997computational} and quad trees\cite{arya2000approximate, finkel1974quad}. However, in all known indexes they assume that the input is a set of $n$ points in $\Re^d$ and no relational data has been considered.

%\paragraph{Roadmap} The remaining paper is organized as follows.
%\vspace{-1em}

\section{Preliminaries: Geometric indexes}
%\vspace{-1em}

%\subsection{Geometric indexes}
%We use some well-known geometric indexes as oracles to build our indexes. 
We mainly need the following two well known oracles to build our indexes. 
%\vspace{-0.3em}
\begin{theorem}[\cite{de1997computational}]\label{rect}
Given a finite set of points $P \subset \Re^d$, there exists an index (range tree) with $\O(|P|)$ space, such that, 
given a query hyper-rectangle $\rect$, it returns $|P\cap \rect|$ in $\O(1)$ time.
\end{theorem}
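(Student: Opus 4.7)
The plan is to construct a $d$-dimensional range tree with precomputed subtree counts, a classical data structure whose space and query complexities align with the $\O$ notation once one recalls that $d$ is a constant and $\O$ hides polylogarithmic factors in $|P|$.

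First I would handle the one-dimensional case as a base. Sort the points of $P$ by their coordinate and store them at the leaves of a balanced binary search tree; at each internal node $v$ record the number of points in the subtree rooted at $v$. A query interval $\delta=[\delta^-,\delta^+]$ is decomposed into the disjoint union of $O(\log |P|)$ canonical subtrees along the two root-to-leaf search paths for $\delta^-$ and $\delta^+$, and the answer is just the sum of the stored counts at these canonical nodes. This gives $O(|P|)$ space and $O(\log |P|)$ query time, i.e. $\O(|P|)$ and $\O(1)$.

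Then I would lift this to $d$ dimensions by the standard multi-level construction. Build a primary balanced BST on the first coordinate; for every internal node $v$ of the primary tree, recursively build a $(d-1)$-dimensional range tree, called the \emph{associated structure} of $v$, on the set $P_v\subseteq P$ of points stored in $v$'s subtree. To answer a query hyper-rectangle $\rect=\delta_1\times\cdots\times\delta_d$, locate the $O(\log |P|)$ canonical nodes of the primary tree whose point sets collectively partition $\{p\in P:\pi_1(p)\in\delta_1\}$, and at each such canonical node $v$ recursively query its associated $(d-1)$-dimensional structure with $\delta_2\times\cdots\times\delta_d$; sum the returned counts. The space recurrence is $S(n,d)=O(n)+\sum_v S(|P_v|,d-1)$ which solves to $S(n,d)=O(n\log^{d-1}n)$ since the sets $P_v$ on any fixed level of the primary tree partition $P$. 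The query-time recurrence is $T(n,d)=O(\log n)\cdot T(n,d-1)$, giving $T(n,d)=O(\log^d n)$; since $d$ is constant, both bounds are $\O(|P|)$ and $\O(1)$.

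The one subtlety, which I expect to be the main (mild) obstacle to describe carefully rather than to prove, is making sure the associated structures at all levels of the recursion store precomputed counts at their own canonical nodes, so that no individual point is ever touched during a query; this is what allows the query to remain polylogarithmic despite the multi-level descent. Fractional cascading is not needed here because we only return a count, not an explicit list of points, so the straightforward recursive sum suffices.
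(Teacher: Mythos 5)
Your construction is the standard multi-level range tree with precomputed canonical-node counts, which is exactly what the paper describes (and attributes to the computational geometry textbook it cites) in the paragraph following the theorem statement; the only cosmetic difference is that the paper recurses starting from the $d$-th coordinate rather than the first. The space and query recurrences you give, solving to $O(|P|\log^{d-1}|P|)$ and $O(\log^d|P|)$ for constant $d$, correctly justify the stated $\O(|P|)$ and $\O(1)$ bounds.
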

%Given a query rectangle $\rect$, the range tree computes a set of $\O(1)$ nodes of the trees that contain all points
%\vspace{-0.5em}
We show the construction of a $d$-dimensional range tree $\rangetree$ on $P\subset \Re^d$ for range counting.
For $d=1$, the range tree on $P$ is a balanced binary search tree $\rangetree$ of $\O(1)$ height. The points of $P$ are stored at the leaves of $\rangetree$ in increasing order, while each internal node $v$ stores the smallest and the largest coordinates, $y_v^-$ and $y_v^+$, respectively, contained in its subtree.
The node $v$ is associated with an interval $I_v=[y_v^-, y_v^+]$ and the subset $P_v=I_v\cap P$.
For $d>1$, $\T$ is constructed recursively: 
We build a $1$D range tree $\rangetree^{(d)}$ on the $x_d$-coordinates of points in $P$. Next, for each node $v\in \rangetree^{(d)}$, we recursively construct a $(d-1)$-dimensional range tree $\rangetree_v$ on $P_v$, which is defined as the projection of $P_v$ onto the hyperplane $x_d=0$, and attach $\rangetree_v$ to $v$ as its secondary tree. The size of $\rangetree$ in $\Re^d$ is $\O(|P|)$ and it can be constructed in $\O(|P|)$ time~\cite{de1997computational}.
For a node $v$ at a level-$\ell$ tree, let $p(v)$ denote its parents in that tree. 
For each node $u$ of the $d$-th level of $\rangetree$, we associate a $d$-tuple $\langle u_1, u_2, \ldots, u_d=u\rangle$, where $u_i$ is the node at the $i$-th level tree of $\rangetree$ to which the level-$(i+1)$ tree containing $u_{i+1}$ is connected.
We associate the rectangle $\square_u=\times_{j=1}^d I_{u_j}$ with node $u$.
For a query rectangle 
$\rect=\times_{i=1}^d\delta_i$, a $d$-level node $u$ is called a \emph{canonical node} if for every $i\in [1,d]$, $I_{u_i}\subseteq \delta_i$ and $I_{p(u_i)}\not\subseteq \delta_i$.
For any rectangle $\rect$, there are $\O(1)$ canonical nodes in $\rangetree$ and can be computed in $\O(1)$ time~\cite{de1997computational}.
%For a set of points $P\subset \Re^d$ and a point $q\in \Re^d$, let $\nn(q,P)=\argmin_{p\in P}\dist(q,p)$.

For the approximate nearest neighbor problem, we use a \emph{quadtree}~\cite{finkel1974quad} as a black box.
%\vspace{-0.7em}
\begin{theorem}[\cite{finkel1974quad, har2011geometric}]\label{nn}
Given a finite set of points $P \subset \Re^d$, there exists an index (quadtree) with $O(|P|)$ space, such that, given a query point $q\in \Re^d$ and a constant parameter $\eps\in(0,1]$, it returns a point $p\in P$ such that $\dist(q,p)\leq (1+\eps)\dist(q,\nn(q,P))$ in $\O(1)$ time.
\end{theorem}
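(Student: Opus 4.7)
The plan is to build a compressed quadtree on $P$ augmented with a well-separated pair decomposition (WSPD), reducing each approximate nearest neighbor query to a point location plus a constant amount of local work.

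First, I would construct a compressed quadtree $\T$ on $P$: starting from an axis-aligned bounding cube of $P$, recursively subdivide each cell into $2^d$ equal children, and compress any maximal chain of cells that have a single non-empty child into one edge. This yields a tree with $O(|P|)$ nodes. Each leaf stores the unique point of $P$ contained in it, and each internal node $v$ is annotated with its cell $\square_v$. Equipping $\T$ with a standard point-location structure allows me to find, for any query $q\in\Re^d$, the smallest cell of $\T$ containing $q$ in $\O(1)$ time, using $O(|P|)$ space overall.

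Next, I would compute a WSPD of $P$ with separation parameter $s=\Theta(1/\eps)$ directly from $\T$: a family of $O(s^d\cdot|P|)=O(|P|)$ pairs $(A_i,B_i)$ of subsets of $P$ such that every unordered pair of distinct points of $P$ is covered by exactly one $(A_i,B_i)$, and for each such pair the diameters of the bounding boxes of $A_i$ and $B_i$ are at most a $1/s$ fraction of the distance between them. For each pair I store a single representative point from each side, together with pointers to the quadtree nodes that define the pair.

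To answer a query, I would locate the smallest cell $\square_v$ of $\T$ that contains $q$, then inspect the constantly many WSPD pairs whose defining quadtree nodes lie on (or are geometric neighbors of) the root-to-$v$ path at a comparable scale; among their stored representatives I return the one minimizing $\dist(q,\cdot)$. Using the WSPD separation property, a short triangle-inequality argument shows the returned representative is at distance at most $(1+\eps)\cdot\dist(q,\nn(q,P))$ from $q$. The main obstacle is proving that only $O(1)$ WSPD pairs must be consulted per query: this rests on a packing argument bounding the number of quadtree cells at each level that can intersect a ball around $q$ of radius proportional to the nearest-neighbor distance, which depends only on $d$ and $\eps$ and not on $|P|$. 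Given this bound, the query time is dominated by point location and is $\O(1)$, while the total space remains $O(|P|)$, as claimed.
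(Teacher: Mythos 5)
The paper does not prove this statement at all; it imports it as a black box from the cited references, so the only question is whether your sketch is itself sound. It is not, and the gap is in the central step. A well-separated pair decomposition of $P$ controls distances \emph{between points of $P$}: for a pair $(A_i,B_i)$ it guarantees that the diameter of $A_i$ is at most $\dist(A_i,B_i)/s$. It says nothing about the distance from an external query point $q$ to $P$. Concretely, take $P$ to be two tight clusters $A$ and $B$ far apart and let $q$ sit just outside the bounding box of $A$. The relevant pair at the scale of $q$'s containing cell is $(A,B)$, and the stored representative $a^\ast\in A$ satisfies only $\dist(q,a^\ast)\le \dist(q,\nn(q,P)) + \mathrm{diam}(A)$, where $\mathrm{diam}(A)$ is bounded relative to $\dist(A,B)$ --- not relative to $\dist(q,P)$, which can be arbitrarily smaller. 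So the ``short triangle-inequality argument'' does not yield a $(1+\eps)$ bound; to fix it you would have to descend into the pairs internal to $A$, and there is no $O(1)$ bound on how many levels that takes.

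The second, related gap is that your packing argument presupposes knowledge of the query scale: you want to examine the $O(1)$ cells (or pairs) whose side length is comparable to $\dist(q,P)$, but that quantity is exactly what is being computed. Every standard proof of this theorem --- including the ones in the reference the paper cites --- inserts a preliminary stage that produces a constant-factor (or even polynomial-factor) approximation of $\dist(q,P)$ first, e.g., via a ring-separator tree or a careful descent in the compressed quadtree, and only then does the $O((1/\eps)^d)$ cell-query refinement at the resulting scale. Alternatively, the WSPD route can be made to work, but only by precomputing an approximate Voronoi diagram: for each pair one generates a set of cells each annotated with a representative that is provably a $(1+\eps)$-ANN for \emph{every} query point landing in that cell, and one then builds a fresh quadtree over those generated cells and answers queries by pure point location there. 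Your proposal, which locates $q$ in the original compressed quadtree of $P$ and consults pairs along the root-to-leaf path (a path that can contain $\Theta(|P|)$ nodes), has neither of these mechanisms, so both the correctness and the $\O(1)$ query bound are unsupported as written.
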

%\vspace{-1.5em}
\section{Lower Bounds}
\label{sec:lb}
%\vspace{-1em}
In this section, we show conditional lower bounds for the \prob~ and \nnprob~ problems on $k$-star and $k$-path queries.
%We focus on a simpler version of the problem where the query rectangle $\rect$ is defined only on the output attributes, i.e., for $k$-star queries $\rect$ is a hyper-rectangle in $\Re^{|\head(\star)|}$, and for $k$-path queries $\rect$ is a hyper-rectangle in $\Re^{|\head(\path)|}$.
\vspace{-0.2em}

\paragraph{Conjectures}
We build the hardness of our problem with existing hard problems:

%\vspace{-0.5em}
\begin{definition}[$k$-set disjointness] Given a universe of integers $\U$, $m$ subsets $\mathcal{S}=\{S_1, S_2, \cdots, S_m\}$ such that $S_i\subseteq \U$ for every $i\in[m]$, with $n=\sum_{i\in[m]}|S_i|$, and an integer $k\geq 1$, 
the goal is to construct an index such that given $k$ integers $j_1, j_2, \dots, j_k \in [m]$ it returns whether $\cap_{h \in [k]}S_{j_h}=\emptyset$.
%the $k$-set disjointness query asks whether $\cap_{h \in [k]}S_{j_h}=\emptyset$ for arbitrary $k$ integers $j_1, j_2, \dots, j_k \in [m]$.
\end{definition}
%\vspace{-0.5em}

\begin{definition}[$k$-path reachability] Given a directed graph $G=(V,E)$ with $n=|E|$, the goal is to construct an index such that given a pair of vertices $u_0, v_0\in V$ it returns whether there exists a path from $u_0$ to $v_0$ consisting of at most $k$ edges.
%a $k$-path reachability query asks whether there exists a path from $u_0$ to $v_0$ of length $k$, for an arbitrary pair of vertices $u_0, v_0\in V$.
\end{definition}

 %Some commonly used conjectures exist on the trade-off between the query answering time and the space used for the $k$-set intersection and $k$-path reachability problems.
% \vspace{-0.5em}
 We mainly use the following conjectures as presented in~\cite{goldstein2017conditional}.
%\vspace{-0.5em}
\begin{conjecture}\label{conj:inter}[$k$-set disjointness conjecture]
  Any index that answers the $k$-set disjointness problem in time $T$, needs at least $\tilde{\Omega}(n+n^k / T^k)$ space.
\end{conjecture}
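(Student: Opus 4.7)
Conjecture~\ref{conj:inter} is an established hardness hypothesis imported from the conditional lower-bound literature (the authors explicitly attribute it to~\cite{goldstein2017conditional}) rather than a derivable claim, so I do not expect a self-contained proof in this paper. That said, here is how one would argue such an $S\cdot T^k = \tilde{\Omega}(n^k)$ tradeoff within the standard framework.

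My plan would be a reduction from $k$-wise Boolean matrix multiplication. View each set $S_j$ as an indicator vector over $\U$, stacked into a $|\U|\times m$ Boolean matrix $M$. A single $k$-set disjointness query $(j_1,\ldots,j_k)$ then asks whether a designated entry of the $k$-fold Boolean product of $M$ with itself is nonzero. If there existed an index of space $S$ answering each such query in time $T$, one could compute the full product in time $\tilde{O}(S + m^k\cdot T)$ by enumerating all entries; pushing $S\cdot T^k$ below $n^k$ would then beat the best known (and conjectured tight) bounds for sparse Boolean matrix multiplication. The shape of the tradeoff is also well-supported by its two endpoints: the trivial index that precomputes every one of the $\binom{m}{k}$ answers uses $\Theta(n^k)$ space with $T=O(1)$, while simply storing the sets and intersecting on the fly uses $\Theta(n)$ space with $T=\tilde{O}(n)$; Conjecture~\ref{conj:inter} asserts that every intermediate point lies on the curve $S\cdot T^k = \tilde{\Omega}(n^k)$.

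An alternative, more direct route would be a cell-probe encoding argument: for a suitably hard input distribution, one bounds the number of distinct query-answer vectors that a data structure of space $S$ can realize through $T$ cell probes, and then derives a lower bound on $S$ by counting. The \emph{main obstacle} is precisely that no cell-probe technique currently proves polynomial tradeoffs of this form unconditionally; strong data-structure lower bounds beyond $\tilde{\Omega}(\log n)$ per operation are a long-standing open problem. Consequently, any ``proof'' must be conditional, and the honest (and conventional) step in a paper of this kind is to take Conjecture~\ref{conj:inter} as axiomatic and use it as a black box to derive the downstream lower bounds for $\prob$ and $\nnprob$.
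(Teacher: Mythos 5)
You are right that Conjecture~\ref{conj:inter} is not proved in the paper: it is stated as a hardness hypothesis imported verbatim from~\cite{goldstein2017conditional} and used only as a black box in the reductions of Section~\ref{sec:lb}, exactly as you describe. Your supplementary heuristics (the two endpoints of the $S\cdot T^k$ curve and the connection to sparse Boolean matrix products) are reasonable motivation but are not part of, and are not needed for, anything the paper claims.
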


\begin{conjecture}\label{conj:reach}[$k$-reachability conjecture]
   %Given any graph $G=(V, E)$ with $n =|E|$,
   Any index that answers the $k$-reachability problem in time $T$, needs at least $\tilde{\Omega}(n+n^2 / T^{\frac{2}{k-1}})$ space. 
\end{conjecture}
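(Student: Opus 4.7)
The plan here is straightforward: Conjecture~\ref{conj:reach} is not a theorem that the authors intend to prove, but rather a hypothesis imported from the fine-grained complexity literature (attributed above to~\cite{goldstein2017conditional}). Accordingly, the ``proof'' that will follow in the paper is essentially an attribution: the tradeoff is taken as a working assumption, and all downstream lower bounds in Section~\ref{sec:lb} will be conditional on it, in exactly the same way that the $k$-set disjointness conjecture of Conjecture~\ref{conj:inter} is used.

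That said, if one wanted to sketch why the conjecture is plausible, I would first verify the two endpoints of the claimed tradeoff. At $T = \tilde{O}(1)$, precomputing the bounded-depth transitive closure stores an $n \times n$ Boolean matrix, matching $\tilde{\Omega}(n^2)$ space; at $T = \tilde{O}(n)$, a truncated BFS from the source uses only $O(n)$ space, matching $\tilde{\Omega}(n)$. The interesting content of the conjecture is the interpolation $n^2 / T^{2/(k-1)}$ between these two extremes. The hardness of this intermediate regime is typically argued by reduction from iterated Boolean matrix multiplication: $k$-reachability in a layered graph with $k{+}1$ layers encodes a product of $k$ Boolean matrices, and a data structure beating the conjectured curve would, when invoked on all $n^2$ source-sink pairs, yield subcubic algorithms for $k$-fold matrix products widely believed to be hard. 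The specific exponent $2/(k-1)$ arises from balancing the number of intermediate ``merge'' stages against the per-query budget, which is precisely the balancing that appears in the best known upper bounds via rectangular matrix multiplication.

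The hard part, and the reason this remains a conjecture, is that no unconditional cell-probe lower bounds in the polynomial-space regime are known: current techniques yield at best polylogarithmic query-time bounds against polynomial space, far too weak for the statement at hand. A genuine proof would therefore either require a major breakthrough in data structure lower bounds or a further conditional reduction from a problem already believed to admit this curve (e.g., strong variants of the online matrix-vector conjecture or the $3$SUM-indexing conjecture). In the absence of either, the right move for this paper is to cite~\cite{goldstein2017conditional} and use the conjecture as a black box when deriving the lower bounds of Section~\ref{sec:lb}, which is precisely what I expect the authors to do.
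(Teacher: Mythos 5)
You are right: the paper offers no proof of this statement --- it is stated as a conjecture imported verbatim from~\cite{goldstein2017conditional} and used only as a black-box hypothesis for the conditional lower bound in Theorem~\ref{lem:lowerbound2}. Your reading (attribution plus plausibility sketch, with the tradeoff taken on faith) matches exactly what the authors do.
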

%\vspace{-0.5em}
\paragraph{Star query}
%We show that assuming the conjecture~\ref{conj:inter} is true, any index that answers \prob~ or \nnprob~ on a star CQ $\star$ in query time $T$, needs at least $\Omega(N+N^k / T^k)$ space. In fact, 
We consider a simpler version of $\star$ where $\mathcal{A}_i=A_i$ for each $i\in[k]$, and $\mathcal{B}=B$, so $|\mathcal{A}_i|=|\mathcal{B}|=1$. %Hence, $\star(A_1,\ldots,A_{k+1}):-R_1(A_1,B),\ldots, R_k(A_k, A_{k+1})$.
%\vspace{-0.5em}
\begin{theorem}
\label{lem:lowerbound1}
    Any index answering $\prob$ or $\nnprob$ on $\star$ in $T$ time, uses at least $\tilde{\Omega}(N+N^k / T^k)$ space, assuming the $k$-set disjointness conjecture. 
\end{theorem}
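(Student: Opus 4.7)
The plan is to prove the bound via a reduction from the $k$-set disjointness problem. Given an instance with subsets $S_1, \ldots, S_m \subseteq \U$ of total size $n$, I would build a database $\I$ over the simple $\star$ schema with singleton attribute sets ($R_i(A_i,B)$ for $i \in [k]$) by setting $R_i^\I = \{(j,u) : j \in [m],\ u \in S_j\}$ for every $i \in [k]$. Then $|R_i^\I| = n$ and the total input size is $N = kn = \Theta(n)$. Under bag semantics, the tuple $(j_1, \ldots, j_k)$ appears in $\star(\I)$ with multiplicity $|S_{j_1} \cap \cdots \cap S_{j_k}|$, because the $k$ tuples $(j_i, u) \in R_i^\I$ witness the join on $B=u$ precisely when $u \in S_{j_i}$ for every $i \in [k]$.

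To answer the disjointness query for indices $(j_1, \ldots, j_k)$ using a $\prob$ oracle, I would issue the zero-width rectangle $\rect = [j_1,j_1] \times \cdots \times [j_k,j_k]$; by the multiplicity calculation above, $\prob(\star,\I,\rect) = |S_{j_1} \cap \cdots \cap S_{j_k}|$, so it is positive iff the intersection is non-empty. To answer the same query using an $\nnprob$ oracle, I would issue the point $q = (j_1, \ldots, j_k)$ and compare the returned approximate nearest neighbor $p$ to $q$. Because every tuple of $\star(\I)$ has integer coordinates, the true nearest-neighbor distance is either $0$ (intersection non-empty, so $q\in\star(\I)$) or at least $1$ (otherwise); the $(1+\eps)$ approximation guarantee forces $\dist(q,p) = 0$ in the first case and $\dist(q,p) \geq 1 > 0$ in the second, so a single equality test distinguishes them.

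Each reduction is constructed in $O(n)$ time, issues a single query, and passes the oracle's answering time through unchanged. Hence any index for $\prob$ (or $\nnprob$) on $\star$ with space $s(N)$ and query time $T$ yields an index for $k$-set disjointness with space $O(s(kn))$ and query time $O(T)$. Invoking Conjecture~\ref{conj:inter} gives $s(kn) = \tilde{\Omega}(n + n^k/T^k)$, and since $k$ is constant and $N = \Theta(n)$, the claimed bound $s(N) = \tilde{\Omega}(N + N^k/T^k)$ follows.

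The step requiring the most care is the $\nnprob$ case, since a $(1+\eps)$ multiplicative error could, in principle, let the oracle return a spurious point differing from $q$ by less than the gap between lattice points. The integrality argument above is exactly what rules this out, and I expect it to be the only non-routine piece; the $\prob$ case reduces instantly to the multiplicity identity, and the overall reduction is small and schema-preserving, so no additional technical obstacles arise.
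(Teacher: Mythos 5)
Your reduction is the same as the paper's: you populate every $R_i$ with the tuples $(j,u)$ for $u\in S_j$ and query the degenerate rectangle $[j_1,j_1]\times\cdots\times[j_k,j_k]$, concluding that the disjointness instance is non-empty iff the count is positive. The only addition is that you spell out the $\nnprob$ case (via the integrality gap between distance $0$ and distance $\geq 1$), which the paper omits as ``similar''; your argument there is correct.
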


\begin{proof} 
The reductions from $k$-set disjointness problem to $\prob$ or $\nnprob$ are similar; hence we focus on  $\prob$ below.
Given an instance of the $k$-set disjointness problem over the universe $\U$ and $m$ sets $S_1, S_2, \dots, S_m$, we build an instance $\I$ such that using the answer of a \prob~ query over $\star(\I)$, one can answer the set disjointness query in $O(1)$ time. Initially, $\I=\emptyset$. For every pair $a\in\mathcal{U}, S_j$ such that $a\in S_j$, we add the tuple $(j,a)$ in every relation $R_i^D$, for every $i\in [k]$. Consider a $k$-set disjointness query with $k$ distinct integers $j_1,\ldots, j_k$. Let $\rect=[j_1,j_1]\times\ldots\times[j_k,j_k]$ be a rectangle in $\Re^k$.
We next show that the $k$-set disjointness instance is false if and only if $\prob({\star}, \I,\rect)>0$.
    
First, assume that the $k$-set disjointness instance is false, i.e., $\bigcap_{h\in[k]}S_{j_h}\neq\emptyset$. There exists an integer $a\in \U$, such that $a\in \bigcap_{h\in[k]}S_{j_h}$. Hence, for every $h\in[k]$, it holds that $(j_h,a)\in R_{j_h}$. So, the tuple $(j_1,\ldots, j_k)\in \star(\I)$ and by definition $(j_1,\ldots, j_k)\in \rect$. So $\prob({\star}, \I,\rect)$ returns a counter greater than $0$.
    
Second, assume that $\prob({\star}, \I,\rect)>0$, so there exists a tuple $(j_1,\ldots, j_k)\in \star(\I)$. By definition, there should be an integer $a\in \mathcal{U}$ such that $(j_h,a)\in R_j^D$ for every $h\in [k]$. By construction, $a\in \bigcap_{h\in[k]}S_{j_h}$ so the $k$-set disjointness instance is false.
\end{proof}

%The reduction above can be used to obtain a lower bound for $\nnprob$ queries. In fact, for $q=(j_1,\ldots, j_k)$, the $k$-set intersection is false if and only if $\dist(q,\nn(\star,\I,q))=0$. 
%Since the nearest neighbor should be within distance $0$, equivalently, we argue that the $k$-set intersection is false if and only if $\dist(q,\nnprob(\star,\I,q))=0$.
%Equivalently, the $k$-set intersection is false if and only if any finite-approximation of the nearest neighbor has
%The next theorem follows directly Theorem~\ref{lem:lowerbound1}.

%\begin{theorem}\label{lem:lowerboundNN}
%   Any index answering $\nnprob$ on $\star$ in $T$ time, uses at least $\Omega(N+N^k / T^k)$ space, assuming the strong $k$-reac conjecture.
%\end{theorem}
%\begin{proof}
%    Like what we showed in the proof of \ref{}, given an instance of the $k$-set intersection problem over $m$ sets $S_1, S_2, \dots, S_m$, we build a database instance $\I$ such that using the answer of a \nnprob~ query over $\Q^*_K(D)$, one can answer the set intersection emptiness query in $O(1)$. 
%\end{proof}

%The lower bounds for the $\prob$ and $\nnprob$ problems in Theorems~\ref{lem:lowerbound1} can be extended to any hierarchical CQ straightforwardly.

%We show that assuming the conjecture~\ref{conj:reach} is true, any index that answers \prob~ or \nnprob~ on a path CQ $\path$ in query time $T$, needs at least $\Omega(N+N^1 / T^{2/(k+1)})$ space. In fact,
%\vspace{-1em}
In Appendix~\ref{appndx:LB}, we show the next lower bound for the $\prob$ and $\nnprob$ problems on path queries.
%\vspace{-0.5em}
%so $|\mathcal{B}_i|=1$. %Hence, $\path(A_1,A_{2}):-R_1(A_1,B_1),\ldots, R_k(B_{k-1}, A_{2})$.

\begin{theorem}
\label{lem:lowerbound2}
    Any index answering $\prob$ or $\nnprob$ on $\path$ in $T$ time, uses at least $\tilde{\Omega}(N+N^2 / T^{\frac{2}{k-1}})$ space, assuming the $k$-reachability conjecture. 
\end{theorem}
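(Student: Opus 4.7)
The plan is to mimic the reduction style of Theorem~\ref{lem:lowerbound1}, but this time reducing from $k$-path reachability to $\prob$ (and $\nnprob$) on $\path$. Given a directed graph $G=(V,E)$ with $n=|E|$, I would first augment $G$ with a self-loop at every vertex so that the existence of a path of length at most $k$ from $u_0$ to $v_0$ becomes equivalent to the existence of a walk of length exactly $k$. The total number of edges after augmentation is $O(n)$.

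Using the simpler form of $\path$ with $\mathcal{A}_1=A_1$, $\mathcal{A}_2=A_2$, and $|\mathcal{B}_j|=1$ for $j\in[k-1]$, I would build a database instance $\I$ as follows: for every (augmented) directed edge $(u,v)$, insert the tuple $(u,v)$ into every relation $R_i^\I$ for $i\in[k]$. Each relation has $O(n)$ tuples, so $N=\Theta(n)$ since $k$ is constant. By construction, $(u_0,v_0)\in\path(\I)$ iff there exists a sequence $u_0=x_0,x_1,\ldots,x_k=v_0$ with every consecutive pair being an (augmented) edge, which by the self-loop trick is equivalent to the original graph containing a walk, and therefore a path, of at most $k$ edges from $u_0$ to $v_0$.

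To answer a reachability query on $(u_0,v_0)$ using a $\prob$ index, I would issue the rectangle $\rect=[u_0,u_0]\times[v_0,v_0]$ and output \true{} iff $\prob(\path,\I,\rect)>0$. For an $\nnprob$ index, I would issue the query point $q=(u_0,v_0)$: since vertex identifiers are integers, the minimum nonzero distance between $q$ and any element of $\path(\I)$ is at least $1$, so the approximate nearest neighbor $p$ returned satisfies $\dist(q,p)=0$ iff $(u_0,v_0)\in\path(\I)$, regardless of $\eps$. In both cases, the reachability query is answered in $O(T)$ time using an index of size $S$ built for the spatial conjunctive query problem.

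Invoking Conjecture~\ref{conj:reach}, the resulting $k$-reachability index must use at least $\tilde{\Omega}(n+n^2/T^{2/(k-1)})$ space, and since $N=\Theta(n)$ this immediately yields $S=\tilde{\Omega}(N+N^2/T^{2/(k-1)})$. The only delicate points I expect to justify carefully are (i) that self-loops correctly bridge ``walks of length exactly $k$'' and ``paths of length at most $k$'' (handled by the observation that any walk of length $\leq k$ contains a path of length $\leq k$ as a subwalk, and conversely a short path can be padded with self-loops), and (ii) that in the $\nnprob$ case the $(1+\eps)$-approximation factor does not blur the distinction between ``in the result set'' and ``not in the result set'' — which is immediate because the returned point lies in $\path(\I)$ and its reported distance to $q$ is exact.
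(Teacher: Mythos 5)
Your proposal is correct and follows essentially the same reduction as the paper: both augment $G$ with self-loops so that ``reachable within $k$ edges'' becomes ``reachable by a walk of exactly $k$ edges,'' place every augmented edge as a tuple in each of the $k$ relations, and query the degenerate rectangle $[u_0,u_0]\times[v_0,v_0]$. Your additional care in spelling out the $\nnprob$ case (that the $(1+\eps)$ factor cannot blur a zero distance, since the returned point lies in $\path(\I)$) fills in a detail the paper only asserts by analogy, but it does not change the argument.
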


%The proof of the next theorem follows directly from the proof of Theorem~\ref{lem:lowerbound2}. Indeed, the reduction is identical with the reduction described for the $\prob~$ problem. For a $k$-reachability query $v_{j_1}, v_{j_2}$ we set the query point $t=(j_1, j_2)$. The $k$-reachability instance is true if and only if $\dist(t,\nnprob(\path,D,t))=0$.
%
%\begin{theorem}\label{star:hard}
%    Assuming the $k$-reachability conjecture is true, any index that answers $\nnprob~$ on $\path$ in query time $T$, uses at least $\Omega(N+N^2 / T^{\frac{2}{k-1}})$ space. 
%\end{theorem}
%\vspace{-1.5em}
\section{Near-Optimal Indexes}\label{sec:opt}
%\vspace{-1em}
We now propose indexes for the $\prob$ problem on $\star$ and $\path$.
In the end, we show some additional conjunctive queries where $\prob$ and $\nnprob$ can be solved near-optimally.
All missing proofs can be found in Appendix~\ref{appndx:sec4}.
%on a conjunctive query $\Q$, assuming that the head attributes do not contain any join attribute.
%\vspace{-1em}
\subsection{Star Queries}
\label{subsec:kstar}
%\vspace{-1em}
\paragraph{Main idea}
Each relation $R_i$ is first sorted by attribute $A_i$ and partitioned into fixed-size blocks, with each block’s attribute values encapsulated within a minimal bounding interval. A one-dimensional range tree is constructed for each attribute, where leaf nodes store blocks of tuples, and inner nodes store unions of child node blocks. For every combination of tree nodes across attributes, we precompute and store tuple counts corresponding to block intersections. Given a query range, we locate relevant blocks using range trees: Blocks fully contained in the range contribute directly to the count (their count has been precomputed). Partially intersected blocks are processed by iterating over their tuples, and for each such tuple $t$, count the query results using a secondary primitive index that filters tuples based on $\pi_{\mathcal{B}}(t)$.

\paragraph{Primitive index}
Before describing the optimal index, we show the following primitive index to count the number of results from $\star\cap \rect$ whose joined tuples have a fixed projection $b$
 on the attributes $\mathcal{B}$.
%Given a tuple $b$ and a hyper-rectangle $\rect$, the index returns the number of query results from $\star(\I)$ whose projection on $\mathcal{B}$ is $b$ and whose projection on $\bigcup_{i\in[k]}\mathcal{A}_i$ lies in the rectangle $\rect$.

    %\vspace{-0.5em}
    \begin{lemma}\label{lem:domcnt}
    Given $\star$ and a database $\I$ of size $N$, there exists an index $\primitive$ of $\O(N)$ space, such that, given a query tuple $b\in \Re^{|\mathcal{B}|}$ and a hyper-rectangle $\rect\in\Re^{|\head(\star)|}$, the index $\primitive$ has a procedure $\primitive.\mathsf{count}(b,\rect)$ that returns the count $|\{t\in \star(\I)\mid t\in \rect, \forall i\in[k],\exists t_i\in R_i: \pi_{\mathcal{A}_i}(t_i)=\pi_{\mathcal{A}_i}(t), \pi_{\mathcal{B}}(t_i)=b\}|$ in $\O(1)$ time.
      %  Given a query value $b \in \dom(B)$, we can count $|\{t \in \Q^{*}_k | \pi_{B}(t) = b \land \forall i, l_i \leq \pi_{A_i}(t) \leq r_i\}|$, in $\~{O}(1)$ time, using $O(N)$ preprocessing space. 
    \end{lemma}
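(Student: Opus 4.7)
The plan is to reduce the quantity to a product of independent per-relation range counts once the join value $b$ is fixed. Because $\mathcal{B}$ is the unique join attribute of $\star$ and the $\mathcal{A}_i$'s are pairwise disjoint, conditioning on $\pi_{\mathcal{B}}=b$ turns the $k$-star into a Cartesian product. Under bag semantics, a head tuple $(a_1,\ldots,a_k)$ whose witnesses use the join value $b$ is produced with multiplicity $\prod_{i=1}^k n_i(a_i,b)$, where $n_i(a_i,b)=|\{t\in R_i : \pi_{\mathcal{A}_i}(t)=a_i,\ \pi_{\mathcal{B}}(t)=b\}|$. Summing over $(a_1,\ldots,a_k)\in\rect_1\times\cdots\times\rect_k$, where $\rect_i$ is the slab of $\rect$ along the coordinates in $\mathcal{A}_i$, the target count equals
\[
\prod_{i=1}^k c_i(b,\rect_i),\qquad c_i(b,\rect_i)\;:=\;\bigl|\{t\in R_i : \pi_{\mathcal{B}}(t)=b,\ \pi_{\mathcal{A}_i}(t)\in \rect_i\}\bigr|.
\]
Hence it suffices to answer, independently for each $i\in[k]$, this ``$b$-sliced'' rectangle-counting subproblem on $R_i$.

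For each $i\in[k]$ I would partition $R_i$ by the $\mathcal{B}$-value: for every $b'\in \pi_{\mathcal{B}}(R_i)$, set $R_i^{(b')}=\{t\in R_i : \pi_{\mathcal{B}}(t)=b'\}$ and build a $|\mathcal{A}_i|$-dimensional range tree (Theorem~\ref{rect}) on the multiset $\proj_{\mathcal{A}_i}(R_i^{(b')})$. A dictionary keyed by $b'$ (a balanced BST on $\adom(\mathcal{B})$ or a perfect hash table) maps each $b'$ to its range tree. The per-relation space is $\sum_{b'}\O(|R_i^{(b')}|)=\O(|R_i|)$, and summing over the $k$ relations yields $\O(N)$ total space, since $k$ is a constant in data complexity.

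At query time, for each $i\in[k]$ I look up $b$ in $R_i$'s dictionary. If $b$ is absent, $c_i(b,\rect_i)=0$ and I return $0$ immediately. Otherwise I query the associated range tree with $\rect_i$ via Theorem~\ref{rect} to obtain $c_i(b,\rect_i)$ in $\O(1)$ time, and finally return $\prod_{i=1}^k c_i(b,\rect_i)$. Since $k$ is a constant, the whole procedure runs in $\O(1)$ time.

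The only step that needs care is the bag-semantics factorization, which follows directly from the pairwise disjointness of the $\mathcal{A}_i$'s: once $\mathcal{B}=b$ is fixed, each relation contributes independently, so the join multiplicity factors over $i$. Everything else is a routine composition of a dictionary on the $\mathcal{B}$-coordinate with the range-tree oracle of Theorem~\ref{rect}, so I do not anticipate a substantive obstacle.
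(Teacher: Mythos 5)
Your proposal is correct and matches the paper's own proof: both partition each $R_i$ by its $\mathcal{B}$-value, build a range tree on the $\mathcal{A}_i$-projections of each part, and answer a query by multiplying the $k$ per-relation counts $c_i(b,\rect_i)$, giving $\O(N)$ space and $\O(1)$ query time. The bag-semantics factorization you justify explicitly is used implicitly in the paper, so there is no substantive difference.
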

In other words, if $\star[b]$ is the query obtained from $\star$ by adding the condition $\mathcal{B}=b$ to its body, 
then $\primitive.\mathsf{count}(b,\rect)=\prob(\star[b],\I,\rect)$.

We start with a simpler case of the problem, where $\star$ contains exactly one attribute $A_i$ in $\mathcal{A}_i$, for $i\in [k]$, and precisely one attribute $B\in \mathcal{B}$.
%Hence, $\star(A_1,\ldots, A_k):-R_1(A_1,B),\ldots, R_k(A_k,B)$. 
Lastly, we extend our index to the $k$-star query with any (constant) number of attributes.

\paragraph{Index construction}
 For each $i\in[k]$ we first sort and order all the tuples in $R_i$ based on their value of $A_i$ in ascending order, and store the sorted tuples of $R_i$, in a list denoted by $S_i$. Let $s_{i,j}$ denote the $j$'th tuple in $S_i$ and let $\alpha$ be any integer parameter such that $1
 \leq \alpha\leq N$. We divide each $S_i$, into $\frac{N}{\alpha}$ blocks of size $\alpha$. Let $\block_i$ be the set of blocks in $S_i$ and let $\block_{i,j}$ denote the $j$'th block of $S_i$, i.e., for each $i \in [k]$, tuples $s_{i,[1, \alpha]}$ (the first $\alpha$ tuples of $S_i$) belong to the block $\block_{i,1}$, tuples $s_{i,[\alpha + 1, 2\alpha]}$ belong to the block $\block_{i,2}$, and so on.
 Let $I_{i,j}$ be the smallest interval that contains all values in $\pi_{A_i}(\block_{i,j})$ and let $I_i=\{I_{i,j}\mid j\in[N/\alpha]\}$ be the set of all these smallest intervals.
 For every $i\in[k]$, we build a binary search tree ($1$-dimensional range-tree) $\rangetree_i$ on the intervals $I_i$, i.e., the $j$-th leaf node of the tree corresponds to $I_{i,j}\in I_i$ (we note that intervals in $I_i$ do not intersect in their interiors). If $u$ is the $j$-th leaf node of $\rangetree_i$ then it stores the block $\block_{i}^{(u)}=\block_{i,j}$. If $u$ is an inner node of $\rangetree_i$, then it stores a block that includes the union of all blocks in the leaf nodes of the subtree rooted at $u$. More specifically, if $u$
 has children $v, w$, then we set $\block_{i}^{(u)}=\block_{i}^{(v)}\cup \block_{i}^{(w)}$.
 Every node $u$ is also associated with the corresponding interval $I_i^{(u)}$, which is the smallest interval that contains all values in $\pi_{A_i}(\block_i^{(u)})$.
 For every combination of nodes $u_1\in \rangetree_1,\ldots, u_k\in\rangetree_k$, %we set $\mathcal{C}(u_1,\ldots, u_k) = \{\block_{1}^{(u_1)}, \block_{2}^{(u_2)}, \ldots, \block_{k}^{(u_k)}\}$.
 we compute and store $n_{u_1,\ldots, u_k}=|\{t\in \star(\I)\mid  \forall i\in[k], \pi_{\allattr_i}(t)\in \block_i^{(u_i)}\}|$, 
 i.e., the number of query results whose projection on $\allattr_i=\{A_i,B\}$ is contained in the block $\block_{i}^{(u_i)}$ for every $i\in[k]$.
 Finally, we construct the index $\primitive$ from Lemma~\ref{lem:domcnt} over the query $\star$ and then database instance $\I$.

 \begin{lemma}
 \label{lem:spaceKstar}
     The space of the constructed index is $\O(N+N^k/\alpha^k)$ and its preprocessing time is $\O(N+N^{k+1}/\alpha^{k+1})$.
 \end{lemma}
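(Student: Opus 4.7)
The plan is to bound the space and the preprocessing time componentwise, by summing the cost of each substructure that the index assembles.

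For the space, the sorted lists $S_1,\ldots,S_k$ and their block partitions $\block_{i,j}$ take $O(N)$ space. Each range tree $\rangetree_i$ is built on the $N/\alpha$ intervals in $I_i$ and therefore has $O(N/\alpha)$ nodes; since the block $\block_i^{(u)}$ at each internal node equals the union of its children's blocks, a tuple of $R_i$ is replicated at the $O(\log(N/\alpha))$ ancestors of the leaf that contains it, giving $\O(N)$ space per tree and $\O(N)$ across all $k$ trees. The count table stores one integer per combination of nodes picked one from each tree, a total of $\prod_{i=1}^{k}|\rangetree_i|=O((N/\alpha)^{k})$ entries. Finally, the primitive index $\primitive$ from Lemma~\ref{lem:domcnt} occupies $\O(N)$ space. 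Summing these pieces yields the claimed $\O(N+N^{k}/\alpha^{k})$ bound.

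For the preprocessing time, sorting the relations costs $O(N\log N)$, building the range trees together with the node-level blocks costs $\O(N)$ in total, and $\primitive$ is built in $\O(N)$ time by Lemma~\ref{lem:domcnt}. The dominant cost is populating the count table. My plan is to compute $n_{u_1,\ldots,u_k}$ bottom-up: at each leaf combination obtain the entry by invoking $\primitive.\mathsf{count}(b,\,I_1^{(u_1)}\times\cdots\times I_k^{(u_k)})$ over the relevant values of $b$, and at each internal combination use the identity $n_{u_1,\ldots,u_k}=n_{u_1,\ldots,v,\ldots,u_k}+n_{u_1,\ldots,w,\ldots,u_k}$ whenever some coordinate $u_i$ is an internal node with children $v,w$, spending $O(1)$ per entry. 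The internal layer then costs $O((N/\alpha)^{k})$ in total, and for the leaf layer I would charge each $\primitive$ call to a tuple consisting of one leaf node per range tree together with one of the $O(N/\alpha)$ length-$\alpha$ windows that cover $\adom(B)$, bringing the leaf cost to $O((N/\alpha)^{k+1})$. Combining this with the preparatory $\O(N)$ terms gives the claimed $\O(N+N^{k+1}/\alpha^{k+1})$ preprocessing bound.

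The main obstacle I anticipate is the tight accounting of the leaf layer: a naive enumeration of $b\in\adom(B)$ at every one of the $(N/\alpha)^{k}$ leaf combinations would inflate the cost by a factor of $N$. My plan to fix this is to group $\adom(B)$ into $O(N/\alpha)$ windows of size $\alpha$ and, within each leaf combination, process only the windows that actually intersect the involved blocks, supported by hash-map lookups so that each relevant window contributes $O(1)$ aggregated work via $\primitive$. A final sanity check is that the polylogarithmic overheads hidden in the levels of the range trees and of $\primitive$ are absorbed by the $\O$ notation.
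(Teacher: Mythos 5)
Your space accounting is correct and essentially matches the paper's: sorted lists and blocks in $O(N)$, $O(N/\alpha)$ nodes per tree, $O((N/\alpha)^k)$ stored counts, plus $\O(N)$ for $\primitive$. (You are in fact slightly more careful than the paper in charging the replication of tuples across the $O(\log(N/\alpha))$ levels of each $\rangetree_i$; the resulting $\O(N)$ per tree is absorbed by the additive $N$ term.) Your internal-node recurrence $n_{u_1,\ldots,u_i,\ldots,u_k}=n_{u_1,\ldots,v,\ldots,u_k}+n_{u_1,\ldots,w,\ldots,u_k}$ is also valid, since the leaf blocks partition each $S_i$ and hence $\block_i^{(u_i)}$ is the disjoint union of its children's blocks.

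The genuine gap is in your leaf layer. The primitive of Lemma~\ref{lem:domcnt} only answers point queries in $b$: there is no $O(1)$-time operation that returns $\sum_{b\in W}\primitive.\mathsf{count}(b,\cdot)$ for a window $W$ of $\alpha$ values of $B$ (such an operation would essentially be the $(k{+}1)$-dimensional counting structure you are trying to build, so assuming it is circular). Consequently each relevant (leaf combination, window) pair costs up to $\alpha$ point queries rather than $O(1)$, and your charging collapses to the naive bound: a leaf combination $(u_1,\ldots,u_k)$ needs up to $|\pi_{B}(\block_1^{(u_1)})|\le\alpha$ calls, for a worst-case total of $\Theta\left((N/\alpha)^k\cdot\alpha\right)=\Theta\left(N^k/\alpha^{k-1}\right)$. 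This exceeds the claimed $\O(N+N^{k+1}/\alpha^{k+1})$ throughout the regime $\sqrt{N}<\alpha<N$; for instance, with $k=2$ and $\alpha=N^{3/4}$ your leaf cost is $N^{5/4}$ against a claimed bound of $\O(N)$. The paper takes a different route that avoids enumerating $\adom(B)$ altogether: each of the $O((N/\alpha)^k)$ entries $n_{u_1,\ldots,u_k}$ is obtained by a single invocation of a linear-time range-counting oracle for join queries (from~\cite{agarwal2025computing}) at a stated cost of $\O(N/\alpha)$ per entry, which multiplies out to $\O(N^{k+1}/\alpha^{k+1})$. To repair your argument you would need either to adopt that per-entry oracle for the leaf combinations, or to strengthen $\primitive$ so that it genuinely supports constant-time aggregation over a range of $B$-values.
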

 
%\vspace{-1em}
\paragraph{Query procedure}
Given a query rectangle $\rect\in \Re^k$, let $\rect_i=\pi_{A_i}(\rect)=[l_i,r_i]$ be $\rect$'s interval on $A_i$.
We search on $\rangetree_i$ using the range $\rect_i$. Let $\mathcal{I}_i\subseteq \rect_i$ be the largest interval that does not partially intersect any block in the leaf nodes of the tree. Notice that it might be the case that $\mathcal{I}_i=\emptyset$. 
There are $\mu_i=\O(1)$ canonical nodes $U_i=\{u_{(i,1)}, \ldots, u_{(i,\mu_i)}\}$ in $\rangetree_i$ such that 
$\{r\in R_i\mid \pi_{A_i}(r)\in \mathcal{I}_i\}=\bigcup_{h\in[\mu_i]}\block_{i}^{(u_{(i,h)})}$.
%$\mathcal{I}_i\cap \pi_{A_i}(R_i)=\pi_{A_i}\left(\bigcup_{h\in[\mu_i]}\block_{i}^{(u_{(i,h)})}\right)$.
Using precomputed values, we compute the following sum over all combinations  $u_{(1,j_1)}\in U_1, \ldots, u_{(k,j_k)}\in U_k$, $\ell_1=\sum_{u_{(1,j_1)}\in U_1, \ldots, u_{(k,j_k)}\in U_k}n_{u_{(1,j_1)},\ldots, u_{(k,j_k)}}$.
By definition, $\ell_1$ computes the number of tuples in $\Q(\I)$ whose projections in $A_1,\ldots, A_k$ lie in blocks completely inside $\rect$.
It remains to handle the blocks partially intersected by $\rect$. For every $i\in[k]$, there are at most $2$ blocks in the leaf level that are partially intersected by $\rect_i$. We visit the relations in order $R_1, R_2,\ldots, R_k$. We first fix the relation $R_1$. Assume that the blocks $\block_{1,j_1}$ and $\block_{1,j'_1}$ are partially intersected by $\rect_1$. For every tuple $t\in \block_{1,j_1}\cup\block_{i,j'_1}$ such that $\pi_{A_1}(t)\in \rect_1$ we set $b=\pi_{B}(t)$ and the rectangle $\rect_t=[\pi_{A_1}(t),\pi_{A_1}(t)]\times\rect_2\times\rect_3\times\ldots\times \rect_k$. We run a query in the primitive index $\primitive$ giving the tuple $b$ and the rectangle $\rect_t$ as the input to the query. We set $x_t=\primitive.\mathsf{count}(b,\rect_t)$.
We compute,
$\ell_2^{(1)}=\sum_{t\in \block_{1,j_1}\cup \block_{1,j'_1}, \pi_{A_1}(t)\in\rect_1}x_t$.
For any other relation $R_i$ with $i\in[2,k]$, we consider the partially intersected blocks, say, $\block_{i,j_i}$ and $\block_{i,j'_i}$. For every tuple $t\in \block_{i,j_i}\cup\block_{i,j'_i}$ such that $\pi_{A_i}(t)\in \rect_i$, we set $b=\pi_B(t)$ and the rectangle $\rect_t=\mathcal{I}_1\times\mathcal{I}_2\times\ldots\times\mathcal{I}_{i-1}\times[\pi_{A_i}(t),\pi_{A_i}(t)]\times \rect_{i+1}\times \rect_k$.
We run a query in the primitive index $\primitive$, giving the tuple $b$ and the rectangle $\rect_t$ as the input to the query. We set $x_t=\primitive.\mathsf{count}(b,\rect_t)$.
We compute,
$\ell_2^{(i)}=\sum_{t\in \block_{i,j_i}\cup \block_{i,j'_i}, \pi_{A_i}(t)\in\rect_i}x_t$.
\noindent In the end, we return $\ell_1+\sum_{i\in[k]}\ell_2^{(i)}$.
%
%The correctness of our method is shown in the technical report~\cite{code}.
\fullversion{
\paragraph{Correctness}
We show that every tuple in $\star(\I)\cap \rect$ is counted exactly one.
Let $t\in \star(\I)\cap \rect$ such that $\proj_{A_i}(t)\in \mathcal{I}_i$ for every $i\in[k]$. Then, $t$ is counted once by $\ell_1$. By the definition of the range-trees $\mathcal{T}_i$, there exists a unique combination $u_{(1,j_1)}\in U_1,\ldots u_{(k,j_k)}\in U_k$ such that $\proj_{A_i}(t)\in I_i^{(u_{(i,j_i)})}$ for $i\in[k]$. Hence $t$ will be counted once by $n_{u_{(1,j_1)},\ldots, u_{(k,j_k)}}$. Then, we assume that $t\in \star(\I)\cap \rect$ satisfies $\proj_{A_i}(t)\in\mathcal{I}_i$ for $i<h$, and $h$ is the first attribute such that $\proj_{A_i}(t)\in I_{1,j_1}$ (the case where $\proj_{A_i}(t)\in I_{1,j_1'}$ is equivalent). Let $t_h$ be the tuple in $R_h$ that is joined with the rest tuples to construct $t$. By definition $t\in\rect_{t_h}$ and if $b=\pi_B(t_h)$ then $t$ is only counted once by $\mathcal{D}(b,\rect_{t_h})$. So it is counted once by $\ell_2^{(h)}$.
Finally, by definition, it is straightforward that we count only results that lie in the rectangle $\rect$.}
We show the correctness of our index in Appendix~\ref{appndx:sec4}.

    \paragraph{Query time}
   The canonical nodes $U_i$
 are derived in $\O(1)$ time, by the query procedure of range trees. Since there are $\O(1)$ canonical nodes in every tree $\rangetree_i$, all combinations of canonical nodes over the $k$ different trees are $\O(1)$, so the value $\ell_1$ is computed in $\O(1)$ time. Next, every value $\ell_2^{(i)}$ is computed by running $O(\alpha)$ queries to the primitive index $\primitive$. Every query on $\primitive$ takes $\O(1)$ time, so in total the query procedure takes $\O(k\cdot\alpha)=\O(\alpha)$ time.

In Appendix~\ref{appndx:sec4}, we straightforwardly  extend our index to handle star queries where $\mathcal{A}_i$, $\mathcal{B}$ contain more than one attributes. 
We conclude with the next theorem.

%Putting everything together, we obtain:
%\vspace{-0.5em}
\begin{theorem}\label{alg:star}
    Given a star CQ $\star$ on a database $\I$ of size $N$, and a parameter $1\leq T\leq N$, there exists an index of $\O(N+N^{k}/T^{k})$ space that is constructed in $\O(N+N^{k+1}/T^{k+1})$ time, such that given a hyper-rectangle $\rect\in \Re^{|\head(\star)|}$, it returns $\prob(\star,\I,\rect)$ in $T$ query time.
\end{theorem}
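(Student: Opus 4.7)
The plan is to instantiate the construction described just before the theorem with block size $\alpha = T$. Given that choice, Lemma~\ref{lem:spaceKstar} immediately delivers the space bound $\O(N + N^k/T^k)$ and the preprocessing-time bound $\O(N + N^{k+1}/T^{k+1})$, and the query-time analysis right above the theorem yields $\O(\alpha) = \O(T)$ per query. So the budget part is mechanical; the real content of the proof is (i) correctness of the counting procedure, and (ii) lifting from the simplified case $|\mathcal{A}_i|=|\mathcal{B}|=1$ to arbitrary constant-size attribute sets.

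For correctness I would fix an output tuple $t \in \star(\I) \cap \rect$ together with witness tuples $t_1, \ldots, t_k$ (with $t_i \in R_i$) whose join produces $t$, and case-split on the positions of the coordinates $\pi_{A_i}(t_i)$ relative to the inner interval $\mathcal{I}_i$ identified by the range tree $\rangetree_i$. If every $\pi_{A_i}(t_i)$ lies in $\mathcal{I}_i$, then by the canonical-decomposition property of range trees there is a \emph{unique} combination of canonical nodes $(u_{(1,j_1)}, \ldots, u_{(k,j_k)}) \in U_1 \times \cdots \times U_k$ whose blocks simultaneously contain the projections, so $t$ is counted exactly once inside $\ell_1$ via the precomputed count $n_{u_{(1,j_1)}, \ldots, u_{(k,j_k)}}$, and it is not counted by any $\ell_2^{(i)}$ since those iterations touch only boundary blocks. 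Otherwise, let $h$ be the smallest index where $\pi_{A_h}(t_h)$ lies in a boundary block. By construction $\rect_{t_h} = \mathcal{I}_1 \times \cdots \times \mathcal{I}_{h-1} \times [\pi_{A_h}(t_h), \pi_{A_h}(t_h)] \times \rect_{h+1} \times \cdots \times \rect_k$, and Lemma~\ref{lem:domcnt} applied with $b = \pi_\mathcal{B}(t_h)$ guarantees that $t$ is counted exactly once by the call $\primitive.\mathsf{count}(b, \rect_{t_h})$ that contributes to $\ell_2^{(h)}$; the $\mathcal{I}_i$ prefix blocks recounting by $\ell_2^{(i)}$ for $i < h$, and the singleton coordinate on $A_h$ blocks recounting by $\ell_2^{(i)}$ for $i > h$. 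Conversely, each contribution counted by the query procedure clearly corresponds to a valid join tuple lying in $\rect$, so the total equals $\prob(\star, \I, \rect)$.

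To extend to $|\mathcal{A}_i| > 1$, I would replace each one-dimensional range tree $\rangetree_i$ by an $|\mathcal{A}_i|$-dimensional range tree on $R_i$; canonical-node decomposition still yields $\O(1)$ canonical nodes per query rectangle, and partial intersection now involves an $\O(1)$-sized collection of boundary canonical blocks (each of size $\alpha$) instead of just two boundary blocks. The rest of the query procedure is unchanged, and the asymptotic space, preprocessing, and query bounds are preserved up to polylog factors absorbed by $\O(\cdot)$. For $|\mathcal{B}| > 1$, no change is needed in the primitive: Lemma~\ref{lem:domcnt} already accepts $b \in \Re^{|\mathcal{B}|}$, so we simply set $b = \pi_\mathcal{B}(t_i)$ when iterating over tuples of a boundary block.

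The main obstacle I anticipate is bookkeeping for preventing double-counting across the $\ell_2^{(i)}$ terms: the definition of $\rect_t$ at layer $i$ must use exactly the inner prefix $\mathcal{I}_1 \times \cdots \times \mathcal{I}_{i-1}$ together with a singleton coordinate on $A_i$, so that each output tuple is charged to exactly one layer $h$, namely its smallest boundary index. Checking this carefully, particularly once multi-dimensional range trees introduce $\O(1)$ boundary canonical blocks per relation rather than two boundary blocks, is an inclusion-exclusion accounting exercise rather than a new idea, but it is where subtle off-by-one errors would appear if one is not careful.
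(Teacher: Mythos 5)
Your proposal is correct and follows essentially the same route as the paper: instantiate the block construction with $\alpha = T$, invoke Lemma~\ref{lem:spaceKstar} for the space and preprocessing bounds, and prove correctness by charging each witness combination $(t_1,\ldots,t_k)$ either to the unique canonical-node combination in $\ell_1$ or to $\ell_2^{(h)}$ for its smallest boundary index $h$, exactly as in the paper's Appendix~\ref{appndx:sec4}, with the same multi-dimensional range-tree extension for $|\mathcal{A}_i|>1$. One small wording slip: for $i>h$ it is the $\mathcal{I}_h$ factor in the prefix of $\rect_{t'}$ (not the singleton on $A_h$) that excludes $t$, while for $i<h$ it is that $t_i$ lies in a fully contained block and hence is never enumerated; the charging rule you state is nevertheless the right one.
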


%The lower bound theorems in Section \ref{sec:lb} suggest that these results cannot be improved by better than a $\polylog(n)$ factor.

%\vspace{-1.5em}
\subsection{Path queries}
\label{subsec:path}
%\vspace{-0.8em}
\paragraph{Main idea} 
We construct an index for efficiently processing path queries by leveraging hierarchical data partitioning and precomputed aggregations, similarly to star queries. The index is built inductively: for $k=2$, the problem reduces to a star query, and for general $k$, we assume an index exists for $(k-1)$-path queries. The tuples in the first and last relations are sorted based on their respective attributes and divided into fixed-size blocks, encapsulated within minimal bounding intervals. One-dimensional range trees are built over these intervals, where leaf nodes store blocks, and inner nodes store unions of child node blocks. For every combination of tree nodes across attributes, we precompute and store tuple counts corresponding to block intersections.
Given a query range, we locate relevant blocks using the range trees: blocks fully contained in the range contribute directly to the count (their count has been precomputed).
Partially intersected blocks are resolved by iterating over their tuples, and for each such tuple $t$ in $R_1$ (resp. $R_k$), count the query results given that $t$ is the tuple in $R_1$ (resp $R_k$) by querying the $(k-1)$-path index.

We start with a simpler instance of the problem, where $\path$ contains exactly one attribute $A_1\in \mathcal{A}_1$, $A_2\in \mathcal{A}_2$, and $B_i\in\mathcal{B}_i$ for $i\in [k-1]$. Hence, $\path(A_1, A_2):-R_1(A_1,B_1),\ldots, R_k(B_k,A_2)$. In the end we extend our index to the general path query.

\newcommand{\ds}{\mathcal{D}}
\paragraph{Index construction}
We show an inductive construction of our new index. For $k=2$, the star query is equivalent to the $2$-path query so let $\ds_2$ be the index from Theorem~\ref{alg:star}.
Without loss of generality, assume that the index for the $\prob$ problem on any $(k-1)$-path query is known and denoted by $\ds_{k-1}$. We show the construction of an index $\ds_k$ for the $\prob$-problem on $\path$.

%The construction of the index has similarities with the index in Section~\ref{}. 

 We first sort and order all the tuples in $R_1$ (resp. $R_k$) based on their values of $A_1$ (resp. $A_{2}$) in ascending order, and store the sorted tuples of $R_1$ (resp. $R_k$), in a list denoted by $S_1$ (resp. $S_k$).
 %Let $s_{1,j}$ (resp. $s_{k,j}$) denote the $j$'th tuple in $S_1$ (resp. $S_k$) and l
 Let $\alpha$ be any integer parameter between $1$ and $N$. We divide $S_1$ (resp. $S_k$), into $\frac{N}{\alpha}$ blocks of size $\alpha$. Let $\block_1$ (resp. $\block_k$) be the set of blocks in $S_1$ (resp. $S_k$) and let $\block_{1,j}$ (resp. $\block_{k,j}$) denote the $j$-th block of $S_1$ (resp. $S_k$).
 Let $I_{1,j}$ (resp. $I_{2,j}$) be the smallest interval that contains all values in $\pi_{A_1}(\block_{1,j})$ (resp. $\pi_{A_2}(\block_{k,j})$) and let $I_1$ (resp. $I_2$) be the set of these intervals in $S_1$ (resp. $S_k$).
 We build a binary search tree ($1$-dimensional range-tree) $\rangetree_1$ (resp. $\rangetree_2$) on the intervals $I_1$ (resp. $I_2$). If $u$ is the $j$-th leaf node of $\rangetree_1$ (resp. $\rangetree_2$) then it stores the block $\block_{1}^{(u)}=\block_{1,j}$ (resp. $\block_{k}^{(u)}=\block_{k,j}$). If $u$ is an inner node of $\rangetree_1$ (resp. $\rangetree_2$), then it stores a block that includes the union of all blocks in the leaf nodes of the subtree rooted at $u$. More specifically, if $u$
 has children $v, w$, then we set $\block_{1}^{(u)}=\block_{1}^{(v)}\cup \block_{1}^{(w)}$ (resp. $\block_{k}^{(u)}=\block_{k}^{(v)}\cup \block_{k}^{(w)}$).
 Every node $u$ is also associated with the corresponding interval $I_1^{(u)}$ (resp. $I_2^{(u)}$), which is the smallest interval that contains all values in $\pi_{A_1}(\block_1^{(u)})$ (resp. $\pi_{A_{2}}(\block_k^{(u)})$).
 For every combination of nodes/blocks $u_1\in \rangetree_1, u_2\in\rangetree_2$, we compute and store $n_{u_1,u_2}=|\{t\in \path(\I)\mid 
 \pi_{\allattr_1}(t)\in \block_1^{(u_1)}, \pi_{\allattr_k}(t)\in \block_k^{(u_2)}\}|$, 
 i.e., the number of query results whose projection on $\allattr_1=\{A_1,B_1\}$ is contained in the block $\block_{1}^{(u_1)}$ and whose projection on $\allattr_{k}=\{B_k,A_2\}$ is contained in the block $\block_{k}^{(u_2)}$.
 Finally, we construct the index $\ds_{k-1}^{(1)}$ on $\Gpath_{k-1}(A_1,B_k):-R_1(A_1,B_1),\ldots, R_{k-1}(B_{k-1}, B_k)$ and the index $\ds_{k-1}^{(2)}$ on $\Gpath_{k-1}(B_1,A_{2}):-R_2(B_1,B_2),\ldots, R_{k}(B_{k}, A_{2})$.

 \begin{lemma}
 \label{lem:spaceKpath}
     The space of the constructed index is $\O(N+N^2/\alpha^2)$ and its preprocessing time is $\O(N+N^3/\alpha^3)$.
 \end{lemma}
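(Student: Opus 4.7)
The plan is to proceed by induction on $k$. For the base case $k=2$, the $2$-path query coincides with the $2$-star query, so both claimed bounds follow directly from Theorem~\ref{alg:star} instantiated with $k=2$ (giving $\O(N+N^2/\alpha^2)$ space and $\O(N+N^3/\alpha^3)$ preprocessing). For the inductive step, the plan is to bound the space and preprocessing cost of each ingredient of $\ds_k$ separately and then combine them.

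For the \textbf{space analysis}, there are four ingredients to account for. The sorted lists $S_1,S_k$ and their leaf blocks together take $\O(N)$. The $1$-dimensional range trees $\rangetree_1,\rangetree_2$ along with the stored blocks $\block_i^{(u)}$ take $\O(N)$, since at any fixed level of each tree the stored blocks partition a set of size $O(N)$ and there are $O(\log(N/\alpha))$ levels. The precomputed counts $n_{u_1,u_2}$ take one word per pair of tree nodes, contributing $O(|\rangetree_1|\cdot|\rangetree_2|)=O(N^2/\alpha^2)$. Finally, the two recursive indexes $\ds_{k-1}^{(1)}$ and $\ds_{k-1}^{(2)}$, built on $(k-1)$-path queries over total input size $O(N)$, each use $\O(N+N^2/\alpha^2)$ space by the inductive hypothesis. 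Summing yields the claimed $\O(N+N^2/\alpha^2)$.

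For the \textbf{preprocessing time}, sorting and constructing the range trees take $\O(N)$, while the two recursive constructions each cost $\O(N+N^3/\alpha^3)$ by induction. The non-trivial step is computing the $O(N^2/\alpha^2)$ precomputed counts $n_{u_1,u_2}$. The plan is to compute these only at leaf pairs of $\rangetree_1\times\rangetree_2$ and then aggregate them bottom up via the identity $n_{u_1,u_2}=\sum_{v_1,v_2} n_{v_1,v_2}$ over the at most four children-pairs, which is an $O(N^2/\alpha^2)$-time sweep. For the leaf-pair counts themselves, I would leverage the recursive index $\ds_{k-1}^{(2)}$, which is already available since the recursive indexes are constructed first: iterate over tuples $t_1\in R_1$ and, using their $B_1$-value, query $\ds_{k-1}^{(2)}$ against each of the $N/\alpha$ leaf blocks of $\rangetree_2$, accumulating the returned count into the leaf entry of $\rangetree_1$ that contains $t_1$.

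The \textbf{main obstacle} will be keeping this leaf-level preprocessing inside the $\O(N^3/\alpha^3)$ budget \emph{uniformly} in $k$. A naive per-pair query strategy scales as $(N/\alpha)^2\cdot \alpha \cdot \alpha^{k-2}$ using the recursive query time, which becomes loose for large $k$, so I expect the tight argument to require amortizing queries across many leaf pairs (for instance, by reusing the canonical-node decompositions implicit inside $\ds_{k-1}^{(2)}$) or switching to a direct materialization of the intermediate counting projection $\pi_{B_1,B_{k-1}}(R_2\Join\cdots\Join R_{k-1})$ (with multiplicities) in the regime where that is cheaper. Verifying that the recursion closes with the same $\O(N^3/\alpha^3)$ bound regardless of $k$ is the delicate accounting step.
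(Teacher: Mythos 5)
Your space analysis is correct and matches the paper's: the trees and sorted lists cost $\O(N)$, the pair counts cost $O(N^2/\alpha^2)$, and the two recursive sub-indexes contribute $2\mathcal{S}_{k-1}$, which closes to $\O(N+N^2/\alpha^2)$ since $k$ is constant. The base case via Theorem~\ref{alg:star} is also the paper's.

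The preprocessing bound is where you diverge, and you have correctly diagnosed that your own route does not close: computing the leaf-pair counts by iterating over tuples of $R_1$ and querying $\ds_{k-1}^{(2)}$ against each leaf block costs on the order of $N\cdot(N/\alpha)\cdot\alpha^{k-2}=N^2\alpha^{k-3}$, which is not bounded by $N^3/\alpha^3$ in general, and you leave the fix as an open "delicate accounting step." That is a genuine gap, but the resolution is not the amortization or intermediate materialization you speculate about --- the paper simply does not use the recursive indexes (or any leaf-to-root aggregation) to compute the $n_{u_1,u_2}$ at all. It computes each $n_{u_1,u_2}$ directly, for every one of the $O(N^2/\alpha^2)$ node pairs, by invoking the linear-time counting oracle of Agarwal et al.~\cite{agarwal2025computing} on the join instance restricted to the blocks $\block_1^{(u_1)}$ and $\block_k^{(u_2)}$, at a claimed cost of $\O(N/\alpha)$ per pair; this gives the $N^3/\alpha^3$ term of the recurrence $\mathcal{P}_k=2\mathcal{P}_{k-1}+N^3/\alpha^3$ outright, with the $2\mathcal{P}_{k-1}$ term accounting for building $\ds_{k-1}^{(1)}$ and $\ds_{k-1}^{(2)}$ exactly as you do. So to repair your proof you should replace the entire leaf-pair/query-the-sub-index machinery with a direct per-pair count using an external counting oracle; the recursive indexes are needed only at query time, not to populate the precomputed counts.
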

 
\vspace{-0.5em}
\paragraph{Query procedure}
Given a query rectangle $\rect\in \Re^2$, let $\rect_1=\pi_{A_1}(\rect)$ (resp. $\rect_2=\pi_{A_{2}}(\rect)$) be the query interval on $A_1$ (resp. $A_{2}$).
We search on $\rangetree_1$ using the range $\rect_1$ and $\rangetree_2$ using the range $\rect_2$. Let $\mathcal{I}_1\subseteq \rect_1$ (resp. $\mathcal{I}_2\subseteq \rect_2$) be the largest interval that does not partially intersect any block in the leaf nodes of the tree.
Notice that it might $\mathcal{I}_1=\emptyset$ or $\mathcal{I}_2=\emptyset$. 
There are $\mu_1=\O(1)$ (resp. $\mu_2=\O(1)$) canonical nodes $U_1=\{u_{(1,1)}, \ldots, u_{(1,\mu_1)}\}\}$ in $\rangetree_1$ (resp. $U_2=\{u_{(2,1)}, \ldots, u_{(2,\mu_2)}\}$ in $\rangetree_2$) such that $\{r\in R_1\mid \pi_{A_1}(r)\in\mathcal{I}_1\}=\bigcup_{h\in[\mu_1]}\block_{1}^{(u_{(1,h)})}$, resp. $\{r\in R_k\mid \pi_{A_2}(r)\in\mathcal{I}_2\}=\bigcup_{h\in[\mu_2]}\block_{k}^{(u_{(k,h)})}$.
%$\mathcal{I}_2\cap \pi_{A_2}(R_k)= \pi_{A_2}\left(\bigcup_{h\in[\mu_2]}\block_{k}^{(u_{(k,h)})}\right)$.
Using the precomputed values, we compute the following sum over all combinations $u_{(1,j_1)}\in U_1, u_{(2,j_2)}\in U_2$, 
$\ell_1=\sum_{u_{(1,j_1)}\in U_1, u_{(2,j_2)}\in U_2}n_{u_{(1,j_1)}, u_{(2,j_2)}}$.
By definition, $\ell_1$ computes the number of tuples in $\Q(\I)$ whose projections in $A_1, A_2$ lie in blocks completely inside $\rect$.

It remains to handle the blocks that are partially intersected by $\rect$. There are at most $2$ blocks/leaf nodes in $\rangetree_1$ (resp. $\rangetree_2$) that are partially intersected by $\rect_1$ (resp. $\rect_2$).
Assume that the blocks $\block_{1,j_1}$ and $\block_{1,j'_1}$ are partially intersected by $\rect_1$ and the blocks $\block_{k,j_k}$ and $\block_{k,j'_k}$ are partially intersected by $\rect_2$.
For every tuple $t\in \block_{1,j_1}\cup \block_{1,j'_1}$ such that $\pi_{A_1}(t)\in\rect_1$, we set the rectangle $\rect_t=[\pi_{B_1}(t),\pi_{B_1}(t)]\times \rect_2$. We run a query on $\ds_{k-1}^{(2)}$ using the range $\rect_t$. Let $\ds_{k-1}^{(2)}(\rect_t)$ be the result of this query.
Next, for every tuple $t\in \block_{k,j_k}\cup \block_{k,j'_k}$ such that $\pi_{A_{2}}(t)\in\rect_2$, we set the rectangle $\rect_t'=\mathcal{I}_1\times [\pi_{B_k}(t), \pi_{B_k}(t)]$. We run a query on $\ds_{k-1}^{(1)}$ using the range $\rect_t'$. Let $\ds_{k-1}^{(1)}(\rect_t')$ be the result of this query.
We compute,
$\ell_2=\sum_{t\in\block_{1,j_1}\cup \block_{1,j'+1}, \pi_{A_1}(t)\in\rect_1}\ds_{k-1}^{(2)}(\rect_t)+\sum_{t\in\block_{k,j_k}\cup \block_{k,j'_k}, \pi_{A_{2}}(t)\in\rect_2}\ds_{k-1}^{(1)}(\rect_t')$
\noindent In the end, we return $\ell_1+\ell_2$.
%The correctness of our method is shown in the technical report~\cite{code}.
We show the correctness of our index in Appendix~\ref{appndx:sec4}.

\paragraph{Query time}
Let $\mathcal{Y}_k$ be the query time of our index. The value $\ell_1$ is computed in $\O(1)$ since $|U_1|, |U_2|=\O(1)$. Furthermore, by the construction of blocks, it holds that $|\block_{1,j_1}\cup \block_{1,j'_1}|\leq 2\alpha$ and $|\block_{k,j_k}\cup \block_{k,j'_k}|\leq 2\alpha$. Hence, we run at most $4\alpha$ queries on $\ds_{k-1}^{(1)}$ and $\ds_{k-1}^{(2)}$. So, the query time is $\mathcal{Y}_k\leq 4\alpha\mathcal{Y}_{k-1}+\O(1)$. By definition, $\mathcal{Y}_2=\O(\alpha)$ and $k$ is a constant so $\mathcal{Y}_k=\O(4^{k-1}\alpha^{k-1})=\O(\alpha^{k-1})$. 

Overall, we constructed an index for the $\prob$ problem on $k$-path queries, with query time $\O(\alpha^{k-1})$, space $\O(N+N^2/\alpha^2)$, and preprocessing time $\O(N+N^3/\alpha^3)$. Equivalently, if we set $\beta=\alpha^{k-1}$, our index has query time $\O(\beta)$, space $\O(N+N^2/\beta^{\frac{2}{k-1}})$, and preprocessing time $\O(N+N^3/\beta^{\frac{3}{k-1}})$.
The extension to $k$-path queries having a set of attributes $\mathcal{B}_i$ instead of one attribute $B_i$ (and $\mathcal{A}_1, \mathcal{A}_2$ instead of $A_1, A_2$, respectively) is straightforward, as in Subsection~\ref{subsec:kstar}.
    \begin{theorem}\label{alg:path}
    Given $\path$ on a database $\I$ of size $N$, and a parameter $1\leq T\leq N$, there exists an index of $\O(N+N^{2}/T^{\frac{2}{k-1}})$ space that is constructed in $\O(N+N^{3}/T^{\frac{3}{k-1}})$ time, such that given a hyper-rectangle $\rect\in \Re^{|\head(\star)|}$, it returns $\prob(\path,\I, \rect)$ in $T$ query time.
\end{theorem}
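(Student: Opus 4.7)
The plan is to prove the theorem by induction on $k$, formalizing the inductive construction described immediately above the statement. The base case $k=2$ is immediate because a $2$-path coincides with a $2$-star, so Theorem~\ref{alg:star} with $k=2$ already delivers an index matching all three target bounds. For the inductive step I would assume an index $\ds_{k-1}$ exists for every $(k-1)$-path with the claimed space-query tradeoff, and build $\ds_k$ as follows: sort $R_1$ and $R_k$ by their head attributes $A_1$ and $A_2$, chop each sorted list into $N/\alpha$ blocks of size $\alpha$, and build one-dimensional range trees $\rangetree_1, \rangetree_2$ over the associated bounding intervals; then, for every pair of nodes $(u_1,u_2) \in \rangetree_1 \times \rangetree_2$, precompute the counter $n_{u_1,u_2}$ giving the number of $\path(\I)$ tuples whose projections onto $\allattr_1$ and $\allattr_k$ lie in $\block_1^{(u_1)}$ and $\block_k^{(u_2)}$; finally, recursively build the two $(k-1)$-path indexes $\ds_{k-1}^{(1)}$ and $\ds_{k-1}^{(2)}$ obtained by dropping $R_k$ (promoting $B_k$ to the head) and dropping $R_1$ (promoting $B_1$ to the head), each with the same parameter $\alpha$.

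At query time, given $\rect = \rect_1 \times \rect_2$, I would first descend $\rangetree_1$ and $\rangetree_2$ to identify the maximal block-aligned subrectangle $\mathcal{I}_1 \times \mathcal{I}_2 \subseteq \rect$, summing the precomputed $n_{u_1,u_2}$ over the $\O(1)$ canonical node pairs covering it to obtain a partial answer $\ell_1$. I would then sweep the at-most-four boundary blocks cut by $\rect$: for every tuple $t$ in a cut block of $R_1$ with $\pi_{A_1}(t) \in \rect_1$, query $\ds_{k-1}^{(2)}$ on the rectangle $[\pi_{B_1}(t), \pi_{B_1}(t)] \times \rect_2$; for every tuple $t$ in a cut block of $R_k$ with $\pi_{A_2}(t) \in \rect_2$, query $\ds_{k-1}^{(1)}$ on the rectangle $\mathcal{I}_1 \times [\pi_{B_k}(t), \pi_{B_k}(t)]$. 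The final answer is $\ell_1$ plus the sum of all these recursive counters.

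The hard part will be verifying that each result $t \in \path(\I) \cap \rect$ is counted exactly once. I would handle this by a disjoint case analysis on where $(\pi_{A_1}(t), \pi_{A_2}(t))$ lands relative to $\mathcal{I}_1 \times \mathcal{I}_2$: the fully-interior case is counted uniquely by one canonical pair inside $\ell_1$; any case where $\pi_{A_1}(t)$ escapes $\mathcal{I}_1$ is counted uniquely by the $R_1$ sweep via the unique $R_1$-witness of $t$; the remaining case, $\pi_{A_1}(t) \in \mathcal{I}_1$ and $\pi_{A_2}(t) \notin \mathcal{I}_2$, is counted uniquely by the $R_k$ sweep. The pitfall is the ``mixed'' regime in which both projections escape their interior intervals: such a result is already counted by the $R_1$ sweep and must not be recounted by the $R_k$ sweep. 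The asymmetric choice of using $\mathcal{I}_1$ (not $\rect_1$) as the first factor in the $R_k$-sweep rectangle is precisely what excludes it, and this asymmetry is the only subtle point in the correctness argument.

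For the resource accounting I would invoke Lemma~\ref{lem:spaceKpath} together with the recurrences induced by the recursive subindexes. The space recurrence $S_k(N,\alpha) = \O(N + N^2/\alpha^2) + 2\,S_{k-1}(N,\alpha)$ anchored at $S_2 = \O(N + N^2/\alpha^2)$ solves to $\O(N + N^2/\alpha^2)$ since $k$ is constant, and an analogous recurrence gives preprocessing $\O(N + N^3/\alpha^3)$. The query-time recurrence $\mathcal{Y}_k \le 4\alpha\,\mathcal{Y}_{k-1} + \O(1)$ with $\mathcal{Y}_2 = \O(\alpha)$ unrolls to $\mathcal{Y}_k = \O(\alpha^{k-1})$. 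Setting $\alpha = T^{1/(k-1)}$ (so $T = \alpha^{k-1}$) converts these into the stated bounds: query time $\O(T)$, space $\O(N + N^2/T^{2/(k-1)})$, and preprocessing $\O(N + N^3/T^{3/(k-1)})$. Extending the argument from scalar join attributes $B_i$ to attribute sets $\mathcal{B}_i$ is routine and mirrors the extension performed for star queries in Subsection~\ref{subsec:kstar}.
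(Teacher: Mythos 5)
Your proposal is correct and follows essentially the same route as the paper: the same inductive construction anchored at the $2$-star base case, the same block/range-tree decomposition of $R_1$ and $R_k$ with precomputed pair counters, the same boundary-block sweep delegating to the two recursive $(k-1)$-path indexes, and the same recurrences resolved by setting $\alpha = T^{1/(k-1)}$. Your explicit observation that using $\mathcal{I}_1$ rather than $\rect_1$ in the $R_k$-sweep rectangle is what prevents double-counting in the mixed regime is exactly the (only) delicate point in the paper's correctness argument in Appendix~\ref{appndx:sec4}.
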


%In Appendix~\ref{appndx:additional}, for CQs where there exists $j\in[m]$ such that $\head(\Q)\subseteq \allattr_j$, we show (near-optimal) indexes with $\O(N)$ space and $\O(1)$ query time for the $\prob$ and $\nnprob$ problems.

\fullversion{
\subsection{Additional optimal queries}
A trivial lower bound for the $\prob$ and $\nnprob$ problems is an index with $O(n)$ space and $\O(1)$ query time. In this subsection, we show some conjunctive queries where such an optimal index can be constructed.

\begin{theorem}
    Given a CQ $\Q$ on a database $\I$ of size $N$, if there exists $j\in[m]$ such that $\head(\Q)\subseteq \allattr_j$, then there exists an index with $\O(N)$ space and $\O(1)$ query time for the $\prob$ and the approximate  $\nnprob$ problem.
\end{theorem}
\begin{proof}
    For any tuple $t\in\proj_{\head(\Q)}(R_j)$ we compute $w_t=|\{t'\in\Q(\I)\mid \pi_{\head(Q)}(t')=t\}|$. We construct a $|\head(\Q)|$-dimensional range tree for the weighted version of range counting queries, over $\proj_{\head(\Q)}(R_j)$. The index has $\O(|R_j|)=\O(N)$ space. Given a query rectangle $\rect$ we run a (weighted) range counting query in $\O(1)$ time. Similarly, for the $\nnprob$, after keeping only the non-dangling tuples in the database we construct an approximate nearest neighbor index (quadtree) over the points in $\proj_{\head(\Q)}(R_j)$. The result follows.
\end{proof}
}

%\vspace{-2em}
\section{General Indexes}\label{sec:general}
%\vspace{-0.5em}
We develop indexes for range counting or approximate nearest neighbor queries on the results of any conjunctive query $\Q$.
%query ranges (or query rectangle) for $\prob$ and the query point for $\nnprob$ are defined over any attribute, i.e., the query rectangle $\rect\in \Re^{|\allattr|}$ and the query point $q\in\Re^{|\allattr|}$.
Before we continue, we argue that $\prob$ %and $\nnprob$ 
on conjunctive queries is equivalent to $\prob$ %and $\nnprob$ 
on full conjunctive queries (join queries) under bag semantics. Hence, for simplicity, in the next subsections, we present the indexes and the analysis for join queries. All missing proofs can be found in Appendix~\ref{appndx:sec5}.
%\vspace{-0.5em}
\begin{lemma}
\label{lem:equiv}
Let $\I$ be a database instance.
Assume an index for the $\prob$ problem over a full CQ $\Q'(\bigcup_{i\in [m]}\mathbf{A}_i):-R_1(\mathbf{A}_1), R_2(\mathbf{A}_2), \cdots, R_m(\mathbf{A}_m)$ with query time $T$ space $S$, and preprocessing time $P$. The same index, with space $S$ and preprocessing time $P$, can be used over $\Q(\mathbf{y}):-R_1(\mathbf{A}_1), R_2(\mathbf{A}_2), \cdots, R_m(\mathbf{A}_m)$, with $\mathbf{y}\subset \bigcup_{i\in [m]}\mathbf{A}_i$, such that given a query rectangle $\rect\in\Re^{|\mathbf{y}|}$,
the value of $\prob(\Q,\I,\rect)$ is computed in $T$ time.

%$\prob$ %and $\nnprob$ on CQs under bag semantics, is equivalent to $\prob$ on full conjunctive queries.
\end{lemma}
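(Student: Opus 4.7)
The plan is to reduce any $\prob$ query on the non-full CQ $\Q$ to a single $\prob$ query on the full CQ $\Q'$ by extending the query rectangle with trivial intervals on the non-output attributes. The key observation that powers this reduction is that under bag semantics, projection preserves multiplicities, so counting tuples in $\Q(\I) \cap \rect$ is the same as counting tuples in the full join that match $\rect$ on the output attributes, without any constraint on the remaining attributes.

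More concretely, first I would take the given index built for $\Q'$ on the instance $\I$; its space is $S$ and its preprocessing time is $P$, and these depend only on $\I$ and on the hypergraph of $\Q'$, which coincides with that of $\Q$. So no additional construction is needed. Second, given a query rectangle $\rect = \delta_1 \times \cdots \times \delta_{|\mathbf{y}|}$ over the output attributes $\mathbf{y}$, I would construct an extended rectangle $\rect' \in \Re^{|\bigcup_i \mathbf{A}_i|}$ by setting the interval on each attribute $A \in \mathbf{y}$ equal to its interval in $\rect$, and the interval on each attribute $A \in (\bigcup_i \mathbf{A}_i) \setminus \mathbf{y}$ equal to $(-\infty, +\infty)$ (in practice, to an interval covering the entire active domain $\adom^{\I}(A)$, which can be precomputed in $O(N)$ time during preprocessing and does not affect the asymptotic bounds). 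Third, I would invoke the index on query $\rect'$ and return its answer in time $T$.

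To argue correctness, I would observe that for every tuple $t' \in \Join_{i \in [m]} R_i^{\I}$, the projection $\pi_{\mathbf{y}}(t')$ lies in $\rect$ if and only if $t'$ lies in $\rect'$, because $\rect'$ places no constraint on attributes outside $\mathbf{y}$. Under bag semantics, $\Q(\I) = \proj_{\mathbf{y}}(\Q'(\I))$ as a multiset, so
\[
|\Q(\I) \cap \rect| \;=\; |\{\, t' \in \Q'(\I) : \pi_{\mathbf{y}}(t') \in \rect \,\}| \;=\; |\Q'(\I) \cap \rect'|,
\]
where the first equality uses the fact that projection under bag semantics keeps one copy per originating tuple in the join, and the second uses the equivalence above. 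Hence $\prob(\Q, \I, \rect) = \prob(\Q', \I, \rect')$, and the right-hand side is returned by the assumed index in time $T$.

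The only subtlety, and the only place where care is required, is handling the infinite intervals on the non-output attributes: if the index requires finite intervals (as range trees do), one simply replaces $(-\infty, +\infty)$ by $[\min \adom^{\I}(A), \max \adom^{\I}(A)]$ for each non-output attribute $A$, which can be stored once during preprocessing at no asymptotic cost. Since no change is made to the preprocessing of the index itself and exactly one query is issued on $\Q'$, the bounds $S$, $P$, and $T$ are preserved.
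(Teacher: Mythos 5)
Your proposal is correct and follows essentially the same route as the paper: extend the query rectangle with unbounded (or active-domain-covering) intervals on the non-output attributes and issue a single query to the index for the full CQ, using the fact that under bag semantics the multiset $\Q(\I)$ is the projection of $\Q'(\I)$ so the two counts coincide. The extra remark about replacing $(-\infty,+\infty)$ by $[\min\adom^{\I}(A),\max\adom^{\I}(A)]$ is a harmless implementation detail the paper leaves implicit.
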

Recall that the generalized $k$-star CQ query is defined as $\Q(\y):-R_1(\mathcal{A}_1,\mathcal{B}), \ldots, R_k(\mathcal{A}_k,\mathcal{B})$, where $\y$ is any subset of $\mathcal{B}\bigcup\left(\bigcup_{i\in[k]}\mathcal{A}_i\right)$.
%($\y$ might contain head attributes).
In the next subsection, we first show an index for the $\prob$ problem on $k$-star join queries. By Lemma~\ref{lem:equiv}, it equivalently handles generalized $k$-star CQs. Then we extend this index to any hierarchical query. Finally, we use the hierarchical generalized hypertree decomposition to develop an index for any conjunctive query, under bag semantics.
In all cases, we extend the results to the $\nnprob$ problem.

%\vspace{-1em}
\subsection{Generalized $k$-Star Queries}
\label{subsec:Gkstar}
%\vspace{-0.5em}
%Without loss of generality, we consider the star join query $$\star(\mathcal{A}_1,\ldots, \mathcal{A}_k, \mathcal{B}):-R_1(\mathcal{A}_1,\mathcal{B}),\ldots, R_k(\mathcal{A}_k,\mathcal{B}).$$
%
We describe an index for answering $\prob$ and $\nnprob$ on $\star(\mathcal{A}_1,\ldots, \mathcal{A}_k, \mathcal{B}):-R_1(\mathcal{A}_1,\mathcal{B}),\ldots, R_k(\mathcal{A}_k,\mathcal{B})$ in time $T$ using $\O(N^{k}/T^{k-1})$ space.
%The hardness theorems in Section~\ref{sec:lb} suggest that this space cannot be significantly improved.

\paragraph{Main idea}
Our index construction partitions values of attribute(s) $\mathcal{B}$ into two categories: \emph{heavy} and \emph{light}, based on the number of tuples associated with each value.
%Light values correspond to those appearing in fewer than $N/\alpha$ tuples, while heavy values appear in at least $N/\alpha$ tuples.
We precompute and store all join results where the projection on $\mathcal{B}$ belongs to the light set, constructing a range counting index $D^L$ over these results. For each heavy value $b_i$, we construct separate range counting indexes $D^H_{i,j}$ for each relation $R_j$. Given a query rectangle, we first use $D^L$ to count results with light projections. Then, for each heavy value, we compute the number of tuples in each relation within the query range using $D^H_{i,j}$ and aggregate these counts across relations. The final result is the sum of both counts.
Intuitively, light values generate only a small number of join results (so $D^L$ has small size) and the heavy values are limited (making their traversal efficient).

\paragraph{Index construction}
    Let $\alpha$ be any integer parameter between $1$ and $N$, and let $R_j^{b_i} = \{t \in R_j | \pi_{\mathcal{B}}(t) = b_i \}$ be the set of tuples in relation $R_j$ whose projection on $\mathcal{B}$ is $b_i$, for each $b_i \in \adom(\mathcal{B})$ and each $j \in [k]$. Let $T_i = \bigcup_{j \in [k]} R_j^{b_i}$. We separate $b_i \in \adom(\mathcal{B})$ into two different sets \emph{heavy} and \emph{light}. Let $H = \{b_i \in \adom(\mathcal{B})| \frac{N}{\alpha} \leq |T_i|\}$ be the set of heavy $b_i$'s and let $L = \{b_i \in \adom(\mathcal{B})| \frac{N}{\alpha} > |T_i|\}$ be the set of light $b_i$'s. We construct all query results having a light projection on the $\mathcal{B}$ attribute(s) i.e., $Q_L = \{t \in \star(\I) | \pi_{\mathcal{B}}(t) \in L\}=\bigcup\limits_{b_i \in L}(\Join_{j \in [k]} R_j^{b_i})$.
    %We know that $Q_L = \bigcup\limits_{b_i \in L}(\bowtie_{j \in [k]} R_j^{b_i})$. Thus, in otder to compute $Q_L$, we can fix a $b_i \in L$, and then compute $\bowtie_{j \in [k]} R_j^{b_i})$.
    Then we construct a range counting index $D^L$ (from Theorem \ref{rect}) over the points in $Q_L$. Next, we go through the heavy  values one by one and for each $b_i \in H$, compute $R_j^{b_i}$ for each $j\in [k]$. For each $b_i \in H$, and each $j \in [k]$, we construct a separate range counting index $D^H_{i,j}$ (from Theorem \ref{rect}) over the points in $R_j^{b_i}$.
 %   In this stage, we are done with the preprocessing step and can move to the query phase.

  \paragraph{Query procedure}
  We are given a query rectangle $\rect$. For each $j\in[k]$, let $\pi_{\allattr_j}(\rect)$ be the rectangle defined by the linear predicates with respect to the attributes in $\allattr_j$.
  Using the range tree $D^L$, we compute $n_1=D^L(\rect)$ which is the number of join results in the query rectangle with light projections on $\mathcal{B}$.
Next, compute the number of join results in the query rectangle with heavy projections on $\mathcal{B}$. For each $b_i \in H$ and $j\in[k]$, we use the range tree $D^H_{i,j}$ to compute $h_{i,j}=D^H_{i,j}(\pi_{\allattr_j}(\rect))$, i.e., the number of tuples in $R_j$ in rectangle $\pi_{\allattr_i}(\rect)$ whose projection on $\mathcal{B}$ is $b_i$.
Let $n_2 = \sum_{b_i \in H}(\prod_{j \in [k]}h_{i,j})$.
We return $n_1+n_2$.
Due to page limits, we present the correctness proof, as well as the query time and space analysis of our index, in Appendix~\ref{appndx:sec5}.
\fullversion{
    \paragraph{Correctness}
    $D^L(\rect)$ counts every tuple $t\in\star(\mathcal{A}_1,\ldots,\mathcal{A}_k,\mathcal{B})\cap \rect$ with light projection on $\mathcal{B}$, i.e., $\pi_{\mathcal{B}}(t)\in L$. Similarly, by definition, $\sum_{b_i \in H}(\prod_{j \in [k]}h_{i,j})$ counts all tuples $t\in\star(\mathcal{A}_1,\ldots,\mathcal{A}_k,\mathcal{B})\cap \rect$ such that $\pi_{\mathcal{B}}(t)=b_i\in H$. The result follows.
    }

   % By Lemma~\ref{lem:equiv}, our index equivalently handles generalized $k$-star CQs where the head attributes might contain join attributes. Putting everything together, we have the next theorem.

%    Overall, as shown above, to answer a query in $\O(\alpha)$, we use $\O(\frac{N^k}{\alpha^{k-1}})$ additional space.
%\vspace{-0.7em}
\begin{theorem}\label{range}
    Given a generalized $k$-star CQ $\Q$ on a database $\I$ of size $N$, and a parameter $1\leq T\leq N$, there exists an index of $\O(N+N^{k}/T^{k - 1})$ space and the same preprocessing time,  such that given a hyper-rectangle $\rect\in \Re^{|\head(\Q)|}$, it returns $\prob(\Q,\I,\rect)$ in $T$ query time.
\end{theorem}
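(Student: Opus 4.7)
The plan is to verify correctness first, then bound the space (the delicate step), and finally bound the preprocessing and query time, choosing $\alpha=T$ at the end.

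For correctness, I would argue that every tuple $t\in \star(\I)\cap\rect$ has a unique value $b=\pi_{\mathcal{B}}(t)$ which is either light or heavy. If $b\in L$, then $t\in Q_L$ by construction, so $D^L(\rect)$ counts $t$ exactly once. If $b=b_i\in H$, then under bag semantics the join tuples with $\pi_{\mathcal{B}}=b_i$ lying in $\rect$ are in bijection with the Cartesian product $\prod_{j\in[k]}\{r\in R_j^{b_i}: r\in \pi_{\allattr_j}(\rect)\}$, and hence number exactly $\prod_{j}h_{i,j}$. Summing over all $b_i\in H$ and adding the $D^L$ count gives $\prob(\Q,\I,\rect)$.

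The main estimate is on $|Q_L|$. Every light $b_i$ satisfies $|T_i|<N/\alpha$ and $\sum_{b_i\in L}|T_i|\le N$. Since the number of join tuples with $\mathcal{B}$-value $b_i$ is $\prod_{j\in[k]}|R_j^{b_i}|\le |T_i|^k$ (each factor is at most $|T_i|$),
\[
|Q_L|\;\le\; \sum_{b_i\in L}|T_i|^k \;\le\; \left(\frac{N}{\alpha}\right)^{k-1}\sum_{b_i\in L}|T_i| \;\le\; \frac{N^k}{\alpha^{k-1}}.
\]
Hence $D^L$ occupies $\O(N^k/\alpha^{k-1})$ space by Theorem~\ref{rect}. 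The indexes $\{D^H_{i,j}\}$ together cover $\sum_{b_i\in H}\sum_{j\in[k]}|R_j^{b_i}|=\sum_{b_i\in H}|T_i|\le N$ tuples, contributing an additional $\O(N)$. Preprocessing amounts to materializing $Q_L$ (by enumerating the join tuple-by-tuple for each light $b_i$) and building all range trees, which costs $\O(N+|Q_L|)=\O(N+N^k/\alpha^{k-1})$.

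For query time, $D^L(\rect)$ answers in $\O(1)$. The number of heavy values is at most $N/(N/\alpha)=\alpha$, and each requires $k=O(1)$ range counting queries on the indexes $D^H_{i,j}$, each costing $\O(1)$ by Theorem~\ref{rect}. Hence the query cost is $\O(\alpha)$. Setting $\alpha=T$ yields query time $T$ and space $\O(N+N^k/T^{k-1})$ as claimed. The principal obstacle is the $|Q_L|$ bound, which relies on splitting $|T_i|^k=|T_i|^{k-1}\cdot|T_i|$ and applying the uniform light-threshold $|T_i|\le N/\alpha$ to the first factor while exploiting $\sum_{b_i\in L}|T_i|\le N$ on the second; without this heavy-light decomposition, naive materialization would already cost $\Omega(N^k)$ and the tradeoff would collapse.
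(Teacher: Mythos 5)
Your proposal is correct and follows essentially the same route as the paper: a heavy--light split on $\adom(\mathcal{B})$ with threshold $N/\alpha$, a range tree $D^L$ over the materialized light results, per-heavy-value per-relation range trees whose counts are multiplied, and the choice $\alpha=T$. The only cosmetic differences are that you bound $|Q_L|$ by summing $|T_i|^{k-1}\cdot|T_i|$ over light values while the paper sums the per-tuple degree bound $c_t\le (N/\alpha)^{k-1}$ over all input tuples (the same estimate), and that you materialize $Q_L$ by direct Cartesian-product enumeration where the paper invokes the generic join algorithm.
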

%\vspace{-0.7em}
%Using the same ideas we now present the following theorem.

In Appendix~\ref{appndx:annExt}, we show how to extend the index to the $\nnprob$ problem.
%\vspace{-0.7em}
\begin{theorem}
\label{thm:nnStar}
      Given a generalized $k$-star CQ $\Q$ on a database $\I$ of size $N$, and a parameter $1\leq T\leq N$, there exists an index of $\O(N+N^{k}/T^{k - 1})$ space and the same preprocessing time,  such that given a query point $q\in \Re^{|\head(\Q)|}$, it returns $\nnprob(\Q,\I,q)$ in $T$ query time.
\end{theorem}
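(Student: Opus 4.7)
My plan is to mirror the construction from Theorem~\ref{range}, replacing every range counting index (Theorem~\ref{rect}) with an approximate nearest neighbor quadtree (Theorem~\ref{nn}), and then verify correctness via the separability of the squared Euclidean distance on the Cartesian product induced by each heavy $\mathcal{B}$-value.

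First I would use the same heavy/light decomposition of $\adom(\mathcal{B})$ with threshold $N/\alpha$: heavy values form $H = \{b_i : |T_i|\geq N/\alpha\}$ and light values form $L$. For the light side, I would materialize $Q_L = \bigcup_{b_i \in L}(\Join_{j\in[k]} R_j^{b_i})$, project each tuple onto $\head(\Q)$, and build one quadtree $D^L$ over these points for $\eps$-approximate NN. For every $b_i \in H$ and every $j \in [k]$, I would build a separate quadtree $D^H_{i,j}$ over $\pi_{\mathcal{A}_j \cap \head(\Q)}(R_j^{b_i})$.

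Given a query $q \in \Re^{|\head(\Q)|}$, the procedure asks $D^L$ for a candidate $p^L$ for the light part; then, for each $b_i \in H$, it queries each $D^H_{i,j}$ independently using the restriction of $q$ to the coordinates in $\mathcal{A}_j \cap \head(\Q)$ to obtain a point $p^{(i)}_j$, and assembles the candidate $p^{(i)}$ by concatenating the $p^{(i)}_j$'s together with the fixed value $b_i|_{\mathcal{B}\cap\head(\Q)}$. The output is the closest among $p^L$ and $\{p^{(i)}\}_{b_i \in H}$ to $q$.

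The key correctness step is the approximation guarantee for each heavy $b_i$. The set of query results with $\pi_{\mathcal{B}}(t) = b_i$ is a Cartesian product across the $R_j^{b_i}$'s, and under Euclidean distance
\[
    \dist^2(q,\pi_{\head(\Q)}(t)) \;=\; \sum_{j\in[k]} \bigl\|q|_{\mathcal{A}_j \cap \head(\Q)} - \pi_{\mathcal{A}_j \cap \head(\Q)}(t)\bigr\|^2 + \bigl\|q|_{\mathcal{B}\cap\head(\Q)} - b_i|_{\mathcal{B}\cap\head(\Q)}\bigr\|^2,
\]
so the optimal tuple anchored at $b_i$ is obtained by independently minimizing each of the $k$ summands. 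If each $D^H_{i,j}$ returns a $(1+\eps)$-approximate NN, multiplying by $(1+\eps)^2$ on each squared term and summing preserves the factor $(1+\eps)^2$, and a square root gives a $(1+\eps)$-approximate NN for the entire Cartesian product anchored at $b_i$. Taking the minimum across heavy $b_i$'s and the light candidate $p^L$ yields an overall $(1+\eps)$-approximate NN for $\Q(\I)$.

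The space, preprocessing, and query-time analyses reuse those of Theorem~\ref{range}: $|Q_L| = O(N^k/\alpha^{k-1})$ by the same bound used for light values, the heavy-side quadtrees sum to $\O(N)$ across all $b_i \in H$ and $j \in [k]$, and each quadtree is built in time linear in its input, giving preprocessing and space $\O(N + N^k/\alpha^{k-1})$. The query time is $\O(1)$ for $D^L$ plus $\O(|H| \cdot k) = \O(\alpha)$ across heavy values, since $|H| \leq \alpha$. Setting $\alpha = T$ yields the claimed bounds. I expect the main obstacle to be the careful coordinate-wise approximation argument together with the notational bookkeeping needed to accommodate an arbitrary head $\head(\Q) \subseteq \mathcal{B} \cup \bigcup_j \mathcal{A}_j$; once the squared-distance decomposition is in place, the rest is routine.
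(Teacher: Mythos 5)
Your proposal is correct and follows essentially the same route as the paper's own proof in Appendix~\ref{appndx:annExt}: the same heavy/light decomposition with quadtrees replacing range trees, per-relation quadtrees $D^H_{i,j}$ for each heavy value, and the approximation guarantee established exactly as in Lemma~\ref{lem:eNN} via the coordinate-wise decomposition of the squared Euclidean distance over the Cartesian product anchored at each heavy $b_i$. Your explicit handling of the extra $\mathcal{B}$-coordinate term and the projection onto $\head(\Q)$ matches the paper's remark on extending the index from join queries to general heads.
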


%\vspace{-2em}
\subsection{Hierarchical Queries}
\label{subsec:hierarchical}
%\vspace{-0.7em}
In this section, we study the $\prob$ and $\nnprob$ problems on hierarchical queries.  
Hierarchical queries can be thought of as a generalization of the star queries. To answer $\pone$ queries, we build the attribute tree that always exists for hierarchical queries.
\vspace{-0.3em}
\begin{definition}[Attribute Tree]
    The \textit{attribute tree} of a hierarchical CQ $\Q$ over the set of relations $\allrel$ and the attributes $\allattr$, denoted by $\T(\Q)$ is a rooted tree such that each node corresponds to an attribute in $\allattr$; $x$ is a descendant of $y$ if $E_x \subseteq E_y$; 
    and each path from the root to a leaf node corresponds to a relation in $\allrel$.  
\end{definition}
\vspace{-0.3em}
If $\T(\Q)$ is not connected, it is easy to see that we can solve $\pone$ on each connected component and then multiply all the answers to get an answer for the whole query. Thus, we only focus on the case where $\T(\Q)$ is connected and propose the following index for solving $\pone$ on full hierarchical queries. 

\paragraph{Main idea}
Our index construction follows a similar strategy to the previous method, where values of a key attribute are categorized as \emph{heavy} or \emph{light} to optimize query performance. However, instead of partitioning the values based on attributes $\mathcal{B}$, we partition the the tuples of the root node $r$ in the attribute tree as light and heavy.
We precompute all join results where the projection on $r$ belongs to the light set, constructing a range counting index $D^L$ over these results.
For a heavy value $r_i$
instead of indexing each relation separately as in the previous approach, we group relations based on shared child attributes and construct range counting indexes $D^H_{i,c}$ for each child attribute $c$. Given a query rectangle, we first retrieve results for light values using $D^L$. For each heavy value, we compute relevant tuple counts for each child attribute and aggregate them across all children.

\paragraph{Index construction}
%Let $\Q(\allattr) :- R_1(\allattr_1), \dots, R_k(\allattr_k)$ be a full hierarchical query over the relations $\allrel = \cup_{i \in [k]}R_i$ and attributes $\allattr = \cup_{i \in [k]}\allattr_i$. 
Let $r$ be the root attribute in $\T(\Q)$ and let $C = \{c_1, \dots, c_p \}$ be the children of the root that we call the \textit{child} attributes. It is easy to see that the root attribute $r$ appears in all of the $k$ relations in $\allrel$. Let $\alpha$ be any integer parameter between $1$ and $N$. For each $r_i \in \adom(r)$ and each $j \in [m]$, let $R_j^{r_i} = \{t \in R_j | \pi_{r}(t) = r_i \}$ be the set of tuples in relation $R_j$ whose projection on $r$ is $r_i$. Let $T_i = \cup_{j \in [m]} R_j^{r_i}$. We separate the values in $\adom(r)$ into two different sets: heavy and light. Let $H = \{r_i \in \adom(r)| \frac{N}{\alpha} \leq |T_i|\}$ be the set of heavy $r_i$'s and let $L = \{r_i \in \adom(r)| \frac{N}{\alpha} > |T_i|\}$ be the set of light $r_i$'s.
We construct all the results having light value on the $r$ attribute i.e., $Q_L = \{t \in \Q(\I) | \pi_{r}(t) \in L\}=\bigcup\limits_{r_i \in L}(\Join_{j \in [m]} R_j^{r_i})$.
Then we construct an index $D^L$ for range counting (Theorem~\ref{rect}) over the tuples in $Q_L$. For an attribute $A\in\allattr$, let $\allrel_{A} = \{R_j\in\allrel | A \in \allattr_j \}$ be the set of relations having the attribute $A$. For every $r_i \in H$, we compute $R_j^{r_i}$, and then for each child attribute $c \in C$, we compute all the tuples $Q_{i}^{c} = \Join_{R_j \in \allrel_{c}}R_j^{r_i}$. For each value $r_i \in H$, and each child attribute $c \in C$, we construct a range counting index $D^H_{i,c}$ over the tuples in $Q_{i}^{c}$, using Theorem~\ref{rect}

 %   In this stage, we are done with the preprocessing step and can move to the query-answering phase.

\paragraph{Query processing}
  We are given a query rectangle $\rect$. For each $c\in C$, let $\allattr_c=\bigcup_{R_j\in\allrel_c}\allattr_j$ be the set of attributes contained in at least a relation in $\allrel_c$.
  Let
  $\pi_{\allattr_c}(\rect)$ be the rectangle defined by the linear predicates to the attributes in $\allattr_c$.
  Using the range tree $D^L$, we compute $n_1=D^L(\rect)$, which is the number of join results in the query rectangle with light values on $r$.
Next, compute the number of join results in the query rectangle with heavy values on $r$. For each $r_i \in H$ and for each $c\in C$, we use the range tree $D^H_{i,c}$ and we compute $h_{i,c}=D^H_{i,c}(\pi_{\allattr_c}(\rect))$, i.e., the number of tuples in $Q_i^c$ in rectangle $\pi_{\allattr_c}(\rect)$.
Let $n_2 = \sum_{r_i \in H}(\prod_{c \in C}h_{i,c})$.
We return $n_1+n_2$.

%When a query with the set of ranges $O = \{[l_i, r_i]\}$ comes, we ask the $D^L$ oracle for the number of points satisfying these ranges. Let $l$ denote the answer returned from the oracle. This will give us the total number of points in the given ranges having a light $r$ value, i.e. the set of all points in $t \in \Q$ such that $\pi_r(t) \in L$ that satisfy the conditions on the ranges. Now we need to compute the number of points satisfying the ranges having a heavy $r$ value. To do so, we first fix a value $r_i \in H$. Then, for a fixed value of $r_i \in H$, we query the oracle of $D^H_{i,c}$ with the ranges in $O$. Let $h_{i,c}$ denote the answer returned by the oracle $D^H_{i,c}$. Let $h = \sum_{r_i \in H}(\Pi_{c \in C}h_{i,c})$. Note that since $\Q$ is hierarchical, the set of the relations that have different child attributes are pairwise disjoint, and hence it is easy to check that $h$ is the number of points $t \in \Q - Q_L$ satisfying the conditions on the ranges. Thus, we can report $l + h$, as the final answer to the query.

%The correctness of our method follows from the correctness of the index for generalized $k$-star queries in Subsection~\ref{subsec:Gkstar}.

\fullversion{
\paragraph{Correctness} The correctness proof is similar to the correctness of the index for generalized $k$-star queries in Subsection~\ref{subsec:Gkstar}.
}

    \paragraph{Time and Space analysis}
    By the definition of the light values, we know that for each $r_i \in L$, we have $|R_j^{r_i}|=O(\frac{N}{\alpha})$. Let $t \in \I$, be an arbitrary tuple from an arbitrary relation $R_j$, and let $c_t = |\{t' \in Q_L | \pi_{\allattr_j}(t') = t\}|$. By the definition of $Q_L$, if $\pi_{r}(t) \in H$, then $c_t = 0$, and if $\pi_{r}(t) \in L$, then $c_t = |\Join_{j' \in [m], j' \neq j} R_j^{\pi_{r}(t)})|$. Therefore, by the bounds on the size of the results of joins, for
    any $t \in D$, we have $c_t \leq \left(\frac{N}{\alpha}\right)^{m - 1}$ and $|Q_L|\leq \frac{N^m}{\alpha^{m-1}}$, as we had
    in Section~\ref{subsec:Gkstar} for generalized $k$-star queries.
   \fullversion{ 
    $$c_t \leq \Pi_{j' \in [m], j' \neq j} |R_j^{\pi_{r}(t)})| \leq (\frac{N}{\alpha})^{m - 1}.$$
    Thus, we have
    $$
        |Q_L| \leq \sum_{j \in [m]}\sum_{t \in R_j}c_t
        \leq \sum_{j \in [m]}\sum_{t \in R_j}(\frac{N}{\alpha})^{m - 1}
        \leq N (\frac{N}{\alpha})^{m - 1}
        = \frac{N^m}{\alpha^{m-1}}.
    $$}
Using the generic join algorithm~\cite{ngo2014skew}, $Q_L$ can be computed in $\O(\frac{N^m}{\alpha^{m-1}})$ time.
     The index $D^L$ uses $\O(|Q_L|)=\O(\frac{N^m}{\alpha^{m-1}})$ space and can
     be constructed in $\O(|Q_L|)=\O(\frac{N^m}{\alpha^{m-1}})$ time.
    We execute one range count query on $D^L$ that takes $\O(1)$ time.
    By the definition of the heavy values, it is straightforward to see that $|H|=O(\frac{N}{N/\alpha}) = O(\alpha)$.
We run a range count query for each $r_i \in H$ on $D^H_{i,c}$ for every $c \in C$. In total, we run $O(|C|\cdot |H|)=O(\alpha)$ queries. Each query takes $\O(1)$ time, so the query procedure runs in $\O(\alpha)$ time.
It remains to analyze the space that is used for the heavy
values. Let $m_1 = \max_{c \in C}|\allrel_c|$.
By definition, it holds that $\max_{r_i\in H, c\in C}|Q_i^c|=O(\max_{j\in[m]} |R_j|^{m_1})$.
Every index $D_{i,c}^H$ uses near-linear space, so the total space used for all indexes $D_{i,c}^H$ is $\O(N^{m_1})$. All indexes $D_{i,c}^H$ are also constructed in $\O(N^{m_1})$ time.
Overall, our index answers a query in $O(\alpha)$ time using $\O(\frac{N^m}{\alpha^{m-1}} + N^{m_1})$ space having $\O(\frac{N^m}{\alpha^{m-1}} + N^{m_1})$ preprocessing time.

%Finally, we conclude with the following theorem. 
\begin{comment}
\begin{theorem}
\label{thm:mainHier1}
  %  Given a hierarchical CQ $\Q$ on a database $\I$ of size $N$ with $m$ relations, and a parameter $1\leq T\leq N$, there exists an index for the $\prob$ problem that answers a query in $T$ time using $\O(N^{m}/T^{m - 1}+N^{m_1})$ space, where $m_1$ is the maximum number of relations that contain an attribute among the attributes in the the first level of the attribute tree of $\Q$.

    Given a hierarchical CQ $\Q$ on a database $\I$ of size $N$ with $m$ relations, and a parameter $1\leq T\leq N$, there exists an index of 
    $\O(N^{m}/T^{m - 1}+N^{m_1})$ space and the same preprocessing time, where $m_1$ is the maximum number of leaf nodes in the subtree rooted at a node of level $1$ of the attribute tree of $\Q$, such that given a query hyper-rectangle $\rect\in \Re^{|\head(\Q)|}$, it returns $\prob(\Q,\I,\rect)$ in $T$ query time.
\end{theorem}
\end{comment}
Recall that $m_1$ is the maximum number of relations among the subtrees rooted at children
$C$. The space highly depends on the structure of $\Q$'s attribute tree. If its attribute tree is
balanced and has fanout $d$, then the space is $\O(N^{m}/T^{m - 1}+N^{m/d})$. However, if the tree is highly unbalanced, then the space can be as large as $\O(N^{m}/T^{m - 1}+N^{m-1})$. For example, see Figure~\ref{fig:hier}.
Both queries have $6$
\begin{wrapfigure}{r}{0.4\textwidth}
    \centering
    %\vspace{-0.8em}
    \includegraphics[scale=0.3]{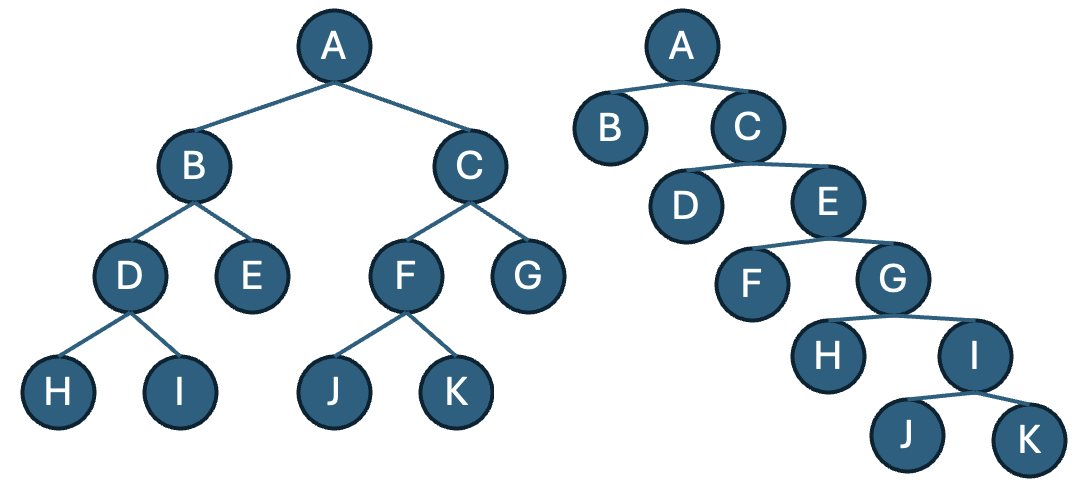}
    \vspace{-1em}
    \caption{Left: Attribute tree of a balanced hierarchical query with $6$ relations. Right: Attribute tree of an unbalanced hierarchical query with $6$ relations.}
    \label{fig:hier}
    \vspace{-1em}
\end{wrapfigure}
relations. If we set $T=\sqrt{N}$, then the space needed for the left (balanced) query is $O(N^{3.5})$ (since $m_1=3)$, while the space needed for the right (unbalanced) query is $O(N^5)$ (since $m_1=5$).

In Appendix~\ref{appndx:furtherOpt}, we show further optimizations that improve the space complexity of our index. The main idea is to recursively apply our technique by computing the light and heavy values in $\adom(c)$ for every root's child $c$. 
Using quadtrees (Theorem~\ref{nn}) instead of range trees, we can extend our index for the $\nnprob$ over hierarchical queries (similar to Appendix~\ref{appndx:annExt}).
We conclude with the next theorem that holds for both the $\prob$ and $\nnprob$ problems.

%Next, we show how the space complexity can be improved.
%in cases where the attribute tree of the input query $\Q$ is not balanced.

%\vspace{-0.8em}
\begin{theorem}
\label{thm:mainHier2}
  %  Given a hierarchical CQ $\Q$ on a database $\I$ of size $N$ with $m$ relations, and a parameter $1\leq T\leq N$, there exists an index for the $\prob$ (resp. \nnprob) problem that answers a query in $T$ time using $\O\left(\min_{\ell\in \{0,\ldots \mathcal{L}-1\}} \frac{N^m}{T^{(m-1)/(\ell+1)}}+N^{m_{\ell+1}}\right)$ space, where $\mathcal{L}$ is the depth of the attribute tree of $\Q$,  and $m_{\ell+1}$ is the maximum number of relations that contain an attribute among the attributes in the the $(\ell+1)$-level of the attribute tree of $\Q$.

    Given a hierarchical CQ $\Q$ on a database $\I$ of size $N$ with $m$ relations, and a parameter $1\leq T\leq N$, there exists an index of 
    $\O\left(\min_{\ell\in \{0,\ldots \mathcal{L}-1\}} \frac{N^m}{T^{(m-1)/(\ell+1)}}+N^{m_{\ell+1}}\right)$ space and the same preprocessing time, where $\mathcal{L}$ is the depth of the attribute tree of $\Q$,
    and $m_{\ell+1}$ 
    is the maximum number of leaf nodes in the subtree rooted at a node of level $\ell+1$, such that given a query hyper-rectangle $\rect\in \Re^{|\head(\Q)|}$ (resp. point $q\in\Re^{|\head(\Q)|}$), it returns $\prob(\Q,\I,\rect)$ (resp. $\nnprob(\Q,\I,q)$) in $T$ query time. %Equivalently, there exists an index with the same space consumption, such that given a query point $q\in\Re^{|\head(\Q)|}$, it returns $\nnprob(\Q,\I,q)$ in $T$ query time.
\end{theorem}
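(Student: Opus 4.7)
The plan is to generalize the heavy-light partitioning of Section~\ref{subsec:hierarchical} by applying it recursively down the attribute tree $\T(\Q)$, and to stop the recursion at any desired level $\ell$. Fix $\ell \in \{0,1,\ldots,\mathcal{L}-1\}$ and a common threshold $\alpha$. At each recursion level $i \leq \ell$, conditioned on a fixed chain $(b_0,\ldots,b_{i-1})$ of heavy values at the ancestor attributes, the residual instance is itself a hierarchical query whose root is the level-$i$ attribute $c$; we partition $\adom(c)$ into heavy and light values using the threshold $N/\alpha$ (computed within the residual instance), materialize the join of all residual relations restricted to each light value of $c$, index it with a range tree (or a quadtree from Theorem~\ref{nn} for $\nnprob$), and recurse on every heavy value of $c$. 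The recursion terminates at level $\ell+1$: for each surviving heavy chain and each subtree rooted at a node at level $\ell+1$, we build a single range tree (or quadtree) on the join of the relations in that subtree.

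First I would verify correctness. At each level the light-heavy split induces a partition of $\adom$ of the current attribute, so the disjoint union of ``light at some level $i\leq\ell$'' together with ``heavy through level $\ell+1$'' partitions $\Q(\I)$, and summing per-branch counts (resp.\ taking the best $(1{+}\eps)$-approximate candidate across branches) gives the correct answer exactly as in Section~\ref{subsec:hierarchical}.

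Next I would analyze space, query time, and preprocessing. At a fixed level $i$, the aggregate size of all light-value materializations across all heavy-chain prefixes of length $i$ is $\O(N^m/\alpha^{m-1})$, by the same AGM-style argument as in Section~\ref{subsec:hierarchical} applied to the residual sub-instance (every participating relation has size at most $N/\alpha$ on its light slice). Summing over $i=0,\ldots,\ell$ contributes $\O(N^m/\alpha^{m-1})$; this dominates the corresponding preprocessing cost via the generic-join algorithm. At level $\ell+1$ each terminal subtree has at most $m_{\ell+1}$ leaf relations, so each terminal range tree or quadtree has $\O(N^{m_{\ell+1}})$ size and preprocessing, and the bound aggregates to $\O(N^{m_{\ell+1}})$ across all branches by the same telescoping argument. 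The query spends $O(\alpha)$ work per recursion level before descending one heavy value, yielding $T=O(\alpha^{\ell+1})$. Setting $\alpha = T^{1/(\ell+1)}$ gives the per-$\ell$ bound
\[
\O\!\left(\frac{N^m}{T^{(m-1)/(\ell+1)}} + N^{m_{\ell+1}}\right),
\]
and since the algorithm may pick the best $\ell$ at construction time, we take the minimum over $\ell \in \{0,\ldots,\mathcal{L}-1\}$.

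The main obstacle will be the space accounting across recursion levels: one must show that summing the per-branch bound $\O(N^m/\alpha^{m-1})$ over all heavy chains at a fixed level does not accumulate extra factors of $\alpha$. This relies on the observation that at each level the heavy values partition the residual tuples of every relation, so a single application of the bound $|Q_L| \leq N \cdot (N/\alpha)^{m-1}$ from Section~\ref{subsec:hierarchical} to the residual instance at that level already accounts for the union across all branches. The $\nnprob$ version is immediate once Theorem~\ref{nn} replaces Theorem~\ref{rect}: each range tree becomes a quadtree, each branch returns a $(1{+}\eps)$-approximate nearest neighbor in its own portion of $\Q(\I)$, and the overall minimum over the $T$ candidates is a valid $(1{+}\eps)$-approximate nearest neighbor in $\Q(\I)$.
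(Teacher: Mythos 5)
Your proposal is correct and follows essentially the same route as the paper's proof in Appendix~\ref{appndx:furtherOpt}: recursively apply the heavy--light decomposition down the attribute tree through level $\ell$, index the light materializations with range trees (quadtrees for $\nnprob$), fully materialize the level-$(\ell+1)$ subtrees for the surviving heavy branches, set $\alpha = T^{1/(\ell+1)}$, and minimize over $\ell$. The only (immaterial) difference is organizational --- you classify values as heavy or light per residual branch, whereas the paper classifies them once globally per attribute node and builds a single index $D_v^L$ per node, querying it with the ancestors' values fixed in the rectangle; both yield the same $\O(N^m/\alpha^{m-1}+N^{m_{\ell+1}})$ space and $\O(\alpha^{\ell+1})$ query-time bounds.
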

%\vspace{-0.5em}
The index in Theorem~\ref{thm:mainHier2}, improves the space complexity over the right (unbalanced) query from Figure~\ref{fig:hier}. Using the index from Theorem~\ref{thm:mainHier2} the space complexity improves to $O(N^{4.75})$.
%, for $\ell=1$, since $m_2=4$.

%Putting everything together, we obtain:
%\begin{theorem}
%\label{thm:ANNHier2}
%    Given a hierarchical CQ $\Q$ with $m$ relations, a database $\I$ of size $N$, and a parameter $1\leq T \leq N$, there exists an index of $\displaystyle{\O\left(\min_{\ell\in \{0,\ldots \mathcal{L}-1\}} \frac{N^m}{T^{(m-1)/(\ell+1)}} \\ +N^{m_{\ell+1}}\right)}$ space that can answer any $\nnprob$ query within $T$ time, where $\mathcal{L}$ is the depth of the attribute tree of $\Q$, and $m_{\ell+1}$ is the maximum number of relations that contain an attribute among the attributes in the the $(\ell+1)$-level of the attribute tree of $\Q$. 
%\end{theorem}
%\vspace{-1.2em}
\subsection{Extensions to General CQs}
\label{subsec:extension}
%\vspace{-0.7em}
We extend our indexes from hierarchical queries to arbitrary CQs using the hierarchical generalized hypertree decomposition (GHD) method~\cite{hu2022computing}.
Intuitively, the non-hierarchical query $\Q$ is transformed into a hierarchical one at the cost of increasing the input size from $N$ to $N^{\hhtw(\Q)}$,
where $\hhtw(\Q)$ is defined as the hierarchical hypertree width of the input query.
Since most of the extensions in this subsection are standard, we show the details in Appendix~\ref{appndx:extension} and we only state the final theorem.

%We first give the definition of the \emph{Generalized Hypetree Decomposition} (GHD) and the \emph{fractional edge covering number}.

\begin{theorem}
\label{thm:final}
  %  Given a general CQ $\Q$ with $m$ relations, a database $\I$ of size $N$, and a parameter $1\leq T\leq N$, there exists an index of $\displaystyle{\O\left(\min_{\ell\in \{0,\ldots \mathcal{L}-1\}} \frac{N^{\hhtw(\Q)\cdot m}}{T^{(m-1)/(\ell+1)}}+N^{\hhtw(\Q)\cdot m_{\ell+1}}\right)}$ space that can answer any $\prob$ (resp. $\nnprob$) query within $T$ time, where $\mathcal{L}$ is the depth of the hierarchical GHD of $\Q$, and $m_{\ell+1}$ is the maximum number of relations that contain an attribute among the attributes in the $(\ell+1)$-level of the hierarchical GHD of $\Q$.

    Given a hierarchical CQ $\Q$ on a database $\I$ of size $N$ with $m$ relations, and a parameter $1\leq T\leq N$, there exists an index of $\displaystyle{\O\left(\min_{\ell\in \{0,\ldots \mathcal{L}-1\}} \frac{N^{\hhtw(\Q)\cdot m}}{T^{(m-1)/(\ell+1)}}+N^{\hhtw(\Q)\cdot m_{\ell+1}}\right)}$ space and the same preprocessing time, where $\mathcal{L}$ is the depth of the attribute tree of the hierarchical decomposition of $\Q$,
    and $m_{\ell+1}$ 
    is the maximum number of leaf nodes in the subtree rooted at a node of level $\ell+1$, such that given a query hyper-rectangle $\rect\in \Re^{|\head(\Q)|}$ (resp. point $q\in\Re^{|\head(\Q)|}$), it returns $\prob(\Q,\I,\rect)$ (resp. $\nnprob(\Q,\I,q)$) in $T$ query time. %Equivalently, there exists an index with the same space consumption, such that given a query point $q\in\Re^{|\head(\Q)|}$, it returns $\nnprob(\Q,\I,q)$ in $T$ query time.
\end{theorem}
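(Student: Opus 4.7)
The plan is to reduce the problem on an arbitrary CQ $\Q$ to the problem on a hierarchical CQ, and then invoke Theorem~\ref{thm:mainHier2} as a black box. First, I would compute a \emph{hierarchical generalized hypertree decomposition} of $\Q$, as described in~\cite{hu2022computing}. This produces a tree $\T$ of bags $\{u_1,\ldots,u_M\}$, where each bag $u_i$ is associated with a subset of attributes $\chi(u_i)\subseteq\allattr$ and a subset of relations $\lambda(u_i)\subseteq\allrel$ such that $\chi(u_i)=\bigcup_{R_j\in\lambda(u_i)}\allattr_j$, and the resulting query $\Q'$ obtained by replacing each bag with one materialized relation $R_{u_i}$ is \emph{hierarchical} in the sense of Section~\ref{subsec:hierarchical}. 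By the definition of $\hhtw(\Q)$, the fractional edge cover number of each bag is at most $\hhtw(\Q)$.

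Next, I would materialize every bag relation: for each bag $u_i$, compute $R_{u_i}=\Join_{R_j\in\lambda(u_i)}R_j(\allattr_j)$ using a worst-case optimal join algorithm~\cite{ngo2014skew}. By the AGM bound, $|R_{u_i}|=O(N^{\hhtw(\Q)})$ and the materialization takes $\tilde{O}(N^{\hhtw(\Q)})$ time per bag. Since the number of bags is constant (data complexity), the total input size of the transformed instance $\I'$ of $\Q'$ is $N'=O(N^{\hhtw(\Q)})$. By construction, the natural join of $\{R_{u_i}\}$ produces exactly the same multiset of tuples as $\Q$'s join under bag semantics, up to projection on the output attributes.

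Now I would instantiate Theorem~\ref{thm:mainHier2} on the hierarchical query $\Q'$ with input size $N'$ and the same target query time $T$. This yields an index of size
\[
\tilde{O}\!\left(\min_{\ell\in\{0,\ldots,\mathcal{L}-1\}} \frac{(N')^m}{T^{(m-1)/(\ell+1)}} + (N')^{m_{\ell+1}}\right),
\]
and substituting $N'=O(N^{\hhtw(\Q)})$ gives the claimed space bound. The preprocessing time inherits the same bound plus the $\tilde{O}(N^{\hhtw(\Q)})$ materialization cost, which is absorbed. To answer a range counting (resp. approximate nearest neighbor) query over $\head(\Q)$ rather than $\head(\Q')$, I would invoke Lemma~\ref{lem:equiv}, which ensures that an index for the full version of $\Q'$ simultaneously supports projections onto any subset of its attributes without increasing the cost, and in particular onto $\head(\Q)\subseteq \bigcup_i\chi(u_i)$.

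The main obstacle I expect is justifying correctness of the reduction under \emph{bag semantics}: the hierarchical GHD must be chosen so that each tuple in $\Q(\I)$ is produced with the correct multiplicity in $\Q'(\I')$, i.e., no duplicate results are introduced or lost when bags share attributes. This is exactly what the hierarchical property of the decomposition in~\cite{hu2022computing} guarantees (each attribute's set of bags is closed under the hierarchical containment condition), but verifying this boils down to checking that every internal join attribute of $\Q$ appears in a connected subtree of $\T$ and that materializing bags does not double-count. Once this is established, the rest of the argument is a routine substitution of parameters into Theorem~\ref{thm:mainHier2}, and the $\nnprob$ case follows by replacing range trees with quadtrees in the hierarchical index, exactly as in Appendix~\ref{appndx:annExt}.
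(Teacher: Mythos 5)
Your proposal is correct and follows essentially the same route as the paper: compute a hierarchical GHD of width $\hhtw(\Q)$, materialize each bag into a relation of size $O(N^{\hhtw(\Q)})$ via a worst-case optimal join, and then apply Theorem~\ref{thm:mainHier2} to the resulting hierarchical query with input size $N'=O(N^{\hhtw(\Q)})$. The bag-semantics multiplicity concern you flag is real but is exactly what the connectivity/coverage conditions of the GHD guarantee, and the paper's own (terse) argument relies on the same fact without further elaboration.
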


%From Theorem~\ref{thm:ANNHier2}, we get the following result for the $\nnprob$ problem, using the approximate nearest neighbor indexes instead of the range counting indexes.

%\begin{theorem}
  %  Given a general CQ $\Q$ on a database $\I$ of size $N$ with $m$ relations, and a parameter $1\leq T\leq N$, there exists an index for the $\nnprob$ problem that answers a query in $T$ time using $\O(\min_{\ell\in \{0,\ldots \mathcal{L}-1\}} \frac{N^{\hhtw(\Q)\cdot m}}{T^{(m-1)/(\ell+1)}}+N^{\hhtw(\Q)\cdot m_{\ell+1}})$ space, where $\mathcal{L}$ is the depth of the hierarchical GHD of $\Q$, and $m_{\ell+1}$ is the maximum number of relations that contain an attribute among the attributes in the the $(\ell+1)$-level of the hierarchical GHD of $\Q$.
%\end{theorem}

%\vspace{-2em}
\section{Improved Relational Algorithms}
\label{sec:extClust}
%\vspace{-0.7em}
We show how our new indexes are used to improve the running time of known algorithms in the relational setting. 
%Consider an optimization problem $\mathsf{Prob}$ in the standard computational setting\footnote{In the standard computational setting, input data is stored in a single table.} and assume that a well-known algorithm $\mathsf{Alg}$ exists that solves $\mathsf{Prob}$ (either exactly or approximately) in $f(n)$ time, where $n$ is the input size and $f(\cdot)$ is real positive function. In the relational setting, the goal is to solve $\mathsf{Prob}$ on $\Q(\I)$, i.e., on the results of a CQ $\Q$ over a database instance $\I$ of size $N$ (under bag semantics).  Obviously, a straightforward algorithm is to evaluate $\Q(\I)$ and then run $\mathsf{Alg}$ on $\Q(\I)$. However, such an algorithm requires $\Omega(f(|\Q(\I)|))=\Omega(f(N^{\mathsf{\rho^*(\Q)}}))$ time, where $\rho^*(\Q)$ is the fractional edge cover of query $\Q$. Even for simple queries, like the matrix query $\Q(A,B,C):-R_1(A,B)\Join \R_2(B,C)$ we have that $|\Q(\I)|=O(N^2)$, leading to an $O(f(N^2))$ time algorithm.
There is a recent line of work solving combinatorial problems on the results of a CQ without explicitly computing $\Q(\I)$; for example enumeration~\cite{deep2022ranked, deep2021ranked, tziavelis2020optimal}, quantiles~\cite{tziavelis2023efficient},
direct access~\cite{carmeli2023tractable}, diversity~\cite{agarwal2025computing, arenas2024towards, merkl2025diversity}, top-$k$~\cite{tziavelis2020optimal}, and clustering~\cite{agarwal2025computing, chen2022coresets, curtin2020rk, esmailpour2025improved, moseley2021relational}.
In the rest of this section for simplicity, we focus on acyclic queries $\Q$~\cite{beeri1983desirability, fagin1983degrees}.
%Any algorithm designed for acyclic CQs can be extended to an algorithm that works for any CQ using the notion of fractional hypertree width~\cite{gottlob2014treewidth}.

One of the main techniques in the papers above is to use oracles (indexes in our case) to get useful statistics over the results of a CQ. For example, in relational clustering~\cite{esmailpour2025improved} ($\mathcal{K}$-median and $\mathcal{K}$-means clustering problems) the authors used a (counting) oracle to compute $|\rect\cap\Q(\I)|$ given a rectangle $\rect$, or use a (sampling) oracle to get a sample from $\rect\cap \Q(\I)$ uniformly at random. In all cases, they describe straightforward oracles/indexes with $O(N)$ query time. The running time all the proposed algorithms in~\cite{esmailpour2025improved} can be written as $O(\tau\cdot \mathcal{T} + f(\mathcal{K}^2))$, where $\tau$ is the number of queries to the oracles, $\mathcal{T}$ is the query time of the oracle, and $f(\mathcal{K}^2)$ is the runtime of a clustering algorithm in the standard computational setting (i.e., input data stored in one table) over an input of size $\mathcal{K}^2$. More specifically, the authors describe a deterministic $O(\mathcal{K}^{2d+2}\cdot N+f(\mathcal{K}^2))$ time algorithm performing $\tau=O(\mathcal{K}^{2d+2})$ queries to a counting oracle and a randomized $O(\mathcal{K}^{2}\cdot N+f(\mathcal{K}^2))$ time algorithm performing $\tau=O(\mathcal{K}^2)$ queries to a counting and sampling oracle, where $d=|\head(\Q)|$.
For simplicity we skip the term $f(\mathcal{K}^2)$ from the running time because in most cases, using the fastest algorithms for clustering in the standard computational setting~\cite{har2003coresets}, $f(\mathcal{K}^2)\ll \mathcal{K}^2\cdot N$.
Both algorithms return a $(2+\eps)$-approximation solution for the relational $\mathcal{K}$-median problem and a $(4+\eps)$-approximation solution for the relational $\mathcal{K}$-means problem, where $\eps$ is an arbitrary small constant. We note that these are the fastest known $(2+\eps)$-approximation algorithms (resp. $(4+\eps)$-approximation algorithms) for the relational $\mathcal{K}$-median (resp. $\mathcal{K}$-means) problem.
In the next paragraphs we represent the query and preprocessing time of our indexes over a database instance $\I$ of size $N$ as $T(N)$ and $P(N)$, respectively.
%Our new indexes can be used to improve the running time of such algorithms. For example, consider one of our indexes with preprocessing time $P(N)$ and query time $T(N)$. The running time of the algorithms in~\cite{esmailpour2025improved} using our indexes can be written as $\O(P(N)+\tau\cdot T(n))$. As we will see next, in many cases, $P(N)+\tau\cdot T(N)\ll \tau \cdot N$.

We first consider the deterministic algorithm from~\cite{esmailpour2025improved}, where $\tau=O(\mathcal{K}^{2d+2})$ counting oracle queries are executed to compute $|\rect\cap \Q(\I)|$, given a rectangle $\rect$. Consider the query $\star$. We can use the index from the $\prob$ problem on $k$-star queries to compute $|\rect\cap \Q(\I)|$ when is needed. 
Using our result from Theorem~\ref{alg:star}, the deterministic algorithm runs in $O(P(N)+\tau\cdot T(N))=O(N+\frac{N^{k+1}}{T^{k+1}}+\mathcal{K}^{2d+2}\cdot T)$.
Our new algorithm is faster than the running time proposed in~\cite{esmailpour2025improved}, if $N+\frac{N^{k+1}}{T^{k+1}}+\mathcal{K}^{2d+2}\cdot T \ll \mathcal{K}^{2d+2}\cdot N$. This is indeed the case for larger values of $\mathcal{K}$. For example, assume $\mathcal{K}=N^{1/(2d+2)}$. Then the deterministic algorithm from~\cite{esmailpour2025improved} runs in $O(N^2)$ time, while our implementation runs in $O(\frac{N^{k+1}}{T^{k+1}}+N\cdot T)$. By selecting $T=N^{k/(k+2)}$, the running time of our implementation is $O(N^{(2k+2)/(k+2)})$ which is always asymptotically smaller than $O(N^2)$. For example, for $k=2$ (matrix query) our new algorithm runs in $O(N^{3/2})$ which is faster than the algorithm in~\cite{esmailpour2025improved}, by a $N^{1/2}$ factor.
We can make similar arguments for the $\path$ query using our index in Theorem~\ref{alg:path}, and general queries using our indexes from Section~\ref{sec:general}.

Next, we consider the randomized algorithm from~\cite{esmailpour2025improved}. It requires a counting oracle and a sampling oracle. Interestingly, by making slight modifications, all our indexes from Section~\ref{sec:general} can be extended to return uniform samples in $\rect\cap \Q(\I)$ with the same guarantees as in the $\prob$ problem. We call it the Range Sampling Query ($\sprob$) problem. We show the details in Appendix~\ref{appndx:RSQ}.
%Intuitively, we execute a $\prob$ query, counting the number of results in $\rect\cap \Q(\I)$ that project on light values and on heavy values. Based on these numbers we choose randomly if we select a tuple from  $\rect\cap \Q(\I)$ that has a projection on light or heavy values. If the projection is on heavy values, we go through the heavy values one by one and we select a sample. If the projection is on light value, we use the constructed range tree to return a random sample; range trees work for range sample queries~\cite{}.
The randomized algorithm from~\cite{esmailpour2025improved} runs in $\O(\mathcal{K}^2\cdot N)$ time, performing $\tau=O(\mathcal{K}^2)$ queries to the counting and sampling oracles. Consider any generalized $k$-star query $\star$. Using our index from Theorem~\ref{range}, the randomized algorithm can be implemented in $O(N+\frac{N^k}{T^{k-1}}+\mathcal{K}^2\cdot T)$ time. The new algorithm is faster than the proposed algorithm in~\cite{esmailpour2025improved}, if $N+\frac{N^k}{T^{k-1}}+\mathcal{K}^2\cdot T\ll \mathcal{K}^2\cdot N$. If $\mathcal{K}=N^{1/2}$ then the randomized algorithm from~\cite{esmailpour2025improved} runs in $O(N^2)$ time, while our implementation runs in $O(\frac{N^k}{T^{k-1}}+N\cdot T)$. By selecting, $T=N^{(k-1)/k}$, the running time of our algorithm is $O(N^{(2k-1)/k})$ which is asymptotically smaller than $O(N^2)$ for every $k>1$.
%For example, for $k=2$ (generalized matrix query), our new algorithm runs in $O(N^{3/2})$ which is faster than the algorithm in~\cite{esmailpour2025improved} by a $N^{1/2}$ factor. 
We can make similar arguments for hierarchical and general queries using the results from Theorems~\ref{thm:mainHier2}, \ref{thm:final}.
We conclude with the next theorem.
%\vspace{-0.5em}
\begin{theorem}
\label{thm:Clustimprovements}
  Given an acyclic CQ $\Q$, a database instance $\I$ of size $N$, a parameter $\mathcal{K}\in[1,N]$, and a constant parameter $\eps\in(0,1)$, there exists a deterministic (resp. randomized) $(2+\eps)$-approximation algorithm for the relational $\mathcal{K}$-median problem that runs in $O(\min\{\mathcal{K}^{2d+2}\cdot N, P(N)\!+\!\mathcal{K}^{2d+2}\!\cdot \!T(N)\}\!+\!f(\mathcal{K}^2))$ time (resp. $O(\min\{\mathcal{K}^{2}\!\cdot\! N, P(N)\!+\!\mathcal{K}^{2}\cdot T(N)\}\!+\!f(\mathcal{K}^2))$ time), where $d=|\head(\Q)|$, $f(\mathcal{K}^2)$ is the time complexity of a $\mathcal{K}$-median clustering algorithm in the standard computational setting, $P(N)$ is the preprocessing, and $T(N)$ is the query time of the new proposed indexes for the $\prob$ and $\sprob$ problems. 
  %of one of the new proposed indexes for the $\prob$ problem over the CQ $\Q$ and database instance $\I$. Furthermore, there exists a randomized $(2+\eps)$-approximation algorithm for the relational $\mathcal{K}$-median problem that runs in $O(\min\{\mathcal{K}^{2}\cdot N, P(N)+\mathcal{K}^{2}\cdot T(N)\}+f(\mathcal{K}^2))$ time, where $P(N)$ is the preprocessing and $T(N)$ is the query time of the new proposed indexes for the $\prob$ and $\sprob$ problems. 
  The algorithms can be extended to $(4+\eps)$-approximation algorithms for the relational $\mathcal{K}$-means problem with the same time guarantees.
\end{theorem}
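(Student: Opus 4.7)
The plan is to treat the clustering algorithms of Esmailpour and Sintos~\cite{esmailpour2025improved} as a black box and simply plug in our new indexes in place of the linear-time oracles they use. More concretely, their deterministic $(2+\eps)$-approximation for relational $\mathcal{K}$-median is of the form: perform $\tau=O(\mathcal{K}^{2d+2})$ calls to a counting oracle that, on input rectangle $\rect$, returns $|\rect\cap \Q(\I)|$, and then run a standard $\mathcal{K}$-median algorithm on a weighted instance of size $O(\mathcal{K}^2)$ in time $f(\mathcal{K}^2)$. The randomized $(2+\eps)$-approximation works identically but needs only $\tau=O(\mathcal{K}^2)$ calls, each to a counting oracle or to a uniform sampling oracle over $\rect\cap \Q(\I)$. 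Hence the total runtime can be written as $O(P(N) + \tau\cdot T(N) + f(\mathcal{K}^2))$, where $P(N)$ and $T(N)$ are the preprocessing and query times of whichever oracle implementation we choose.

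First I would observe that the counting oracle is exactly an instance of the $\prob$ problem on $\Q$, and that the sampling oracle is exactly an instance of the $\sprob$ problem described in Appendix~\ref{appndx:RSQ}; the appendix notes that all our indexes from Section~\ref{sec:general} can be extended, with no asymptotic cost, to answer uniform sampling queries with the same space--time guarantees as the corresponding $\prob$ index. Therefore, after building our index once in time $P(N)$, each of the $\tau$ oracle calls can be answered in $T(N)$ time rather than $O(N)$ time, yielding runtime $O(P(N) + \tau\cdot T(N) + f(\mathcal{K}^2))$. Since the linear-scan oracle of~\cite{esmailpour2025improved} is always available as a fallback, we may take the minimum with the original $O(\mathcal{K}^{2d+2}\cdot N + f(\mathcal{K}^2))$ (resp. $O(\mathcal{K}^{2}\cdot N + f(\mathcal{K}^2))$) bound, giving exactly the claimed expression.

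Second I would justify that plugging in our approximate indexes preserves the approximation guarantees. The counting oracle is used in~\cite{esmailpour2025improved} in an \emph{exact} fashion for partitioning query rectangles and guiding the search, and our $\prob$ indexes return exact counts, so nothing changes. For the sampling oracle, our $\sprob$ extension returns a uniform sample, which is the precise distribution assumed by the randomized algorithm of~\cite{esmailpour2025improved}, so the $(2+\eps)$-approximation guarantee is preserved verbatim. The extension from $\mathcal{K}$-median to $\mathcal{K}$-means with a $(4+\eps)$-approximation factor is inherited directly from~\cite{esmailpour2025improved} since the reduction there is also oracle-based and uses the same type of queries.

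The only mildly non-trivial step is confirming that the black-box substitution is sound, i.e., that the algorithms of~\cite{esmailpour2025improved} truly access $\Q(\I)$ only through these two oracle interfaces and make no additional assumption on their internals (such as obtaining witness tuples from counting calls). I would therefore briefly audit the calls in~\cite{esmailpour2025improved} to verify that counting and uniform-sampling access suffice, after which the runtime identity $O(P(N)+\tau\cdot T(N)+f(\mathcal{K}^2))$ follows mechanically. Substituting $\tau=O(\mathcal{K}^{2d+2})$ for the deterministic variant and $\tau=O(\mathcal{K}^2)$ for the randomized variant and taking the minimum with the naive bound yields the theorem.
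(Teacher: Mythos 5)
Your proposal matches the paper's argument essentially verbatim: Section~\ref{sec:extClust} likewise treats the algorithms of~\cite{esmailpour2025improved} as oracle-based black boxes, substitutes the new $\prob$ and $\sprob$ indexes (the latter via Appendix~\ref{appndx:RSQ}) for the linear-time counting and uniform-sampling oracles to get runtime $O(P(N)+\tau\cdot T(N)+f(\mathcal{K}^2))$ with $\tau=O(\mathcal{K}^{2d+2})$ or $O(\mathcal{K}^2)$, and takes the minimum with the original linear-oracle bound. The approximation guarantees carry over for exactly the reason you state: the counts are exact and the samples are uniform, so nothing in the analysis of~\cite{esmailpour2025improved} changes.
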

%We use the notion of fractional hypertree width~\cite{gottlob2014treewidth} and follow a standard procedure as in~\cite{esmailpour2025improved} to extend our algorithms to all (cyclic) CQs $\Q$. The running time of our algorithms remain the same after replacing $N$ with $N^{\mathsf{fhtw}}$, where $\mathsf{fhtw}$ is the fractional hypetree width of $\Q$. As we discussed above, for various values of $\mathcal{K}$ and various types of CQs, the algorithms from Theorem~\ref{thm:Clustimprovements} are asymptotically faster than the best algorithms for relational clustering.

%\vspace{-1.55em}
\section{Conclusion}
%\vspace{-0.8em}
In this paper, we design indexes with space-time tradeoffs for spatial conjunctive queries.
There are multiple interesting open problems derived from this work.
While we showed near-optimal index for the $\prob$ problem on $k$-star and $k$-path queries, it is interesting to study whether there exists a near-optimal index for the $\nnprob$ problem on $k$-star and $k$-path queries. Furthermore, an interesting problem is to close the gap between the space-time lower and upper bound for $\prob$ and $\nnprob$ on hierarchical and general CQs.

%%
%% Bibliography
%%

%% Please use bibtex, 

\bibliography{ref}
\newpage
\appendix

\section{Missing proofs from Section~\ref{sec:lb}}
\label{appndx:LB}

\begin{proof}[Proof of Theorem~\ref{lem:lowerbound2}] We consider a simpler version of $\path$ where $\mathcal{A}_1=A_1$, $\mathcal{A}_2=A_2$, $\mathcal{B}_i=B_i$ for every $i\in[k-1]$.
The reductions from $k$-reachability problem to $\prob$ or $\nnprob$ are similar; hence we focus on $\prob$ below.
Given an instance of the $k$-reachability problem over a graph $G=(V,E)$ with $N=|E|$ and a pair of vertices $v_{j_1}, v_{j_k}\in V$, we build a database instance $D$ such that using the answer of a \prob~ query over $\path(D)$, one can answer the $k$-reachability problem in $O(1)$ time.
Let $G'(V',E')$ be a graph such that $V'=V$, $E'=E\cup\{(v,v)\mid v\in V\}$, i.e., $G'$ is a copy of $G$ including self-loops edges $\{v,v\}$ for every node $v$ in $G'$. Notice that for any path from $v_{j_1}$ to $v_{j_k}$ in $G$ of length at most $k$, there exists a path from $v_{j_1}$ to $v_{j_k}$ in $G'$ of length exactly $k$, and vice versa. Hence, conjecture~\ref{conj:reach} also holds for graph $G'$ for paths of length exactly $k$. %Hence, without loss of generality we reduce the reachability problem of a path of length exactly $k$ on $G'$ to \prob.

Initially, $D=\emptyset$.
For every directed edge $(v_{h_1},v_{h_2})\in E'$ we add the tuple $(h_1,h_2)$ in $R_i^D$ for every $i\in [k]$. Consider the $k$-reachability query $v_{j_1}, v_{j_k}\in V$ for a path of length exactly $k$. Let $\rect=[j_1,j_1]\times [j_k,j_k]$ be a rectangle in $\Re^2$.

  If the $k$-reachability instance is true then there is a directed path $v_{j_1}\rightarrow v_{j_2}\rightarrow\ldots \rightarrow v_{j_k}$  in $G'$ of length exactly $k$. Hence, for every $i\in[k]$, $(v_{j_i},v_{j_{i+1}})\in E'$ and the tuple $(v_{j_i},v_{j_{i+1}})$ exists in $R_i^D$. Since $\rect=[j_1,j_1]\times [j_k,j_k]$ and $|\Join_{i\in[k]}(v_{j_i},v_{j_{i+1}})|>0$, the query $\prob(\path,D,\rect)$ returns a counter greater than $0$.

  If $\prob(\path, D,\rect)>0$ then for every $i\in[k]$, there exists a tuple $(v_{j_i}, v_{j_{i+1}})$ in $R_i^D$. Hence, every edge $(v_{j_i}, v_{j_{i+1}})\in E'$ for $i\in [k]$, so there exists a path $v_{j_1}\rightarrow v_{j_2}\rightarrow\ldots \rightarrow v_{j_k}$ in $G'$ of length exactly $k$. 
\end{proof}

\section{Missing proofs from Section~\ref{sec:opt}}
\label{appndx:sec4}
 \begin{proof}[Proof of Lemma~\ref{lem:domcnt}]
    We construct the index as follows. First, the dangling tuples (tuples that do not participate in any join) are removed~\cite{beeri1983desirability}. We fix an arbitrary relation $R_j$. For every $t_{b'}\in \pi_{\mathcal{B}}(R_j)$, and for every $i\in[k]$ we construct a range tree $\rangetree_{t_{b'}}^{(i)}$ for range count queries over the points in $P_{t_{b'}}^{(i)}=\{t\in \proj_{\mathcal{A}_i}(R_i)\mid \exists t_i\in R_i, \pi_{\mathcal{B}}(t_i)=t_{b'}, \pi_{\mathcal{A}_i}(t_i)=t\}$. Each tree $\rangetree_{t_{b'}}^{(i)}$ uses $\O(|P_{t_{b'}}^{(i)}|)$ space. We note that every point is stored in one tree, and the total size of $\I$ is $N$, so the index uses $\O(N)$ space. Furthermore, by the definition of range trees the index can be constructed in $\O(N)$ time. For a query rectangle $\rect$, let $\rect_i=\pi_{\mathcal{A}_i}(\rect)$ be the hyper-rectangle considering only the linear predicates of the attributes/coordinates in $\mathcal{A}_i$ for $i\in[k]$. Given a query tuple $b$, for every $i\in[k]$, and a query rectangle $\rect$, we compute $n_i=\rangetree_{b}^{(i)}(\rect_i)$, i.e., the number of tuples in $R_i$ whose projection on $\mathcal{B}$ is $b$ and whose projection on $\mathcal{A}_i$ lies in $\rect_i$. In the end we return $\prod_{i\in[k]}n_i$. The query procedure spends $\O(1)$ time for every range tree query, so in total, it runs in $\O(k)=\O(1)$ time.
     %For each value $c \in dom(B)$, we store $k$ sorted lists. In the $i$'th list we store $\{c' \in dom(A_i) | (c', c) \in R_i \}$, and sort all the arrays. When a query comes, we simply do a binary search to find the number of the values in each of the $k$ lists related to $b$, to find the number of values in $[l_i, r_i]$, and output the answer. For each query, we do $k$ binary searches and it is easy to see that the sum of the length of all the lists is $O(N)$. So, the result follows.
    \end{proof}

\begin{proof}[Proof of Lemma~\ref{lem:spaceKstar}]
     The index $\primitive$ uses $\O(N)$ space.
     Every tree $\rangetree_i$ uses $\O(\frac{N}{\alpha})$ space. Finally, we store one number for every combination of nodes $u_1,\ldots, u_k$. $O(N/\alpha)$ nodes/blocks are stored in every tree $\rangetree_i$, so $O(N^k/\alpha^k)$ different combinations exist. Overall, the index uses $\O(N+N^k/\alpha^k)$ space.

     The index $\primitive$ is constructed in $\O(N)$ time. Every tree $\rangetree_i$ is constructed in $\O(N/\alpha)$ time. For every combination of nodes $u_1,\ldots, u_k$, the value $n_{u_1,\ldots, u_k}$ is computed in $\O(N/\alpha)$ time by a straightforward range counting query on the results of a join query with linear query time, as shown in~\cite{agarwal2025computing}.
 \end{proof}

\begin{proof}[Correctness proof of query procedure in Section~\ref{subsec:kstar}]
We show that every tuple in the multi-set $\star(\I)\cap \rect$ is counted exactly one.
Let $t\in \star(\I)\cap \rect$ be a tuple in the multi-set. Notice that the there might be multiple copies of $t$ in $\star(\I)\cap \rect$. We consider that $t$ is the copy that was created by exactly joining the tuples $t_j\in R_j$, for $j\in[k]$.
Assume that $\pi_{A_i}(t_i)\in \mathcal{I}_i$ for every $i\in[k]$. By the definition of the range-trees $\mathcal{T}_i$, there exists a unique combination $u_{(1,j_1)}\in U_1,\ldots u_{(k,j_k)}\in U_k$ such that $\pi_{A_i}(t_i)\in I_i^{(u_{(i,j_i)})}$ for $i\in[k]$. Hence $t$ is counted once by $n_{u_{(1,j_1)},\ldots, u_{(k,j_k)}}$ and equivalently $t$ is counted once by $\ell_1$.
Then, we assume that $t\in \star(\I)\cap \rect$ satisfies $\pi_{A_i}(t_i)\in\mathcal{I}_i$ for $i<h$, and $h$ is the first attribute such that $\pi_{A_i}(t_i)\in I_{1,j_1}$ (the case where $\proj_{A_i}(t_i)\in I_{1,j_1'}$ is equivalent). By definition $t\in\rect_{t_h}$ and if $b=\pi_B(t_h)$ then $t$ is only counted once by $\primitive.\mathsf{count}(b,\rect_{t_h})$. So it is counted once by $\ell_2^{(h)}$. Finally, by definition, it is straightforward that we count only results that lie in the rectangle $\rect$.

\end{proof}

\paragraph{Extension to $k$-star queries with multiple attributes}
For $k$-star queries, as defined in Section~\ref{subsec:conj} ($\mathcal{A}_i$ and $\mathcal{B}$ might contain more than one attribute), only the construction of buckets are changing. We first  construct a slightly modified $|\mathcal{A}_i|$-dimensional range tree $\rangetree_i$ on $\proj_{\mathcal{A}_i}(R_i)$, for every $i\in[k]$. In particular, we construct $\rangetree_i$ as a standard $|\mathcal{A}_i|$-dimensional range tree in the first $|\mathcal{A}_i|-1$ levels. In the last level, we have the nodes of the subtree of the standard $|\mathcal{A}_i|$-dimensional range tree that contain at least $\alpha$ points. In other words, we remove the nodes of the standard $|\mathcal{A}_i|$-dimensional range tree that contain less than $\alpha$ points.
Every node from the last level of $\rangetree_i$ defines a block $\block_{i,j}$ of tuples in $\R_i$.
%In this case we note that two blocks might overlap.
Now, each $I_i$ is not a set of intervals but a set of hyper-rectangles defined by the nodes in the last level of the range trees.
As we had in the simpler case, for any combination of nodes (in the last level of their range tree), $u_1\in \rangetree_1, \ldots, u_k\in\rangetree_k$, we compute $n_{u_1,\ldots, u_k}$. Given a query rectangle $\rect$, let $\rect_i=\pi_{\mathcal{A}_i}(\rect)$ for every $i\in[k]$. Let $U_i=\{u_{(i,1)},\ldots, u_{(i,\mu_i)}\}$ be the set of canonical nodes in $\rangetree_i$ after querying with $\rect_i$, and let $\mathcal{I}_i'$ be their corresponding set of canonical rectangles. By the query procedure of the range trees, there are $\O(1)$ canonical nodes and $\O(1)$ partially intersected nodes in every $\rangetree_i$.
The rest of the query procedure is almost identical to the simpler case $|\mathcal{A}_i|=1$. Instead of using the interval $\mathcal{I}_i$, we use every rectangle in $\mathcal{I}_i'$ (for every $i\in[k]$) to compute $\ell_2^{(i)}$. The overall space and query time remain the same (up to $\polylog(n)=\O(1)$ factors).

 \begin{proof}[Proof of Lemma~\ref{lem:spaceKpath}]
     Let $\mathcal{S}_k$ denote the space of the index we construct for the $\prob$ problem on $k$-path queries.
     Every tree $\rangetree_1, \rangetree_2$ uses $O(\frac{N}{\alpha})$ space. We store one number for every combination of nodes $u_1, u_2$. There are $O(N/\alpha)$ blocks stored in $\rangetree_1$ and $\rangetree_2$, so there exist $O(N^2/\alpha^2)$ different combinations. Hence, it holds that $\mathcal{S}_k=2\mathcal{S}_{k-1}+N^2/\alpha^2$. By theorem~\ref{alg:star}, $\mathcal{S}_2=\O(N+N^2/\alpha^2)$, and using the fact that $k$ is constant, we conclude $\mathcal{S}_k=\O(N+N^2/\alpha^2)$.

     Let $\mathcal{P}_k$ denote the preprocessing time of the index we construct for the $\prob$ problem on $k$-path queries.
     Every tree $\rangetree_1, \rangetree_2$ is constructed in $\O(\frac{N}{\alpha})$ time.
     For every combination of nodes $u_1,u_2$ the value $n_{u_1,u_2}$ is computed in $\O(N/\alpha)$ time using the counting oracle in~\cite{agarwal2025computing}.
    There are $O(N/\alpha)$ blocks stored in $\rangetree_1$ and $\rangetree_2$, so there exist $O(N^2/\alpha^2)$ different combinations. Hence, it holds that $\mathcal{P}_k=2\mathcal{P}_{k-1}+N^3/\alpha^3$. By theorem~\ref{alg:star}, $\mathcal{S}_2=\O(N+N^3/\alpha^3)$, and using the fact that $k$ is constant, we conclude $\mathcal{P}_k=\O(N+N^3/\alpha^3)$.
 \end{proof}

\begin{proof}[Correctness proof of query procedure in Section~\ref{subsec:path}]
We show that every tuple in the multi-set $\path(\I)\cap \rect$ is counted exactly one.
Let $t\in \path(\I)\cap \rect$ be a tuple in the multi-set. Notice that the there might be multiple copies of $t$ in $\path(\I)\cap \rect$. We consider that $t$ is the copy that was created by exactly joining the tuples $t_j\in R_j$, for $j\in[k]$.
By induction hypothesis, assume that the indexes $\ds_{k-1}^{(1)}$ and $\ds_{k-1}^{(2)}$ are correct. 
Let $t\in \star(\I)\cap \rect$ such that $\pi_{A_1}(t_1)\in \mathcal{I}_1$ and $\pi_{A_2}(t_k)\in \mathcal{I}_2$. Then, $t$ is counted once by $\ell_1$, as we had in $k$-star queries.
Then, we assume that $t\in \star(\I)\cap \rect$ satisfies $\pi_{A_1}(t_1)\notin\mathcal{I}_1$ (the proof is equivalent if $\pi_{A_1}(t_1)\in\mathcal{I}_1$ and $\pi_{A_2}(t_k)\notin\mathcal{I}_2$).
Let $b=\pi_B(t_1)$. The quantity 
$\ds_{k-1}^{(2)}(\rect_{t_h})$ counts all results in $\Gpath_{k-1}(B_1,A_{2})$ for $B_1=b$ and $A_2\in\rect_2$. The correctness follows by the correctness of $\ds_{k-1}^{(2)}$.

\end{proof}

\subsection{Additional optimal queries}
\label{appndx:additional}
A trivial lower bound for the $\prob$ and $\nnprob$ problems is an index with $O(n)$ space and $\O(1)$ query time. In this subsection, we show a family of conjunctive queries where such an optimal index can be constructed.

\begin{theorem}
    Given a CQ $\Q$ on a database $\I$ of size $N$, if there exists $j\in[m]$ such that $\head(\Q)\subseteq \allattr_j$, then there exists an index with $\O(N)$ space and $\O(1)$ query time for the $\prob$ and the  $\nnprob$ problem.
\end{theorem}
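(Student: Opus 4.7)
The plan is to reduce both problems to standard one-shot geometric problems on a weighted point set of size $O(N)$ living in $\Re^{|\head(\Q)|}$. The key observation is that since $\head(\Q)\subseteq \allattr_j$, every tuple $t\in \Q(\I)$ satisfies $\pi_{\head(\Q)}(t)=\pi_{\head(\Q)}(r)$ for some tuple $r\in R_j$. Hence the support of the multi-set $\Q(\I)$ (viewed as a set of points in $\Re^{|\head(\Q)|}$) is contained in $\pi_{\head(\Q)}(R_j)$, which has at most $|R_j|\leq N$ distinct elements.

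First, for every tuple $r\in R_j$ I would compute the multiplicity
\[
w_r \;=\; \bigl|\{t\in \Q(\I)\;:\;\pi_{\head(\Q)}(t)=\pi_{\head(\Q)}(r)\}\bigr|,
\]
i.e., the number of copies of $\pi_{\head(\Q)}(r)$ in the bag $\Q(\I)$. This is a preprocessing step and does not affect space or query time; for acyclic $\Q$ it can be done in $\O(N)$ time via a Yannakakis-style bottom-up pass on a join tree, and for a general CQ via a generalized hypertree decomposition (which only inflates the preprocessing cost, not the size of the final index).

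For $\prob$, I would then build a weighted version of the range tree from Theorem~\ref{rect} on the point set $\pi_{\head(\Q)}(R_j)\subset \Re^{|\head(\Q)|}$ using the weights $\{w_r\}$. The standard range tree already stores subtree sums (instead of counts) at canonical nodes with no asymptotic blow-up, so the space remains $\O(|R_j|)=\O(N)$ and any query $\rect$ returns $\sum_{r\in R_j,\,\pi_{\head(\Q)}(r)\in\rect} w_r$ in $\O(1)$ time; this sum is exactly $|\Q(\I)\cap \rect|=\prob(\Q,\I,\rect)$ by the support argument above.

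For $\nnprob$, I would first discard every $r\in R_j$ with $w_r=0$ (the dangling tuples, which cannot appear in any join result), and then build the quadtree of Theorem~\ref{nn} on the deduplicated projection of the surviving tuples. Because the support of $\Q(\I)$ is exactly the set of surviving projections, the $(1+\eps)$-approximate nearest neighbor returned by the quadtree for any query $q\in \Re^{|\head(\Q)|}$ is a valid answer to $\nnprob(\Q,\I,q)$, produced in $\O(1)$ query time and using $\O(N)$ space. The only non-trivial step is the preprocessing computation of the multiplicities and the dangling-tuple test, but neither affects the claimed $\O(N)$ space / $\O(1)$ query-time bounds.
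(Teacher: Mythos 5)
Your proposal is correct and follows essentially the same route as the paper's own proof: compute the multiplicity $w_t$ of each projected tuple of $R_j$, build a weighted range tree over $\proj_{\head(\Q)}(R_j)$ for $\prob$, and build a quadtree over the non-dangling projections for $\nnprob$. The only addition is your explicit discussion of how to compute the multiplicities in preprocessing, which the paper leaves implicit and which does not affect the claimed space or query-time bounds.
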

\begin{proof}
    For any tuple $t\in\proj_{\head(\Q)}(R_j)$ we compute $w_t=|\{t'\in\Q(\I)\mid \pi_{\head(Q)}(t')=t\}|$. We construct a $|\head(\Q)|$-dimensional range tree for the weighted version of range counting queries, over $\proj_{\head(\Q)}(R_j)$. The index has $\O(|R_j|)=\O(N)$ space. Given a query rectangle $\rect$ we run a (weighted) range counting query in $\O(1)$ time. Similarly, for the $\nnprob$, after keeping only the non-dangling tuples in the database we construct an approximate nearest neighbor index (quadtree) over the points in $\proj_{\head(\Q)}(R_j)$. The result follows.
\end{proof}

\section{Missing proofs from Section~\ref{sec:general}}
\label{appndx:sec5}

\begin{proof}[Proof of Lemma~\ref{lem:equiv}]
Given $\rect\in\Re^{|\mathbf{y}|}$ we define a rectangle $\rect'\in\Re^{|\bigcup_{i\in[m]}\mathbf{A}_i|}$ as follows. For every $A\in \mathbf{y}$, we set $\pi_A(\rect')=\pi_A(\rect)$. For every $A\notin \mathbf{y}$, we set $\pi_A(\rect')=(-\infty,\infty)$. Then it is straightforward to see that $|\Q(\I)\cap \rect|=|\Q'(\I)\cap \rect'|$, under bag semantics. Hence an index for the $\prob$ problem on the full CQ $\Q'$ can be used to answer $\prob(\Q,\I,\rect)$.

  %  Let $\Q(\head):-R_1(\allattr_1),\ldots, R_m(\allattr_m)$ be a conjunctive query over a data base instance $\I$ and a query rectangle $\rect\in \Re^{|\head|}$. We construct the full conjunctive (join) query $\Q'(\allattr):-R_1(\allattr_1),\ldots, R_m(\allattr_m)$ and we define the rectangle $\rect'$ as follows. For every $A\in \head$, we set $\pi_A(\rect')=\pi_A(\rect)$. For every $A\notin \head$, we set $\pi_A(\rect')=(-\infty,\infty)$. Then it is straightforward to see that $|\Q(\I)\cap \rect|=|\Q'(\I)\cap \rect'|$, under bag semantics.
\end{proof}

\begin{proof}[Correctness proof of query procedure in Section~\ref{subsec:Gkstar}]
$D^L(\rect)$ counts every tuple \\$t\in\star(\mathcal{A}_1,\ldots,\mathcal{A}_k,\mathcal{B})\cap \rect$ with a light projection on $\mathcal{B}$, i.e., $\pi_{\mathcal{B}}(t)\in L$. Similarly, by definition, $\sum_{b_i \in H}(\prod_{j \in [k]}h_{i,j})$ counts all tuples $t\in\star(\mathcal{A}_1,\ldots,\mathcal{A}_k,\mathcal{B})\cap \rect$ such that $\pi_{\mathcal{B}}(t)=b_i\in H$. The result follows.
\end{proof}

\begin{proof}[Time and Space analysis of our index in Section~\ref{subsec:Gkstar}]
    By the definition of the light values, we know that for each $b_i \in L$, we have $|R_j^{b_i}|=O(\frac{N}{\alpha})$. Let $t \in R_j^\I$ be an arbitrary tuple. Let $c_t = |\{t' \in Q_L | \pi_{\mathcal{A}_j}(t') = \pi_{\mathcal{A}_j}(t)\}|$ be the number of join results with light value on $\mathcal{B}$ whose projection with respect to the attributes $\mathcal{A}_j$ is $\pi_{\mathcal{A}_j}(t)$. By the definition of $Q_L$, if $\pi_{\mathcal{B}}(t) \in H$, then $c_t = 0$, and if $\pi_{\mathcal{B}}(t) \in L$, then $c_t = |\Join_{j' \in [k], j' \neq j} R_j^{\pi_{\mathcal{B}}(t)}|$. Therefore, by the bounds on the size of the results of joins~\cite{atserias2013size}, for any $t \in R_j$, we have 
    $c_t \leq \Pi_{j' \in [k], j' \neq j} |R_j^{\pi_{\mathcal{B}}(t)}| \leq \left(\frac{N}{\alpha}\right)^{k - 1}.$
    Thus, we have
    $
        |Q_L| \leq \sum_{j \in [k]}\sum_{t \in R_j}c_t
        \leq \sum_{j \in [k]}\sum_{t \in R_j}\left(\frac{N}{\alpha}\right)^{k - 1}
        \leq N \left(\frac{N}{\alpha}\right)^{k - 1}
        = \frac{N^k}{\alpha^{k-1}}.
    $
   Hence, $D^L$ uses $\O(Q_L)=\O(\frac{N^k}{\alpha^{k-1}})$ space. Using the generic join algorithm~\cite{ngo2014skew}, $Q_L$ is also computed in $O(\frac{N^k}{\alpha^{k-1}})$ time.
   In the query procedure, we run one range count query on $D^L$ that takes $\O(1)$ time. By the definition of the heavy values, it is straightforward to see that $|H|=O(\frac{N}{N/\alpha}) = O(\alpha)$. 
    For each $b_i\in H$ and $R_j\in \allrel$ the index $D^H_{i,j}$ uses $\O(|R_j^{b_i}|)$ space. Notice that $\sum_{b_i \in H}\sum_{j \in [k]}R^{b_i}_{j} = O(N)$ so in total, all range trees $D^H_{i,j}$ use $\O(N)$ space and can be constructed in $\O(N)$ time.
    For each $b_i \in H$ and $R_j\in \allrel$ we run a count query using $D^H_{i,j}$ in $\O(1)$ time to compute $h_{i,j}$. We run $O(\alpha\cdot k)$ count queries, so the query time is $\O(\alpha)$.
\end{proof}

\subsection{Extension to $\nnprob$}
\label{appndx:annExt}
Our ideas can be extended to handle the $\nnprob$ problem.
Instead of using the range tree indexes from Theorem~\ref{rect}, we use the Approximate Nearest Neighbor index from Theorem~\ref{nn}. More specifically, we construct the ANN index $D^L$ on $Q_L$, and the ANN index $D^{H}_{i,j}$ on $\pi_{\mathcal{A}_j}(R_j^{b_i})$, for every $b_i\in H$ and $j\in [k]$.

Given a query point $q$, we first use the index $D^L$ to find a tuple $t^{(L)}\in Q_L$ such that $\dist(q,t^{(L)})\leq (1+\eps)\dist(q,Q_L)$. Next, we go through the heavy values one by one. For each $b_i\in H$, and for every $R_j\in \allrel$, we use $D^H_{i,j}$ to run the approximate nearest neighbor query on the query point $\pi_{\mathcal{A}_j}(q)$. More formally, we set 
$t^{(H)}_{i,j}= D^H_{i,j}(\pi_{\mathcal{A}_j}(q))$ for every $j\in[k]$. We define $t_i^{(H)}= (t_{i,1}^{(H)},t_{i,2}^{(H)},\ldots, t_{i,k}^{(H)}, b_i)$, and we set $t^{(H)}=\argmin_{t_i^{(H)}}\dist(q,t_i^{(H)})$.
We return $t'=\argmin_{t\in \{t^{(L)}, t^{(H)}\}}\dist(q,t)$.

\begin{lemma}
\label{lem:eNN}
    It holds that $\dist(q,t')\leq (1+\eps)\dist(q,\star(\I))$.
\end{lemma}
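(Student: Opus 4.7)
The plan is to let $p^{*}=\nn(q,\star(\I))$ denote the true nearest neighbor, write $p^{*}=(p^{*}_{1},\ldots,p^{*}_{k},b^{*})$ with $p^{*}_{j}=\pi_{\mathcal{A}_{j}}(p^{*})$ and $b^{*}=\pi_{\mathcal{B}}(p^{*})\in\adom(\mathcal{B})$, and split on whether $b^{*}\in L$ or $b^{*}\in H$. Since $t'$ is the closer of $t^{(L)}$ and $t^{(H)}$, it suffices in each case to exhibit one of these candidates within factor $(1+\eps)$ of $\dist(q,p^{*})$.

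In the light case ($b^{*}\in L$), by definition $p^{*}\in Q_{L}$, so the ANN guarantee of $D^{L}$ from Theorem~\ref{nn} gives $\dist(q,t^{(L)})\leq (1+\eps)\dist(q,Q_{L})\leq (1+\eps)\dist(q,p^{*})$, and thus $\dist(q,t')\leq\dist(q,t^{(L)})\leq (1+\eps)\dist(q,p^{*})$.

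In the heavy case ($b^{*}=b_{i}\in H$), the main step is to exploit that squared Euclidean distance decomposes coordinate-wise across the attribute blocks $\mathcal{A}_{1},\ldots,\mathcal{A}_{k},\mathcal{B}$:
\[
\dist(q,p^{*})^{2}=\sum_{j\in[k]}\dist\bigl(\pi_{\mathcal{A}_{j}}(q),p^{*}_{j}\bigr)^{2}+\dist\bigl(\pi_{\mathcal{B}}(q),b_{i}\bigr)^{2}.
\]
Because $p^{*}$ is a valid join result with $\mathcal{B}$-value $b_{i}$, the tuple giving rise to $p^{*}_{j}$ lies in $R_{j}^{b_{i}}$, so $p^{*}_{j}\in\pi_{\mathcal{A}_{j}}(R_{j}^{b_{i}})$. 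The ANN guarantee of $D^{H}_{i,j}$ therefore yields $\dist(\pi_{\mathcal{A}_{j}}(q),t^{(H)}_{i,j})\leq (1+\eps)\dist(\pi_{\mathcal{A}_{j}}(q),p^{*}_{j})$ for every $j\in[k]$. Summing the squares and using the same identity in reverse on $t_{i}^{(H)}=(t^{(H)}_{i,1},\ldots,t^{(H)}_{i,k},b_{i})$ gives
\[
\dist(q,t_{i}^{(H)})^{2}\leq (1+\eps)^{2}\!\sum_{j\in[k]}\dist(\pi_{\mathcal{A}_{j}}(q),p^{*}_{j})^{2}+\dist(\pi_{\mathcal{B}}(q),b_{i})^{2}\leq (1+\eps)^{2}\dist(q,p^{*})^{2}.
\]
Taking square roots and using $\dist(q,t^{(H)})\leq\dist(q,t_{i}^{(H)})$ (by the $\argmin$ over $H$) together with $\dist(q,t')\leq\dist(q,t^{(H)})$ finishes this case.

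The only subtlety I anticipate is ensuring that $t_{i}^{(H)}$ is a genuine element of $\star(\I)$ so that comparing it against $p^{*}$ is meaningful: this requires $R_{j}^{b_{i}}\neq\emptyset$ for every $j\in[k]$ whenever $b_{i}$ actually occurs in the join output, which holds after removing dangling tuples in preprocessing. Once that is in place, the two cases combine to give $\dist(q,t')\leq (1+\eps)\dist(q,\star(\I))$, as claimed.
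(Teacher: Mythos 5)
Your proof is correct and follows essentially the same route as the paper's: split on whether the true nearest neighbor's $\mathcal{B}$-projection is light or heavy, invoke the ANN guarantee of $D^L$ in the light case, and in the heavy case decompose the squared Euclidean distance over the attribute blocks and apply the per-relation guarantee of each $D^H_{i,j}$. You are in fact slightly more careful than the paper in two places — you explicitly carry the $\dist^2(\pi_{\mathcal{B}}(q),b_i)$ term through the decomposition (which the paper's displayed computation silently drops, harmlessly, since that term is common to $t_i^{(H)}$ and $t^*$) and you flag that $t_i^{(H)}\in\star(\I)$ requires every $R_j^{b_i}$ to be nonempty, i.e., dangling tuples removed — but these are refinements of the same argument, not a different one.
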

\begin{proof}
    Let $t^*$ be the nearest neighbor in $\star$ from the point query $q$. If $\pi_{B}(t^*)\in L$, by definition $\dist(q,t')\leq \dist(q,t^{(L)})\leq (1+\eps)\dist(q,t^*)$. Then, we assume $\pi_{\mathcal{B}}(t^*)=b_i\in H$. By definition of $D_{i,j}^H$, we have $\dist(\pi_{\mathcal{A}_j}(q),t_{i,j}^{(H)})\leq(1+\eps)\dist(\pi_{\mathcal{A}_j}(q),\pi_{\mathcal{A}_j}(t^*))$.
    Hence, 
    \begin{align*}
    \dist(q,t^{(H)})\!\!\leq\!\! \dist(q,t_i^{(H)})\!=\!\!\sqrt{\sum_{j\in[k]}\!\!\!\dist^2\left(\pi_{\mathcal{A}_j}(q),t_{i,j}^{(H)}\right)}\!\!\leq\!\! \sqrt{\sum_{j\in[k]}\!\!(1+\eps)^2\dist^2\left(\pi_{\mathcal{A}_j}(q),\pi_{A_j}(t^*)\right)}\!=\!(1+\eps)\dist(q,t^*).    
    \end{align*}
    \vspace{-1em}
\end{proof}

While the index is described for join queries, it can be extended to CQs by constructing the approximate nearest neighbor index $D^L$ on $\pi_{\head(\Q)}(\Q_L)$ and the index $D_{i,j}^H$ on $\pi_{\mathcal{A}_j\cap \head(\Q)}(R_j^{b_i})$.

\subsection{Further optimizations}
\label{appndx:furtherOpt}
In Section~\ref{subsec:hierarchical} we store $Q_L$, i.e., the tuples in $\Q(\I)$ whose projection on $r$ is light. Then, for any heavy value $r_i$ and any child $c$ we stored $Q_i^c$. We note that the set $Q_i^c$ is the result of a hierarchical query $\Join_{R_j\in R_c}R_j^{r_i}$. Hence, instead of pre-computing $Q_i^c$, the main idea is to recursively apply our technique by computing the light and heavy values in $\adom(c)$ for every root's child $c$. Equivalently, we can recursively apply this idea to the nodes in the next level, and so on. Let $\ell$ be the last level such that in all nodes on levels $0,1,\ldots, \ell$ we applied the heavy-light decomposition.
For a node $v$ in the attribute tree, let $\Q_v(\I)$ be the result of the query defined by the relations in $\allrel_v$. If $V_{h}$ is the set of nodes at level $h$, we define $m_{h}=\max_{u\in V_{h}}|\allrel_u|$.
In every node $v$ of the attribute tree up to level $\ell$, we compute $Q_L^{v}=\{t\in\Q_v(\I)\mid \pi_{v}(t)\in L_v\}$, where $L_v=\{v_i\in \adom(v)\mid \frac{N}{\alpha}>|T_i|\}$, and $T_i=\bigcup_{R_j\in\allrel_{v}}R_j^{v_i}$.
We construct a range counting index $D_v^L$ on $Q_L^{v}$. Since the attribute tree has constant number of nodes, we have that all
indexes $D_v^L$ use $\O(\frac{N^m}{\alpha^{m-1}}+\sum_{h\in [\ell]}\frac{N^{m_h}}{\alpha^{m_h-1}})=\O(\frac{N^m}{\alpha^{m-1}})$ space.
Every heavy value $v_i\in H_v$ in a node $v$ up to level $\ell$, where $H_v=\{v_i\in \adom(v)\mid \frac{N}{\alpha}\leq |T_i|\}$, has pointers to the children of $v$ in the attribute tree.
For every value $v_i\in \adom(v)$ and $v$'s child node $u$ at level $\ell+1$, we construct a range counting index $D_{i,u}^H$ over $\Q_i^u$. Hence, the total space used for all indexes $D_{i,u}^H$ is $\O(N^{m_{\ell+1}})$. Overall, the new index uses $\O(\frac{N^m}{\alpha^{m-1}}+N^{m_{\ell+1}})$ space.

Given a query rectangle $\rect$, we run $\O(\alpha^{\ell})$ queries to indexes $D_{v}^L$ and $\O(\alpha^{\ell+1})$ queries to indexes $D_{i,u}^H$. In total, the query time is $\O(\alpha^{\ell+1})$. If we set the query time to be $T$, then the space is $\O(\frac{N^m}{T^{(m-1)/(\ell+1)}}+N^{m_{\ell+1}})$. Since we do not know the level of the attribute tree that the space upper bound is minimized, we evaluate the asymptotic space complexity in every level and we select the minimum one (it can be done in $O(1)$ time in data complexity). Let $\mathcal{L}=O(1)$ be the maximum level (depth) of the attribute tree of the hierarchical query $\Q$. If the query time is $T$ then the space of our index is $\O\left(\min_{\ell\in \{0,\ldots \mathcal{L}-1\}} \frac{N^m}{T^{(m-1)/(\ell+1)}}+N^{m_{\ell+1}}\right)$. 

The proposed index also works for the \nnprob\ problem constructing quadtrees instead of range trees for the $D_{i,u}^H$ indexes, and using the ideas from the proof of Lemma~\ref{lem:eNN}.

\subsection{Missing details in Section~\ref{subsec:extension}}
\label{appndx:extension}
\begin{definition}[Generalized Hypertree Decomposition]
\label{def:ghd}
Given a join query $\Q=(\V,\E)$, a GHD of $\Q$ is a pair $(\T, \lambda)$, where
$\T$ is a tree as an ordered set of nodes and $\lambda: \T \to 2^{\V}$ is a labeling function which associates to each vertex $u \in \T$ a subset of attributes in $\V$, $\lambda_u$, such that the following conditions are satisfied:
i) (coverage) For each $e \in \E$, there is a node $u \in \T$ such that $e \subseteq \lambda_u$;
ii) (connectivity) For each $x \in \V$, the set of nodes $\{u \in \T: x \in \lambda_u\}$ forms a connected subtree of $\T$.
\end{definition}

\newcommand{\x}{\chi}
A fractional edge cover of a join query $\Q = (\V, \E)$ is a point $\x = \{\x_e \mid e\in \E\} \in \mathbb{R}^{|\E|}$ such that for any vertex $v \in \V$, $\sum_{e \in \E_v} \x_e \ge 1$.
As proved in~\cite{atserias2013size}, the maximum output size of a join query $\Q$ is $O(N^{\lVert \x \rVert_1})$. Since the above bound holds for any fractional edge cover, we define $\rho = \rho(\Q)$ to be the fractional cover with the smallest $\ell_1$-norm, i.e., $\rho(\Q)$ is the value of the objective function of the optimal solution of linear programming (LP):
\begin{equation}
\label{eq:lp}
    \min \sum_{e \in \E} \x_e, \;\text{s.t.}\; \forall e \in E: \x_e \ge 0 \;\text{and}\; \forall v \in \V: \sum_{e \in \E_v} \x_e \ge 1.
\end{equation}

Given a join query $\Q$, one of its GHD $(\T, \lambda)$ and a node $u \in \T$, the width of $u$ is defined as the optimal fractional edge covering number of its derived hypergraph $(\lambda_u, \E_u)$, where $\E_u = 
\{e\cap \lambda_u: e \in \E\}$. Given a join query and a GHD $(\T, \lambda)$, the width of $(\T, \lambda)$ is defined as the maximum width over all nodes in $\V_\T$. Then, the hierarchical hypertree width of a join query follows:

\begin{definition}[Hierarchical Hypertree Width] 
\label{def:hhtw}
The hierarchical hypertree width of a join query $\Q$, denoted as $\hhtw(\Q)$, is
\[\hhtw(\Q) = \min_{(\T, \lambda):\T \textrm{ is hierarchical}} \max_{u \in \T} \rho(\lambda_u, \E_u)\]
i.e., the minimum width over all hierarchical GHDs. 
\end{definition}

Basically, $O(N^{\hhtw(\Q)})$ is an upper bound on the number of join results materialized for each node in $\T$.
Using the definitions above, the algorithm for general CQs is the following. Given a CQ $\Q$ we construct the hierarchical GHD $(\T,\lambda)$ with the minimum width $\hhtw(\Q)$ over all hierarchical GHDs.
For each node $u$ of $\T$, we evaluate all query results among the relations that correspond to $\E_u$ and store them in a new relation $R_u'$. Then, we notice that $\Q':-\Join_{u\in \T}R_u'(\lambda_u)$ is a hierarchical query, so we construct the index from Theorem~\ref{thm:mainHier2} over $\Q'$.

%Using the hierarchical query defined by $(\T,\lambda)$, we precompute all tuples in every node in $\T$ and then construct the index from Theorem~\ref{thm:mainHier2}. 
%for the $\prob$ problem %(resp. Theorem~\ref{thm:ANNHier2}for the $\nnprob$ problem). %Putting everything together, we conclude with the following theorem.

\section{Indexes for the $\sprob$ problem}
\label{appndx:RSQ}
Given a CQ $\Q$ over a database instance $\I$ of size $N$ the goal is to construct an index, such that given a query rectangle $\rect\in \Re^{|\head|}$, a sample from the multi-set $\rect\cap \Q(\I)$ is returned uniformly at random. We show that all indexes from Section~\ref{sec:general} can be straightforwardly modified to to solve the $\sprob$ problem.

First, we claim that we can focus on full CQs as in Lemma~\ref{lem:equiv}. Indeed, if we have an index $D$ for the $\sprob$ problem on a full CQ $\Q'$ oer a set of attributes $\allattr$, then we can use the same index with the same guarantees for the $\sprob$ problem on the CQ $\Q$ with the same body attributes and relations as $\Q'$, where $\head(\Q)\subset \allattr$. Given the query hyper-rectangle $\rect\in\Re^{|\head(\Q)|}$, we construct the rectangle $\rect'\in \Re^{|\allattr|}$ as follows. For every $A\in \head(\Q)$, we set $\pi_A(\rect')=\pi_A(\rect)$. For every $A\notin \head(\Q)$, we set $\pi_A(\rect')=(-\infty,\infty)$. Using $D$, let $t$ be the sample returned from $\rect'\cap \Q'(\I)$ uniformly at random. We return $\pi_{\head{(\Q)}}(t)$. Every tuple in the set $\rect'\cap \Q'$ is mapped to exactly one tuple in the multi-set $\rect\cap \Q$ and vice versa. Hence, $\pi_{\head{(\Q)}}(t)$ is a sample from $\rect\cap \Q$ selected uniformly at random.

Next, we consider the full $\star$ query. The index is similar to the index from Section~\ref{subsec:Gkstar}. Let $D^*$ be the index from Section~\ref{subsec:Gkstar}. Next, we partition the values in $\adom(\mathcal{B})$ as light and heavy values. We follow the notation from Section~\ref{subsec:Gkstar}. All results $Q_L$ generated by light values are stored in a range tree $D^L$.
It is known that range trees can return a sample uniformly at random in a query hyper-rectangle in $\O(1)$ time~\cite{afshani2019independent, afshani2017independent}.
%It is known that given a query hyper-rectangle $\rho$, the range tree constructed over a set of points $P$ returns a point from $P\cap \rho$ uniformly at random.
Next, for each heavy value $b_i$ and for each relation $R_j$ for $j\in[m]$ we construct a range tree $D_{i,j}^H$ (for sampling uniformly at random) over $R_{j}^{b_i}$. Following the analysis from Section~\ref{sec:general}, the index takes $\O(N+N^k/T^{k-1})$ space and can be constructed in $\O(N+N^k/T^{k-1})$ time. Given a query hyper-rectangle $\rect$, we use $D^*$ to count $w_1=|Q_L\cap \rect|$ and $w_2=|Q_H\cap \rect|$, where $Q_L$ (resp. $Q_H$) is the set of query results generated by light (resp. heavy) values. With probability $w_1/(w_1+w_2)$, we return a sample generated by light values, while with probability $w_2/(w_1+w_2)$ we return a sample generated by a heavy value. Assume that we return a join result generated by a light value. Using the range tree $D^L$, we return a sample from $Q_L\cap \rect$ uniformly at random. Next, assume that we return a join result generated by a heavy value. Using $D^*$, for every heavy value $b_i\in \mathcal{B}$, we get $z_i=|\{t\in \Q(\I)\mid \pi_{\mathcal{B}}(t)=b_i\}|$. We note that $\sum_{b_h\in H}z_h=w_2$, where $H$ is the set of heavy values. We sample a heavy value $b_i$ with probability $\frac{z_i}{\sum_{b_h\in H}z_h}$. If the value $b_i\in H$ is selected uniformly at random, then we use $D_{i,j}^H$ and return a random sample $s_j$ from $R_j^{b_i}$, for every $j\in[k]$. In the end, we return $\times_{j\in[k]}s_j$. We can easily observe that in all cases we return a valid sample from $\rect\cap \Q$ uniformly at random, using the correctness of $D^*$ and the correct counts $w_1,w_2, z_1,\ldots, z_{|H|}$. For example, a query result generated by a light value is selected with probability $\frac{w_1}{w_1+w_2}\cdot \frac{1}{w_1}=\frac{1}{w_1+w_2}$. A query result, generated by the heavy value $b_i$ is selected with probability $\frac{w_2}{w_1+w_2}\cdot \frac{z_i}{w_2}\cdot\frac{1}{z_i}=\frac{1}{w_1+w_2}$.
Finally, it follows straightforwardly from Section~\ref{subsec:Gkstar} that the query time of our index is $T$.

Similarly, we can have an index for the $\sprob$ problem over hierarchical queries as we had in Section~\ref{subsec:hierarchical}.
Let $D^*$ be the index from Section~\ref{subsec:hierarchical}. Next, we partition the values in $\adom(r)$ as light and heavy values. We follow the notation from Section~\ref{subsec:hierarchical}. All results $Q_L$ generated by light values are stored in a range tree $D^L$. Next, for each heavy value $r_i$ and for each child (in the attribute tree) $c$ we construct a range tree $D_{i,c}^H$ (for sampling uniformly at random) over $Q_i^c$. Following the analysis from Section~\ref{sec:general}, the index takes $\O(N^m/T^{m-1}+N^{m_1})$ space and can be constructed in $\O(N^m/T^{m-1}+N^{m_1})$ time. Given a query hyper-rectangle $\rect$, we use $D^*$ to count $w_1=|Q_L\cap \rect|$ and $w_2=|Q_H\cap \rect|$, where $Q_L$ (resp. $Q_H$) is the set of query results generated by light (resp. heavy) values. With probability $w_1/(w_1+w_2)$, we return a sample generated by light values, while with probability $w_2/(w_1+w_2)$ we return a sample generated by a heavy value. Assume that we return a join result generated by a light value. Using the range tree $D^L$, we return a sample from $Q_L\cap \rect$ uniformly at random. Next, assume that we return a join result generated by a heavy value. Using $D^*$, for every heavy value $r_i$, we get $z_i=|\{t\in \Q(\I)\mid \pi_c(t)=r_i\}|$. We sample a heavy value $r_i$ with probability $\frac{z_i}{\sum_{r_h\in H}z_h}$, where $H$ is the set of heavy values. If the value $r_i\in H$ is selected uniformly at random, then we use $D_{i,c}^H$ and return a random sample $s_c$ from $Q_i^c$, for every child $c$. In the end, we return $\times_{c\in C}s_c$. We can easily observe that in all cases we return a valid sample from $\rect\cap \Q$ uniformly at random, using the correctness of $D^*$ and the counts $w_1,w_2, z_1,\ldots, z_{|H|}$.
Finally, it follows straightforwardly from Section~\ref{subsec:hierarchical} that the query time of our index is $T$.

Equivalently, Theorems~\ref{thm:mainHier2},~\ref{thm:final} apply for the $\sprob$ problem, straightforwardly.

\begin{corollary}
    For every index for the $\prob$ problem in Section~\ref{sec:general}, there exists an index with the same space, preprocessing, and query time gurantees for the $\sprob$ problem.
\end{corollary}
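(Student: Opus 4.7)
The plan is to show that each index construction in Section~\ref{sec:general} admits a natural sampling-augmented variant with matching asymptotic bounds. First, I would reduce the $\sprob$ problem on a general CQ $\Q$ to the $\sprob$ problem on its associated full CQ $\Q'$, mirroring Lemma~\ref{lem:equiv}: given the head-attribute query rectangle $\rect$, extend it to $\rect' \in \Re^{|\bigcup_i \mathbf{A}_i|}$ by setting the intervals on non-head attributes to $(-\infty,\infty)$, draw a uniform sample $t$ from $\rect' \cap \Q'(\I)$, and return $\pi_{\head(\Q)}(t)$. Under bag semantics each copy of a tuple in the multi-set $\rect \cap \Q(\I)$ corresponds to exactly one tuple in $\rect' \cap \Q'(\I)$, so the projection preserves uniformity.

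Next, for each full-CQ construction in Section~\ref{sec:general} ($k$-star, hierarchical, and general via hierarchical GHD), I would (i) replace every range counting tree by a range tree supporting $\O(1)$-time uniform sampling from a query rectangle~\cite{afshani2019independent, afshani2017independent}, and (ii) keep alongside it the corresponding counting index $D^*$ from Section~\ref{sec:general}. The query procedure then proceeds by a cascade of weighted coin flips that exactly mirrors the heavy--light decomposition. For $k$-star, use $D^*$ to obtain $w_1=|Q_L\cap \rect|$ and $w_2=|Q_H\cap \rect|$, draw a light branch with probability $w_1/(w_1+w_2)$ and sample from $Q_L\cap \rect$ via $D^L$; otherwise, using $D^*$ evaluate $z_i=|\{t\in \Q(\I)\cap \rect:\pi_{\mathcal{B}}(t)=b_i\}|$ for each heavy $b_i$, pick $b_i$ with probability $z_i/w_2$, then draw an independent uniform sample $s_j$ from $R_j^{b_i}$ inside $\pi_{\mathcal{A}_j}(\rect)$ using $D^H_{i,j}$ and return $\times_j s_j$. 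The hierarchical case recurses the same idea down the attribute tree, and the general case applies this on top of the hierarchical GHD, inheriting the $N\mapsto N^{\hhtw(\Q)}$ inflation.

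All sampling-enhanced range trees have the same $\tilde O$-asymptotic space and preprocessing as their counting counterparts, and each sample query adds only $\O(1)$ overhead per primitive; the number of primitive calls along the branching cascade equals that in the counting query procedure, so the stated space, preprocessing, and query-time bounds from Theorems~\ref{range}, \ref{thm:mainHier2}, and \ref{thm:final} carry over verbatim. The main obstacle is proving uniformity when the heavy--light decomposition is applied \emph{recursively} as in Appendix~\ref{appndx:furtherOpt}: at every level one must sample each branch with probability proportional to its exact conditional count, which requires that $D^*$ supports the appropriate conditional-count queries. This is automatic because each such conditional count coincides with a range-counting query on the same tree that $D^*$ already stores, so the product of conditional branch probabilities telescopes to $1/|\Q(\I)\cap \rect|$ on every specific tuple. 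Uniformity and the bound on query time then follow by a straightforward induction on the depth of the decomposition.
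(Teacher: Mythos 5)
Your proposal is correct and follows essentially the same route as the paper's Appendix~\ref{appndx:RSQ}: reduce to the full CQ via the rectangle-extension trick of Lemma~\ref{lem:equiv}, swap each counting range tree for one supporting $\O(1)$-time uniform range sampling, and drive the sample through the heavy--light branching with probabilities obtained from the counting index, so that the branch probabilities multiply out to $1/(w_1+w_2)$ for every result tuple. Your explicit restriction of the conditional counts $z_i$ to $\rect$ and the telescoping argument for the recursive decomposition are slightly more careful than the paper's write-up, but they formalize exactly what the paper intends.
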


\begin{figure*}[h!t]
	\centering
    \begin{minipage}[t]{0.32\textwidth}
		\includegraphics[width=\textwidth]{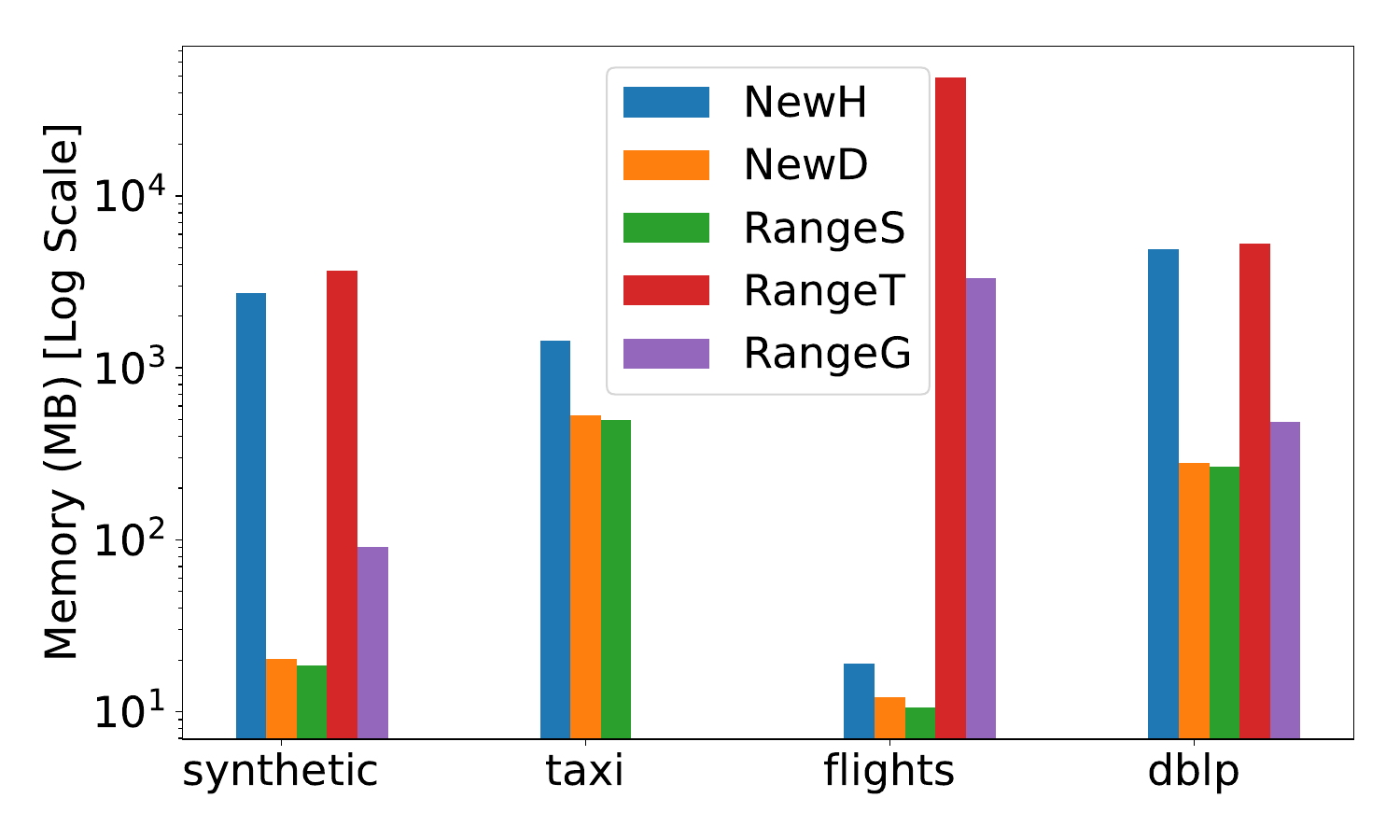}
        \vspace{-2.5em}
		\caption{Memory Usage.} \label{fig:memcomp}
    \end{minipage}
    \hfill
    \begin{minipage}[t]{0.32\textwidth}
		\includegraphics[width=\textwidth]{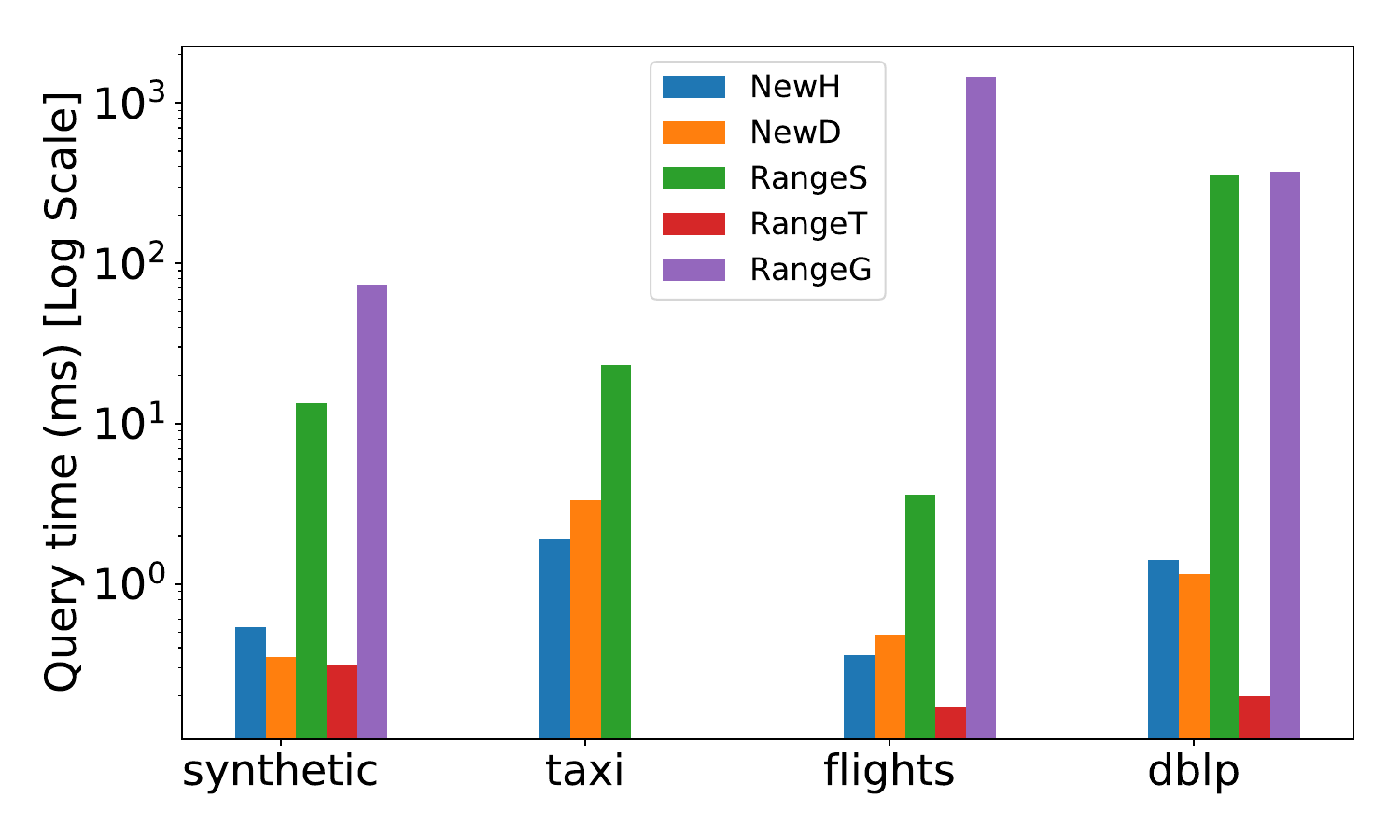}
        \vspace{-2.5em}
		\caption{Query Time.} \label{fig:timecomp}
    \end{minipage}    
    \hfill
    \begin{minipage}[t]{0.32\textwidth}
		\includegraphics[width=\textwidth]{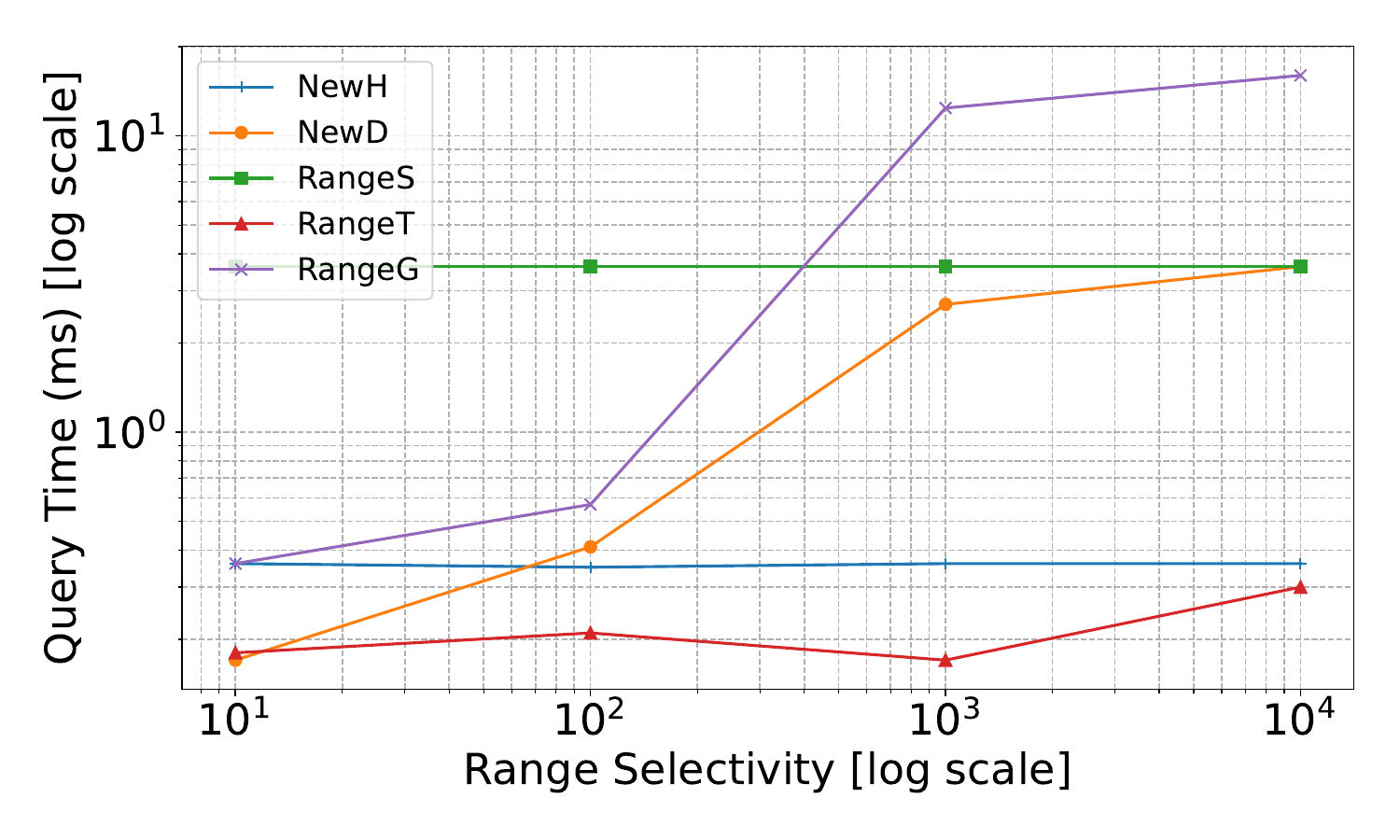}
        \vspace{-2.5em}
		\caption{Impact of Range Selectivity on $\dataflights$ dataset.} \label{fig:size-time-flights}
    \end{minipage}
    \hfill
    \begin{minipage}[t]{0.32\textwidth}
		\includegraphics[width=\textwidth]{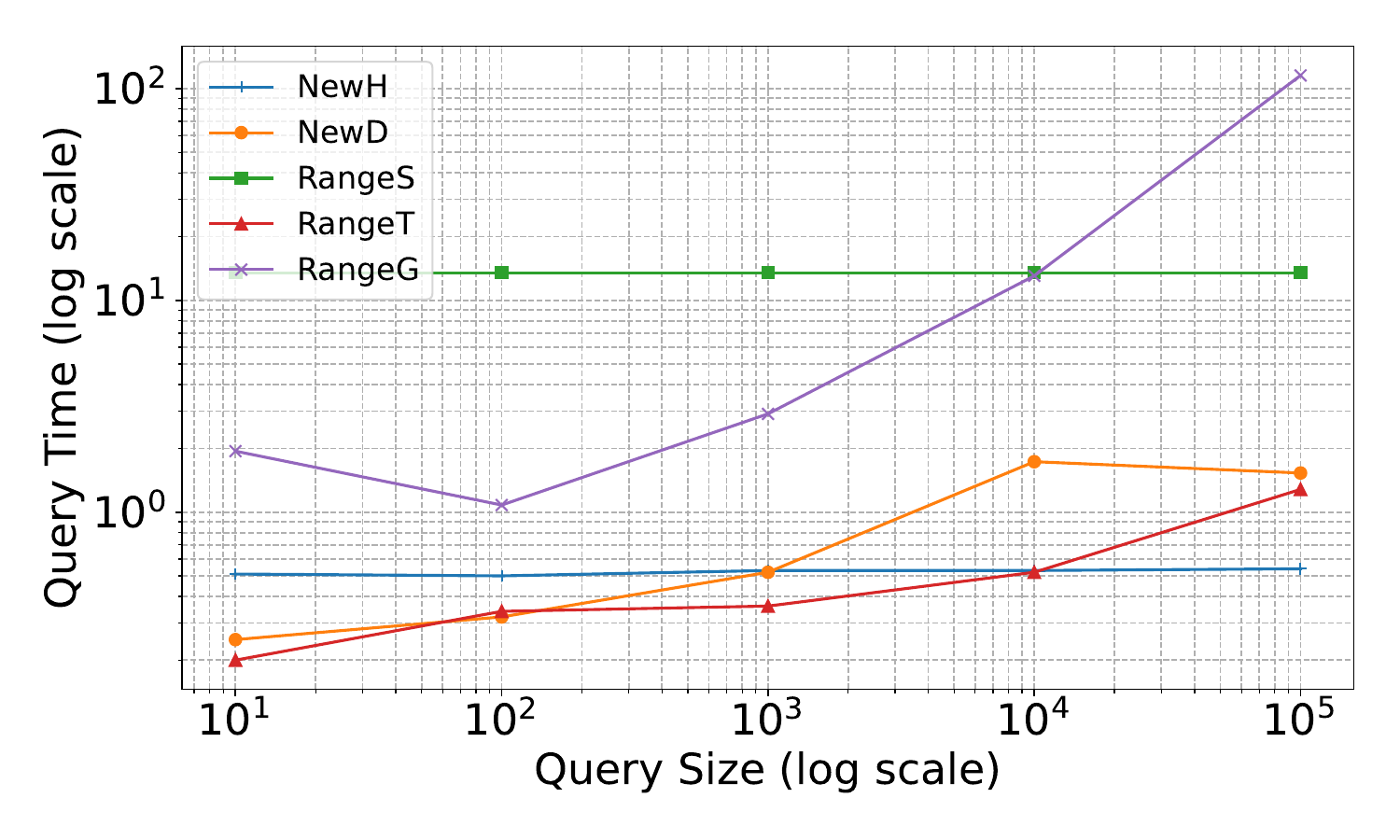}
        \vspace{-2.5em}
		\caption{Impact of Range Selectivity on $\datasyn$ dataset.} \label{fig:size-time-syn}
    \end{minipage}
    \hfill
    \begin{minipage}[t]{0.32\textwidth}
		\includegraphics[width=\textwidth]{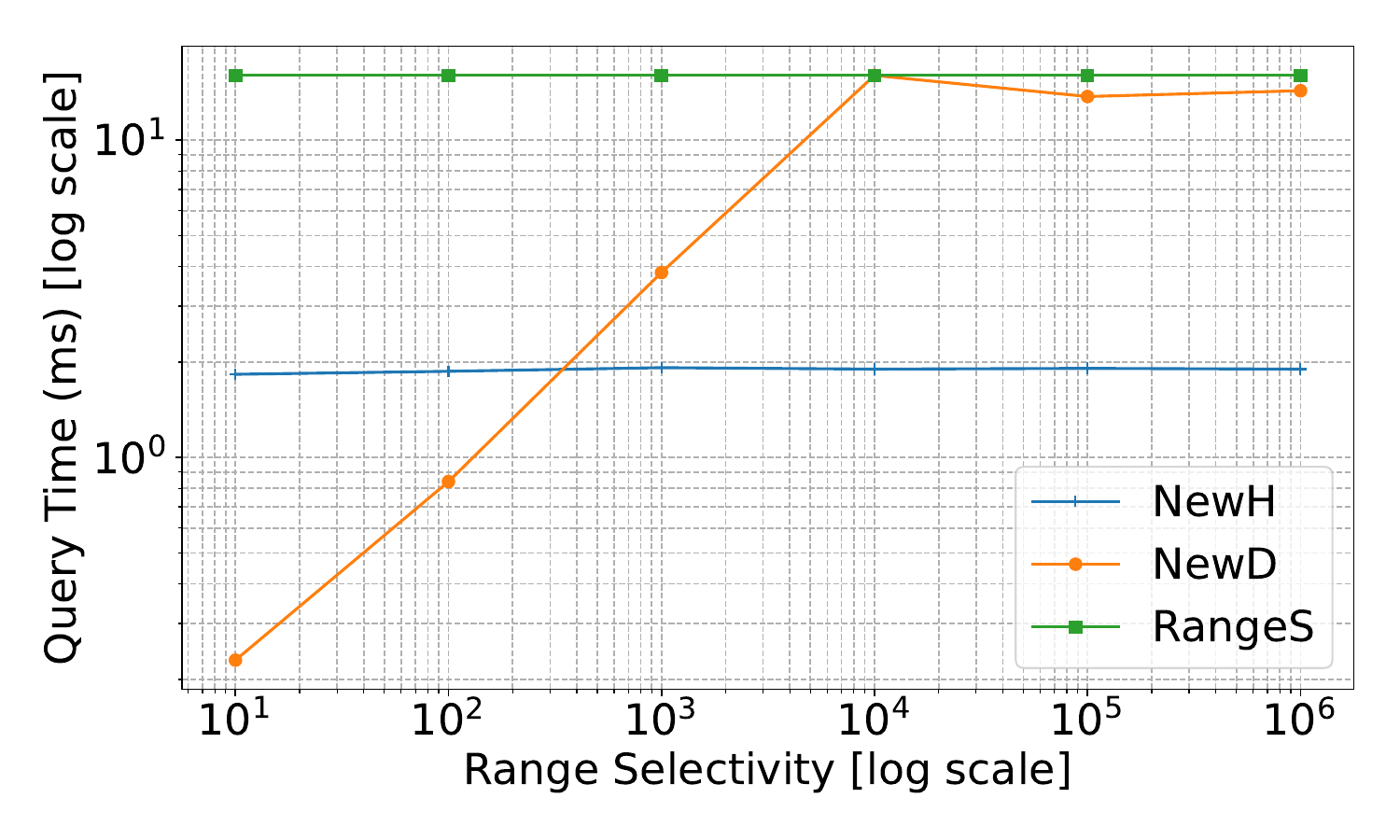}
        \vspace{-2.5em}
		\caption{Impact of Range Selectivity on $\datataxi$ dataset.} \label{fig:size-time-taxi}
    \end{minipage}
    \hfill
    \begin{minipage}[t]{0.32\textwidth}
		\includegraphics[width=\textwidth]{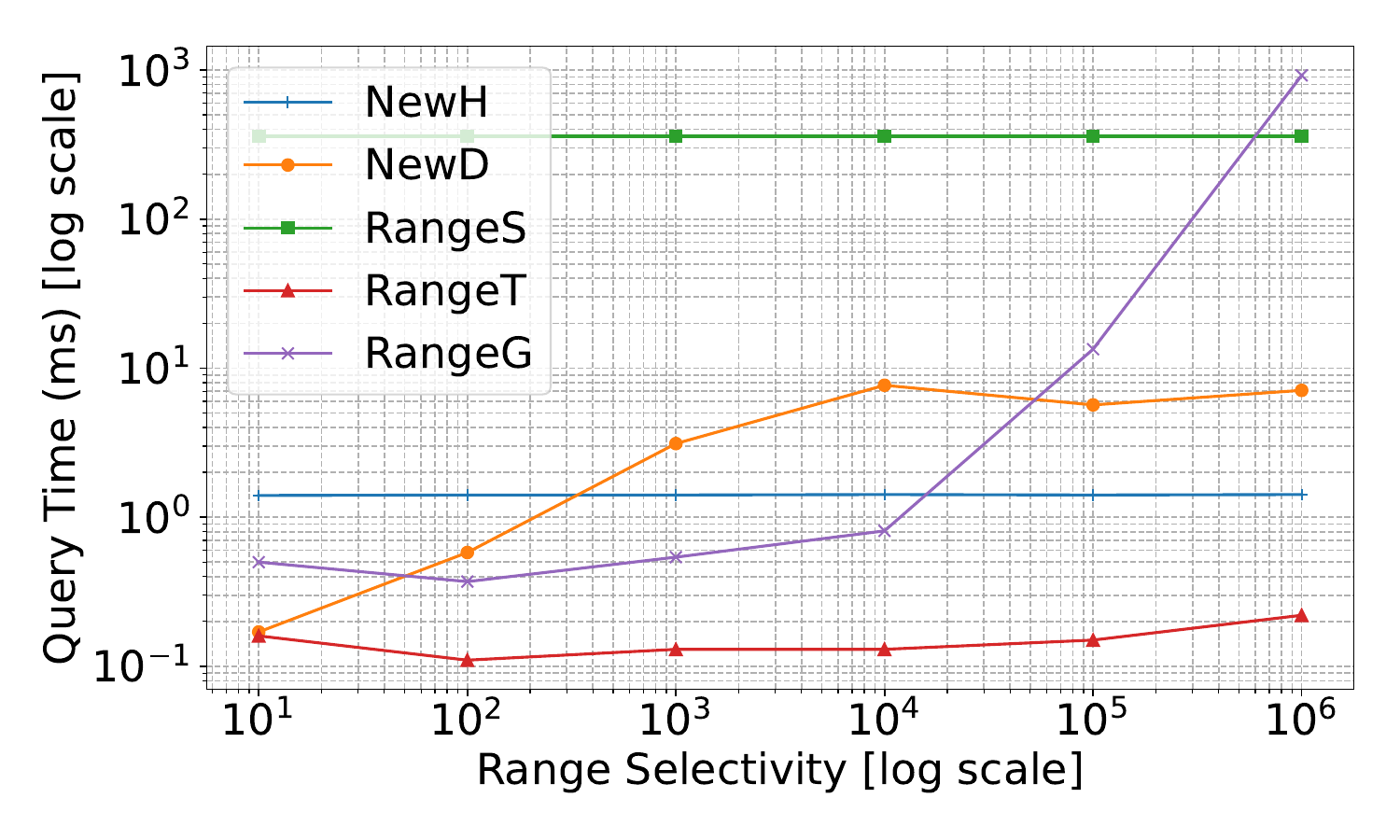}
        \vspace{-2.5em}
		\caption{Impact of Range Selectivity on $\datadblp$ dataset.} \label{fig:size-time-dblp}
    \end{minipage}
 \vspace{-1em}
 %\caption{Experiment configuration: }
\end{figure*}

\begin{figure*}[h!t]
	\centering
    \begin{subfigure}[t]{0.24\textwidth}
		\includegraphics[width=\textwidth]{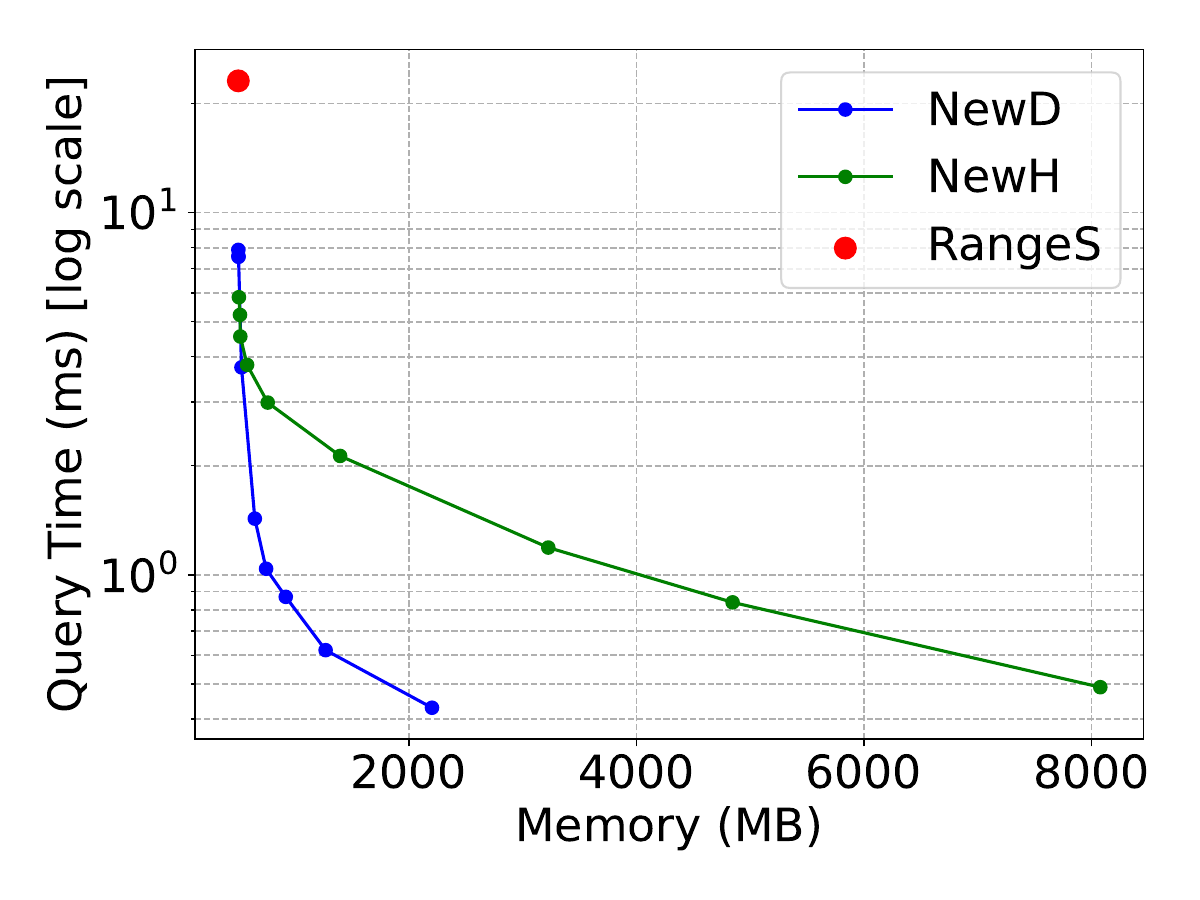}
        \vspace{-2em}
		\caption{$\datataxi$} \label{fig:taxi-trade}
    \end{subfigure}
    \hfill
    \begin{subfigure}[t]{0.24\textwidth}
		\includegraphics[width=\textwidth]{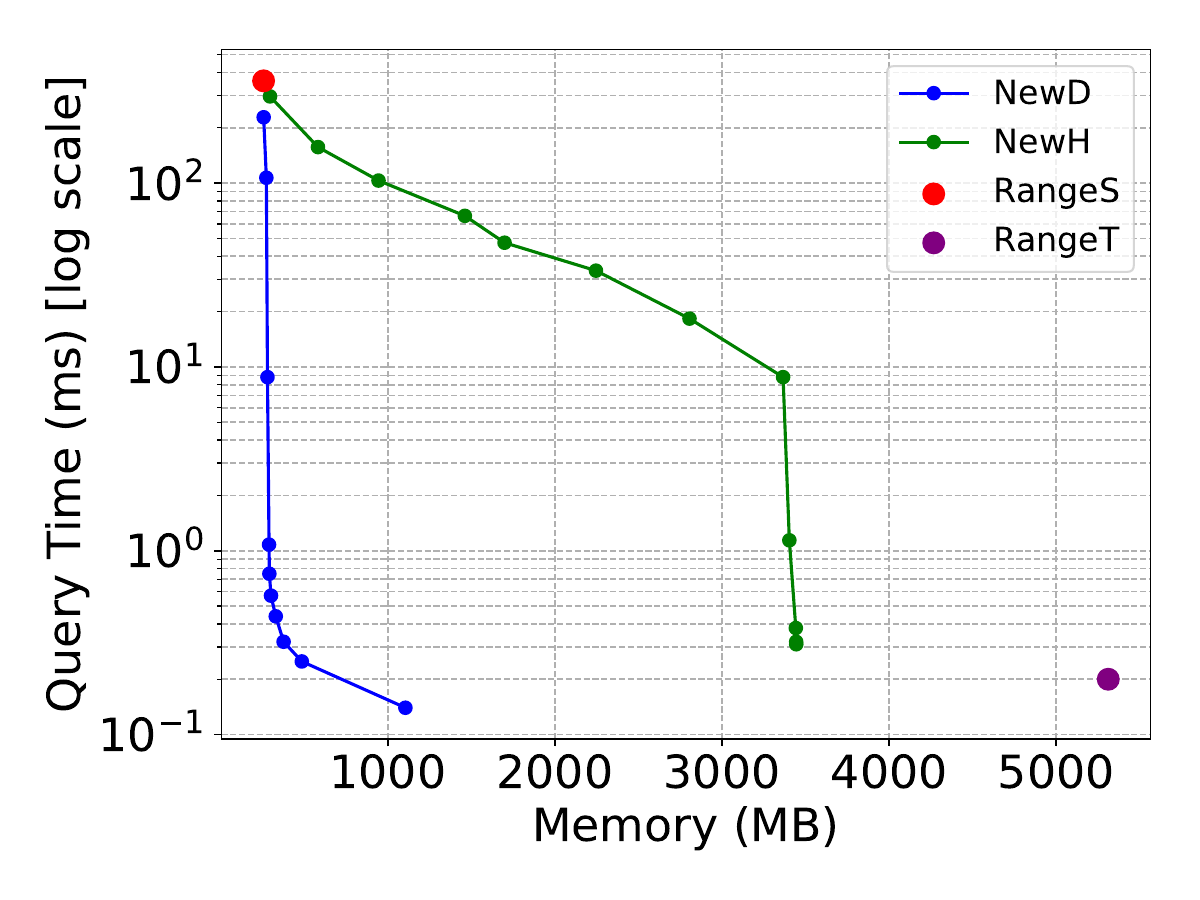}
        \vspace{-2em}
		\caption{$\datadblp$} \label{fig:dblp-trade}
    \end{subfigure}    
    \hfill
    \begin{subfigure}[t]{0.24\textwidth}
		\includegraphics[width=\textwidth]{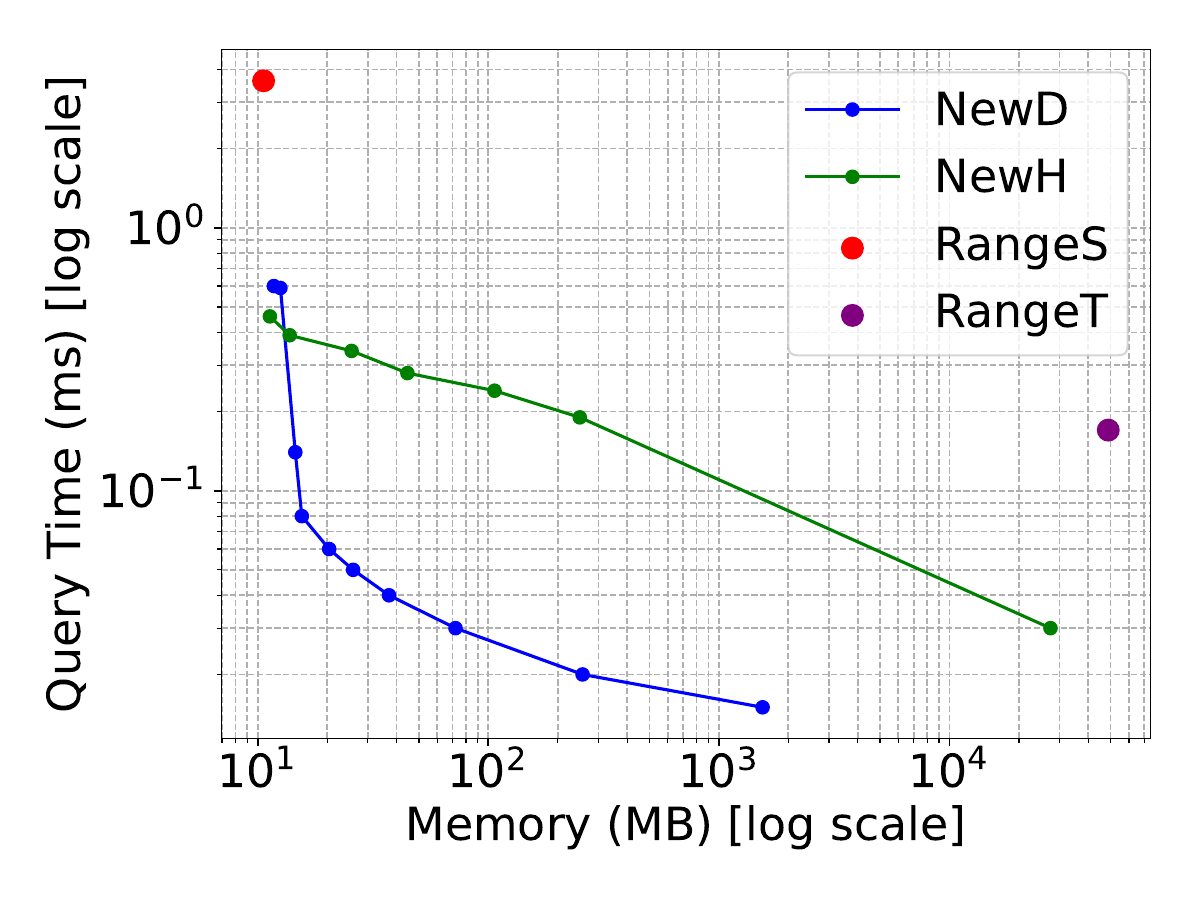}
        \vspace{-2em}
		\caption{$\dataflights$} \label{fig:flight-trade}
    \end{subfigure}
    \hfill
    \begin{subfigure}[t]{0.24\textwidth}
		\includegraphics[width=\textwidth]{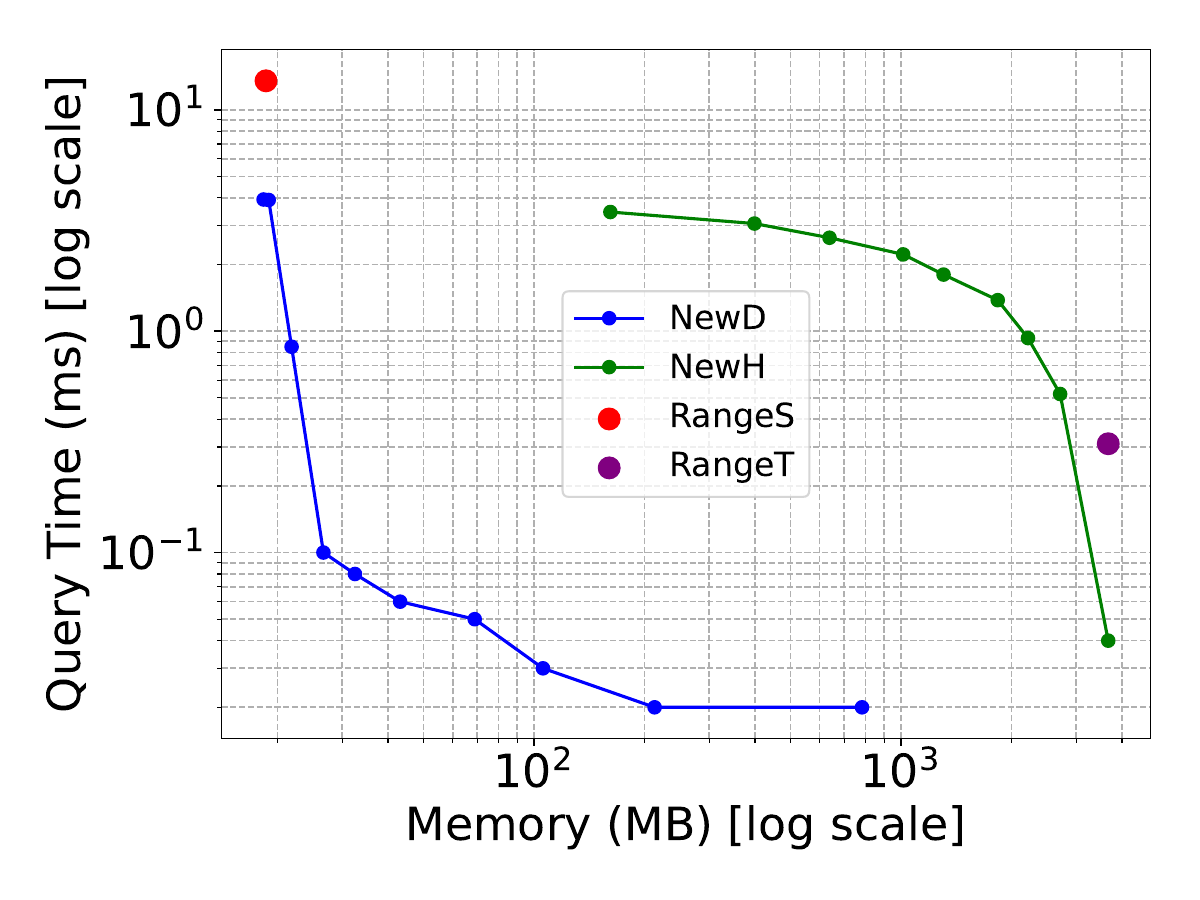}
        \vspace{-2em}
		\caption{$\datasyn$} \label{fig:syn-trade}
    \end{subfigure}
    \hfill
 \vspace{-1em}
 \caption{Memory usage vs Query Time tradeoff over different datasets.}
\end{figure*}

\section{Experiments}
\label{sec:experiments}
\subsection{Setup}
All our experiments were implemented in C++ and performed on a Linux machine with a 3.5 GHz Intel Core i9 processor and 128 GB of memory. All the codes are public at \cite{code}.

\paragraph{Queries and Algorithms} 
We evaluate all algorithms over the matrix query $\Q_\textsf{matrix}(A, C):- R_1(A, B), R_2(C, B)$, where $R_1$ and $R_2$ are relations from different datasets $I$, as described later.  %is given and the task is to return the number $\prob(\Q_\textsf{matrix},\I,\rect)$. \aryan{Maybe say we do not go for more complicated queries because of space limit?} 
We use multiple randomly generated rectangles $\rect = [l_{A}, r_{A}] \times [l_{C} \times r_{C}]$ and report the average query time and memory used for answering $\prob(\Q_\textsf{matrix}, I, \rect)$. %Moreover, for each query rectangle $\rect = [l_{A}, r_{A}] \times [l_{C} \times r_{C}]$, we aim to answer the $\prob$ query. 
The exact process of generating the rectangles is shown later in Section \ref{sec:expresults}.
We compare our indexes with the following three baselines:
\begin{itemize}[leftmargin=*]
    \item $(\algnjoin)$ This is a version of Yannakakis~\cite{yannakakis1981algorithms} algorithm specifically optimized for $\Q_{\textsf{matrix}}$. For each value $b \in \adom(B)$, we store two sorted lists in the preprocessing time. One containing the list of $A$-values $\proj_{A}(\sigma_{B=b} R_1)$ and one containing the $C$-values $\proj_{C}(\sigma_{B=b} R_2)$, denoted by $L^b_1$ and $L^b_2$, respectively. When a query $\rect = [l_{A}, r_{A}] \times [l_{C} \times r_{C}]$ comes, we go over each $b \in \adom(B)$, and obtain two counters $\Delta^b_1 = |\{a \in L^b_1 : l_{A} \leq a \leq r_{A}\}|$ and $\Delta^b_2 = |\{c \in L^b_2 : l_{C} \leq c \leq r_{C}\}|$ the number of values satisfying the given ranges in $L^b_1$ and $L^b_2$, respectively. Finally, we return $\sum_{b \in \adom(B)}\Delta^b_1 \cdot \Delta^b_2$. For each $b$, we run a binary search on $L^b_1$ and $L^b_2$ to obtain $\Delta^b_1$ and $\Delta^b_2$, respectively.
    % For each $b$, the two numbers $c^b_1$ and $c^b_2$ can be computed quickly using a binary search. At last, we report $\sum_{b \in \adom(B)}(c^b_1c^b_2)$ as the final answer to $\prob(\Q_\textsf{matrix},\I,\rect)$. It is easy to see that this algorithm uses $O(N)$ space and answers each query in $O(Nlog(N))$ time, since during the preprocessing for each tuple $t_1 \in R_1$ ($t_2 \in R_2$), we store one value in $L^b_1$ ($L^b_2$), for some value $b$. During the query time, we go over all values $b \in \adom(B)$, and for each value we do a binary search. \end{enumerate}
    \item ($\algranget$) We materialize all the results of the underlying CQ in a range tree index and then use the standard range tree query procedure to answer $\prob$ queries. % during the preprocessing time. When a query comes, we simply use the range tree to count the number of tuples satisfying the desired ranges. By the construction of the range tree, this algorithm uses $O(N^2log(N))$ space and answers each query in $O(log(N))$ time. 
    \item ($\alggist$) We materialize all the results of the underlying CQ in a table and then use the GIST spatial index from the PostGIS extension of PostgreSQL to answer $\prob$ queries. %similar to the $\algranget$, and However, this time we store all the results in a database table. We use the GIST spatial index from the PostGIS extension of PostgreSQL. This index is typically used for spatial range counting queries. As described in \ref, this index uses a version of R-tree to index the input data. %As later shown, this index usually works well on small ranges but performs worse on larger query ranges. \aryan{I don't know how to describe or cite this approach correctly. Please feel free to change}
\end{itemize}

 We implemented our algorithm for $2$-star queries described in Section \ref{sec:opt}, which almost matches the lower bounds and the more general algorithm described in Section \ref{sec:general}. These algorithms are denoted as $\algdec$ and $\algheavy$, respectively.
%\paragraph{Optimizations and Parameters}
For the implementation of $\algdec$, we added the following simple optimization. % to guarantee that it always answers the queries faster than $\algnjoin$. 
When a query comes up, we can quickly calculate the number of operations we need to make to answer the query. If this number is larger than the number of distinct values in the active domain of $B$, we use the algorithm $\algnjoin$ instead. Using this trick, we can quickly answer the queries in datasets with a rather small number of different values for the attribute $B$. Moreover, as described before, a tradeoff between memory usage and query time for both algorithms $\algdec$ and $\algheavy$ can be tuned as desired. We tuned the parameters for these algorithms in different ways, which will be elaborated later.

\paragraph{Datasets}
We use one synthetic and three real-world datasets: 

$\datasyn$: We generate 10,000 tuples for relation $R_1$ (resp. $R_2$) as follows: a tuple $t$ is produced by setting $\pi_{A}(t)$ (resp. $\pi_{C}(t)$) and $\pi_{B}(t)$ to be drawn uniformly at random from $[100000]$ and $[4500]$, respectively. The intuition is to ensure each table is not very small and the result of the underlying CQ is not too large simultaneously so we can materialize the result of the underlying CQ for the baselines $\algranget$ and $\alggist$. Moreover, to ensure the size of the active domain $\adom(B)$ is large enough, so we can truly compare the performance of $\algheavy$ and $\algnjoin$, since their performance depends on the size of the active domain of $B$.

$\datataxi$\cite{taxi}. The Yellow Taxi Trip records data from NYC Taxi and Limousine Commission data containing around $3$ million rides taken in January 2024 in New York City. We set the attribute $A$ as the drop-off time, attribute $B$ as the fare amount, and $C$ as the pick-up time. We set the relations $R_1$ and $R_2$ to contain the data projection on $A, B$ and $C, B$, respectively. 

$\dataflights$\cite{flights}. The Flights dataset contains the records of 56710 flights in 610 different airports. To fill the $R_1$ relations, we set the attributes $A$ as the arrival time and $B$ as the arriving airport and project the data on these attributes. We fill $R_2$ similarly by setting $C$ as the departure time and $B$ as the departing airport. %\aryan{Maybe we want to explain the real meaning of this query?}

$\datadblp$\cite{dblp}. A collaboration graph is constructed, where two authors are connected if they publish at least one paper. We set $R_1$ and $R_2$ to be the same undirected co-authorship graph, and hence all the attributes $A$, $B$, and $C$ are the authors' IDs. We use this dataset to evaluate the performance of all algorithms on graph data. 
\vspace{-1em}
%\aryan{Maybe explain the query counts the number of paths of length 2 in the graph?}
\begin{table}[h]
    \centering
    \resizebox{0.5\linewidth}{!}{
    \begin{tabular}{|c|c|c|c|c|c|c|c|}  
    \toprule
    {\bf Dataset} & {\bf \#Records} & {\bf $|\Q_\textsf{matrix}(\I)|$} & {\bf $|\adom(B)|$}\\ \hline
    Taxi 
    & 
        2964624
        &
        236862801792
        &
        8970
        \\ \hline
        DBLP
        & 
        1049866
        &
        7064738
        &
        304374
        \\ \hline
        Flights
        & 
        56710
        &
        48096866
        &
        610
        \\ \hline
        Synthetic
        & 
        100000
        &
        2321902
        &
        4500
        \\ \hline
        
    \end{tabular}
    }
    \caption{\label{table:datasets}An overview of the datasets.}
    \vspace{-2em}
\end{table}
\iffalse \begin{table}[h]
%\setlength{\tabcolsep}{1pt}
    \small
    \centering
    \resizebox{1.03\linewidth}{!}{
    \begin{tabular}{|c|c|c|c|c|c|c|c|}  
    \toprule
    {\bf Dataset} & {\bf $A_1$} & {\bf $B$} & {\bf $A_2$}  & {\bf \#Records} & {\bf $|\Q_\textsf{matrix}(\I)|$} & {\bf $|\adom(B)|$}\\ \hline
    Taxi 
    & 
    drop-off time
    &
    fare 
    & 
        pick-up time
        &
        2964624
        &
        236862801792
        &
        8970
        \\ \hline
        DBLP
        & 
        author ID
        & 
        author ID
        &
        author ID
        &
        1049866
        &
        7064738
        &
        304374
        \\ \hline
        Flights
        & 
        arrival time
        & 
        airport
        &
        departure time
        &
        56710
        &
        48096866
        &
        610
        \\ \hline
        Synthetic
        & 
        random
        & 
        random
        &
        random
        &
        100000
        &
        2321902
        &
        4500
        \\ \hline
        
    \end{tabular}
    }
    \caption{\label{table:datasets}An overview of the datasets.}
\end{table}\fi
%An overview of the dataset used can be seen in Table \ref{table:datasets}.

\subsection{Experimental Results}\label{sec:expresults}
As mentioned before, the tradeoff between the query time and the memory usage of our algorithms $\algdec$ and $\algheavy$ can be controlled by changing the parameter $T$ as shown in Theorems \ref{alg:star} and \ref{range}. This can be very useful in practical scenarios where the space is limited since none of the baseline algorithms allow this flexibility, and all of them will use a fixed amount of memory based on the database instance. For the first part of the experiments, we fixed the parameter $T$ to be close to $\sqrt{N}$ for both of our algorithms. By theorems \ref{alg:star} and \ref{range}, this results in an asymptotic query time of $\O(\sqrt{N})$ for both the algorithms while using $O(N)$ additional space for the $\algdec$ algorithm and $O(N^{3/2})$ space for the $\algheavy$ algorithm. Note that although $\algheavy$ may use more memory than $\algdec$, it can be used on a more generalized set of queries, as shown in Section \ref{sec:general}. For the $\datataxi$ dataset, we omit the results for algorithms $\algranget$ and $\alggist$, since they need to materialize all the results in $\Q_\textsf{matrix}(\I)$. The huge size of $\Q_\textsf{matrix}(\I)$, as shown in Table \ref{table:datasets}, makes it impossible to store all the results on our machine. Hence, it becomes obvious that the baselines $\algranget$ and $\alggist$ are not practical even in databases with a couple of million tuples.

\paragraph{Memory and Query time}
With these settings, the memory usage of algorithms over all the datasets is shown in Figure \ref{fig:memcomp}. To construct the query ranges $\rect$, from the sorted list of values in the active domain of $A$ (and $C$), we chose two numbers drawn uniformly at random to use as values $l_{A}$ and $r_{A}$ ($l_{C}$ and $r_{C}$) and set $\rect = [l_{A}, r_{A}] \times [l_{C}, r_{C}]$. We generate 100 different ranges and report the average query time to compare the query time performance shown in Figure \ref{fig:timecomp}. We observe that the memory usage of the two indexes $\algdec$ and $\algnjoin$ is very close to each other and significantly less than all the other baselines. This is as expected since these two indexes are the only ones that use space linear to the input size in theory. On the datasets $\datasyn$ and $\datadblp$, we can see that the memory usage of $\algheavy$ is large and is close to the $\algranget$, which has the worst memory consumption. However, for the datasets $\datataxi$ and $\dataflights$, $\algheavy$ shows a much better performance compared to both $\algranget$ and $\alggist$. This is because in the $\dataflights$ and $\datataxi$ datasets, a small number of different values in the active domain of $B$, contribute to the most number of result tuples in $\Q_\textsf{matrix}(\I)$. Intuitively, there are only a few crowded airports, and the other airports have significantly fewer flights passing them, or most taxi rides have a fare in a certain range, and there are very few rides that have a very low or very high fare. More precisely, in $\dataflights$, $4\%$ of different values in the active domain of $B$, contribute to more than $90\%$ of the results in $\Q_\textsf{matrix}(\I)$, and this percentage is $5\%$ for the $\datataxi$ dataset. However, for the $\datadblp$ and $\datasyn$, this percentage is $35\%$ and $80\%$, respectively. Generally, the datasets with a few heavy values in the active domain of $B$, fit better for $\algheavy$, since it can quickly separate them and take care of them at the query time. 
In Figure \ref{fig:timecomp}, we observe that the query time of $\algranget$, beats all the other indexes over all applicable datasets, yet this query performance is not for free since the memory consumption is huge. As we tuned the asymptotic query time complexity of $\algdec$ and $\algheavy$ both equal to $\sqrt{N}$, their performance is close to each other and much better than the baselines except for $\algranget$. It can be seen that $\algdec$ performs significantly better than $\algnjoin$, despite both having similar memory usage. As described above, $\algheavy$ performs better on its preferable datasets $\datataxi$ and $\dataflights$ while $\algdec$ performs better on the other datasets. Although $\alggist$ is one of the most commonly used indexes for spatial range counting, we observe that it answers the queries slower than all the other baselines over all the datasets. We note that this index is not designed to answer range counting over Conjunctive Queries. The underlying $R$-tree index implemented in this index makes it super fast to answer $\prob(\Q_\textsf{matrix}, \I,\rect)$ queries where $\rect$ has small volume or when it contains a small number of tuples, but it performs significantly worse in average without any restriction on $\rect$. 

\paragraph{Range Selectivity}
Let the \textit{selectivity} of range $[l, r]$ over an attribute $A$ and a relation $R$ be the number of tuples $t \in R$, such that $l \leq \pi_{A}(t) \leq r$. We use this definition to evaluate the effect of the query rectangle $\rect$ on the performance of the algorithms. For a fixed selectivity $s$, we randomly take a range $[l_{A}, r_{A}]$ ($[l_{C}, r_{C}]$) such that its selectivity over $A$ ($C$) and $R_1$ ($R_2$) is equal to $s$. Then we set $\rect = [l_{A}, r_{A}] \times [l_{C}, r_{C}]$ and ask the algorithms to compute $\prob(\Q_\textsf{matrix},\I,\rect)$. We repeat this process 100 times and report the average time in Figures \ref{fig:size-time-flights}, \ref{fig:size-time-syn}, \ref{fig:size-time-taxi}, and \ref{fig:size-time-dblp}. It can be seen that the selectivity of $\rect$ has minimal impact on the query performance of $\algheavy$ and $\algranget$, as expected from their construction. On the other hand, $\alggist$ performs significantly better when the selectivity of $\rect$ is smaller. For the smaller selectivities, $\algdec$ performs much better than $\algnjoin$ over all the datasets. For larger selectivities, the query time gets closer, but $\algnjoin$ never beats $\algdec$. In all datasets, $\algdec$ performs better than $\algheavy$ on the smaller selectivities, while $\algheavy$ performs better on larger selectivities. 

\paragraph{Tradeoff}
We can tune the query time of $\algdec$ and $\algheavy$, to meet the constraints on the available memory in practical use cases. The query time and memory usage are fixed for all the other baselines and depend solely on the input dataset. The tradeoff between the query time and memory usage for our algorithms is shown in Figures \ref{fig:taxi-trade}, \ref{fig:dblp-trade}, \ref{fig:flight-trade}, and \ref{fig:syn-trade}. It can be seen that for a fixed amount of memory, on average $\algdec$ outperforms $\algheavy$. However, note that the $\algheavy$ algorithm can be used on a more generalized set of conjunctive queries. Generally, $\algnjoin$ has a low memory consumption but suffers from a large query time. On the other extreme, $\algranget$ can answer the queries quickly, but it suffers from a large memory usage. Interestingly, we observe that by allowing a slight increase in memory usage, we can significantly improve the query performance over all the datasets by using $\algdec$. This improvement can be as large as two orders of magnitude for three of the datasets, as it is shown in Figures \ref{fig:dblp-trade}, \ref{fig:flight-trade}, and \ref{fig:syn-trade}.

\paragraph{Summary}
The main strength of $\algheavy$ and $\algdec$ is their ability to tune the query time based on the available memory. Indeed, when memory usage is out of concern, the best approach is to materialize all the results in $\Q_\textsf{matrix}(\I)$ in an index designed explicitly for range counting queries such as range trees. However, when there is a memory limit, the goal is to answer the counting queries as fast as possible. The experiments show that our indexes can achieve this goal by quickly answering $\prob$ queries while meeting the memory constraint. Neither of our two indexes dominates the other one, and the performance highly depends on the amount of available memory, the input dataset, and the selectivity of the queried ranges.  
Moreover, we showed that when the available space is nearly equal to the size of the dataset, our index $\algdec$ achieves significantly faster query time compared to the baseline $\algnjoin$, while using almost the same amount of memory.

\end{document}